%% file: main.tex
\documentclass[12pt]{article}

\usepackage[margin=1in]{geometry}
\usepackage{amsmath, amssymb}
\usepackage{empheq}
\usepackage{graphicx}
\usepackage{setspace}
\usepackage{microtype}
\usepackage{natbib}
\usepackage{array,booktabs,tabularx,ragged2e,caption,xcolor,multirow}
\usepackage{longtable,pdflscape}
\usepackage{colortbl}
\usepackage{arydshln}
\newcolumntype{L}{>{\RaggedRight\arraybackslash}X}
\newcolumntype{C}{>{\Centering\arraybackslash}X}
\newcolumntype{R}{>{\RaggedLeft\arraybackslash}X}
\usepackage{tikz}
\usepackage{amsmath,amsthm,amsfonts,amssymb,dsfont,bm,mathrsfs}
\usepackage{enumerate,color,xcolor}
\usepackage{graphicx}
\usepackage[bottom]{footmisc}
\usepackage[colorlinks,linkcolor=blue,citecolor=orange]{hyperref}
\usepackage{url}
\usepackage{placeins}

\numberwithin{equation}{section}
\theoremstyle{plain}

\newtheorem{theorem}{Theorem}
\newtheorem{corollary}{Corollary}

\newtheorem{lemma}{Lemma}

\newtheorem{assumption}{Assumption}

\newcommand{\ep}{\varepsilon}

\title{Residual Learning in Empirical Asset Pricing\thanks{
The code for this paper is available at \url{https://github.com/dexin-peng/ResnetAssetPricing}.
We gratefully acknowledge the computing resources supported by the Advanced Computing Devices (High Performance Computing platform Phase III) at the Hong Kong University of Science and Technology (Guangzhou).
}}

\author{
Dexin Peng\thanks{FinTech Thrust, Hong Kong University of Science and Technology (Guangzhou), email address: dpeng965@connect.hkust-gz.edu.cn.}
\and Xiaoyu Wang\thanks{FinTech Thrust, Hong Kong University of Science and Technology (Guangzhou), email address: xiaoyuwang@hkust-gz.edu.cn.}
}
\date{October 7, 2026}

\begin{document}
\pagenumbering{roman}
\pagestyle{empty}
\maketitle
\thispagestyle{empty}

\begin{abstract}
\setstretch{1}
Shallow models are special cases of deep models, and deep models theoretically have the potential to outperform the shallow ones.
However, the existing empirical asset pricing literature provides strong benchmarks for shallow models.
Residual learning allows neural network models in asset pricing to go deeper by preserving and refining their shallow counterparts.
The out-of-sample Sharpe ratio for value-weighted long-short portfolios of deep residual models (2.07) is higher than that for the corresponding shallow ones (1.92) and more than twice that of the deep feedforward models (0.89).
We show that model depth is a source of additional economic value in asset pricing.
Residual learning can be used to deepen other neural-network-based asset pricing models if they contain intermediate layers.
Our design also provides one way to scale asset pricing models, making native ``large asset pricing models'' more feasible.

\medskip
\noindent\textbf{Keywords:} Residual network; asset pricing; deep learning; return prediction.
\end{abstract}

\vfill
\clearpage

\tableofcontents
\clearpage

\pagenumbering{arabic}
\pagestyle{plain}
\section{Introduction}
\label{sec:introduction}

In equity markets, neural networks are prominent tools for return forecasting, but the models are typically relatively shallow.
\citet{gu2020EmpiricalAssetPricing} show that, in their setting, the performance of neural network models peaks at relatively shallow depths, and simply adding more feedforward layers does not deliver correspondingly greater economic value.
They propose that, ``shallow'' learning outperforms ``deep'' learning.

However, if a shallow neural network model already performs well, adding an identity mapping $f(x) = x$ to its intermediate layer(s) extends its depth, but the performance remains the same.
Besides the identity mapping $f(x) = x$, there are functions and specifications of parameters that can both extend depth and improve model performance at the same time.
Theoretically, shallow models are subsets of deep models, and deep models have the potential to outperform the shallow ones.
The potential should apply to the asset pricing field.
We propose that,
\begin{center}
\textit{``Shallow'' is a special case of ``deep''.}
\end{center}

Recent studies examine deeper architectures in related asset pricing problems.
\citet{zhang2022AssetPricingDeep} compares feedforward and residual architectures for stock-return forecasting and reports only modest incremental gains; \citet{lo2023DeeplearningModelsForecasting} use neural networks with skip connections for risk-premium forecasting and \citet{kelly2026LargeDeepFactor} study how depth and width affect stochastic discount factors based on deep neural networks.
But none of them study residual models with varying depths.
Our study explores the potential of additional depth by isolating the effect of residual learning and examining performance as depth gradually increases.
We also examine whether residual learning allows additional depth to preserve a profitable shallow ranking and improve its out-of-sample economic value in an asset-pricing-tailored model architecture.

Following the residual learning convention in \citet{he2016deep,he2016identity}, we add residual blocks with shortcut paths to feedforward asset pricing models.
A residual block is a combination of a shortcut path and a residual branch, in which the shortcut path keeps an earlier representation intact through an identity mapping and the residual branch learns a residual update.
We find that successive layers with residual learning can make the ranking of stocks in the cross-section more accurate, thereby achieving better out-of-sample portfolio performance, and the same design can be used in other neural-network-based asset pricing models if they contain intermediate layers.

We also develop a more compatible feedforward model to adapt residual learning to asset pricing.
This model is tailored to financial data with a low signal-to-noise ratio and to the training protocol of asset pricing neural network models.
We compare machine-learning models and the corresponding training protocols across several areas of empirical asset pricing, including stock-return forecasting \citep{gu2020EmpiricalAssetPricing,gu2021AutoencoderAssetPricinga,jiang2023ReImaginingPriceTrends,chen2024DeepLearningAsset}, option-return forecasting \citep{bali2023optionml} and corporate-bond return forecasting \citep{bell2024bondml}; finally, we develop a more compatible feedforward model NN+ (and its corresponding residual learning version) in asset pricing.\footnotemark

\footnotetext{See \ref{sec:resnetp} for the model definitions.}

Two comparison sets are developed to evaluate the residual-learning framework across varying depths.
The first is the raw benchmark comparison set, which compares a feedforward neural network model (NN) with the same model directly equipped with residual learning (ResNet).
The second is the main comparison set, which compares the asset-pricing-tailored feedforward neural network model (NN+) with the same model directly equipped with residual learning (ResNet+).
In each comparison set, we ensure that the models are evaluated under the same conditions such as data, preprocessing, training protocol, and portfolio evaluation methods to isolate the effect of residual learning.
Most of the conditions, as well as the NN model in the raw benchmark comparison set, are the same as those in \citet{gu2020EmpiricalAssetPricing}.

In the deep group, the out-of-sample Sharpe ratios for value-weighted long-short portfolios are 2.07 for ResNet+ and 0.89 for NN+, while in the shallow group, the corresponding Sharpe ratios are 1.92 for ResNet+ and 1.84 for NN+.
The change from shallow to deep indicates that the additional depth through residual learning generates additional economic value; similar trends are found in the raw benchmark comparison set as well.
To find out how the additional depth through residual learning generates additional economic value, we follow the portfolio-sort tradition in empirical asset pricing \citep{FamaFrench2008Dissecting,HouXueZhang2015} by examining realized returns across all forecast deciles.
Beyond decile sorting, we examine every pair of stocks in the cross-section to further explore how deep residual neural networks interact with their shallow counterparts.
The results show that deep residual models are capable of preserving the ranking already made by the corresponding shallow ones and refining the ranking through accurate revisions as depth increases, thereby achieving a more accurate ranking of stocks in the cross-section, which is then translated into better out-of-sample portfolio performance.
The ResNet+ versus NN+ gap stays large and distinct after excluding extremely large- or small-cap stocks, survives transaction costs, and persists across most of the NBER business cycles.

The residual-learning framework introduced in \citet{he2016deep} allows for training models with substantially more parameters.
Asset pricing models exhibit a trend toward including more parameters.
The CAPM presented in \citet{sharpe1964CapitalAssetPrices} is a starting point.
If we view the formula used in the OLS estimation of the CAPM as a linear machine-learning model, its empirical regression estimates two parameters: an intercept $\alpha$ and a market loading $\beta$.
Adding factors increases the number of parameters in the linear model, as in the models discussed by \citet{FamaFrench1993,harvey2016CrossSectionExpectedReturns}.
Recent uses of large models, or attempts to use them, in financial studies typically involve language models.
Several studies use large language models to extract return-predictive signals from financial text \citep{chen2022ExpectedReturnsLarge,chen2025ChatGPTDeepseekCan,lopez-lira2023CanChatGPTForecast}, but they treat the models as given tools developed in the machine learning field.
Some studies train large language models natively on financial text, such as BloombergGPT \citep{wu2023BloombergGPTLargeLanguage}, or fine-tune existing language models for financial applications, such as FinGPT \citep{yang2025FinGPTOpenSourceFinancial}; they generally follow the approach ``information $\xrightarrow{\text{LLM}}$ structured data $\rightarrow$ return forecasts''.
Our study is compatible with these existing approaches because we focus on the path ``structured data $\xrightarrow{\text{LAPM}}$ return forecasts'' by training a model directly on structured panel data to output return forecasts.
In one study, \citet{kelly2026LargeDeepFactor} develop native large asset pricing models, but these models use relatively shallow networks and do not directly output individual stock return forecasts.
Our study shows the possibility of scaling up native large asset pricing models by using residual learning to deepen the model architecture and improve out-of-sample performance.

\section{Framework and Model Design}
\label{sec:framework}
We use information at time $t$ to forecast the one-month-ahead excess return of stock $i$ according to
\begin{equation}
\label{eqn:assetpricing:basic}
\widehat{R}_{i,t+1}=f_{\theta_t}(X_{i,t}),
\qquad
\varepsilon_{i,t+1}:=R_{i,t+1}-\widehat{R}_{i,t+1},
\qquad i = 1,\ldots,N_t, \ t = 1,\ldots,T,
\end{equation}
where $X_{i,t}$ is the information set for stock $i$ at time $t$, and $f_{\theta_t}$ is the forecasting function with parameter set $\theta_t$.
Both the information set $X_{i,t}$ and the parameter set $\theta_t$ are time dependent, and the forecasting function $f_{\theta_t}$ can take different forms, representing different model architectures.
The information set $X_{i,t}$ includes currently available firm characteristics, macroeconomic predictors, and other predictive signals observed before and at time $t$.
The realized return $R_{i,t+1}$ is the target, and $\varepsilon_{i,t+1}$ is the forecast error, measured as the difference between the realized return $R_{i,t+1}$ and the return forecast $\widehat{R}_{i,t+1}$.
We estimate the parameter set $\theta_t$ for the model $f$ at time $t$ by minimizing the forecast mean squared error (MSE),
\begin{equation}
\label{eq:loss_fn}
\theta_t^* = \arg\min_{\theta} \mathbb{E}_t\left\Vert R_{t+1} - \widehat{R}_{\theta,t+1} \right\Vert^2.
\end{equation}
The parameter set $\theta_t$ for the model $f$ is optimized across time and firms, and is held fixed when applied to out-of-sample observations $X_{i,\tau}$ to forecast the corresponding future returns $R_{i,\tau+1}$.

We modify the forecasting function $f_{\theta_t}$ by introducing different models with or without residual learning here, and hold all other aspects of Equation \eqref{eqn:assetpricing:basic}, as well as the training protocol, the same.
Existing forecasting functions range from linear models to nonlinear methods such as random forests and neural networks, but we confine the analysis to neural-network-based asset pricing models because the residual neural network architecture is the focus of this study.

\subsection{From Feedforward Networks to Residual Neural Networks}

Prior evidence from \citet{gu2020EmpiricalAssetPricing} shows that nonlinear methods are capable of improving out-of-sample portfolio performance, but the performance of their neural network models peaks at a relatively shallow depth.

A feedforward neural network (NN) stacks hidden layers in sequence; more layers make the model more complex but do not guarantee better performance.
\citet{he2016deep} provide a residual block design in which shortcut paths carry the current hidden representation forward and residual branches learn residual updates to the hidden representation.
For return forecasting, the shortcut path carries the current (useful) pricing signal forward, while the residual branch searches for residual updates to that pricing signal.

\begin{figure}[htpb]
\centering
\includegraphics[width=0.75\textwidth]{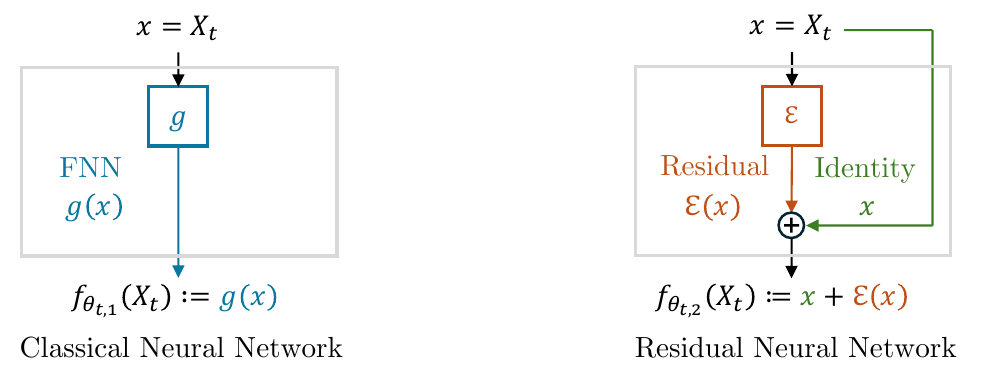}
\caption{\textbf{From Feedforward Networks to Residual Neural Networks.}
\newline\footnotesize
The left panel shows a feedforward block, where FNN denotes a feedforward neural network.
The right panel shows a residual block.
A feedforward block performs a simple function mapping, whereas a residual block keeps the input $x$ intact through a shortcut path and adds a learned residual update $\mathcal{E}(x)$ from the residual branch.}
\label{fig:resnet}
\end{figure}

Figure~\ref{fig:resnet} highlights the block-level architectural difference between feedforward and residual networks \citep{he2016deep,he2016identity}.
In a feedforward block, the learned transformation becomes the next representation.
A residual block supplies a second route, in which the shortcut path carries the input and the residual branch contributes a residual update.
A deep residual neural network contains multiple residual blocks (or their variants) together with other blocks, such as feedforward transition blocks.

Suppressing stock and time subscripts, let the initial map $\mathcal I(\cdot)$ convert the predictors into a hidden representation that is further processed according to
\begin{equation}
\label{eq:recursion}
h_0=\mathcal I(X),
\qquad
h_d = h_{d-1} + g_d(h_{d-1}),
\qquad d=1,\ldots,D.
\end{equation}

Here, $h_d$ is the hidden representation after residual block $d$, $D$ is the number of residual blocks in this idealized recursion, and $g_d(\cdot)$ is the residual update that modifies that output while the identity mapping keeps $h_{d-1}$ intact.
For any shallower specification in this idealized recursion with depth $d_0<D$, the deeper function class can reproduce it by setting $g_d(\cdot)\equiv 0$ for all $d>d_0$, so the shallow specification is a subset of the deeper specifications, and is nested inside the idealized deeper function class with residual learning.
``Shallow'' is a special case of ``deep'', and deep models have the potential to outperform the shallow ones.

A feedforward neural network in \citet{gu2020EmpiricalAssetPricing} uses the corresponding recursion $h_d:=g_d\left(h_{d-1}\right), d = 1,\ldots,D,$ so each feedforward block applies a new nonlinear mapping to the previous hidden representation, while Equation~\eqref{eq:recursion} carries the previous hidden representation $h_{d-1}$ directly forward and adds a residual update, allowing the residual branch to focus on the residual of the previous representation instead of entirely reevaluating $h_{d-1}$.
We build an empirical sequence that begins with a shallow model, then add layers under residual and feedforward rules, and examine whether these models retain, deteriorate, or improve upon the out-of-sample performance of the shallow model.

\subsection{Asset Pricing Residual Neural Network Specification}
\label{sec:resnetp}

\begin{figure}[htpb]
\centering
\includegraphics[width=0.9\textwidth]{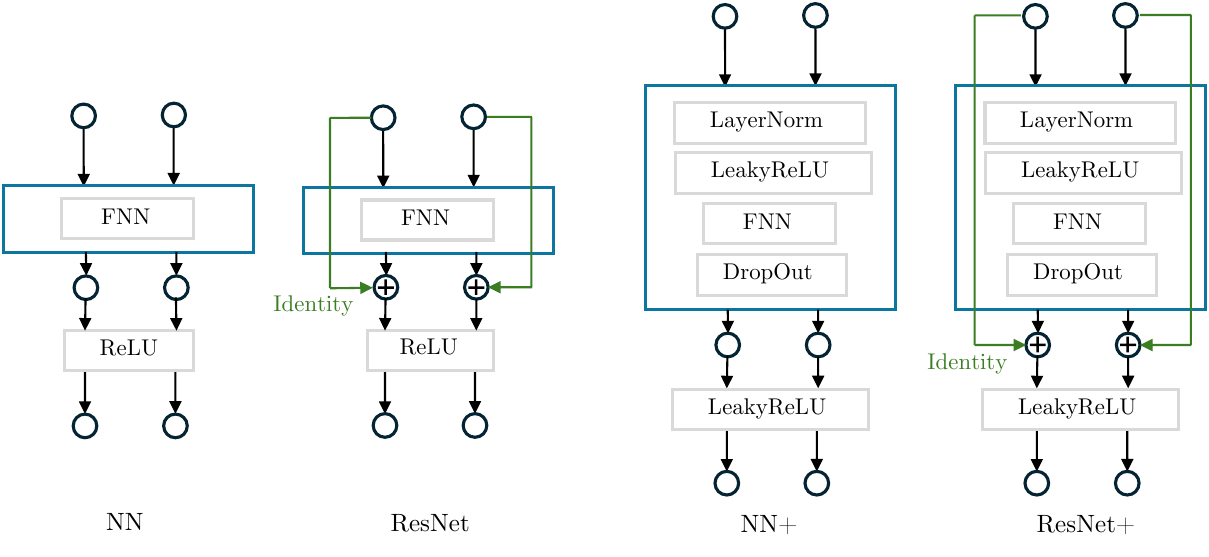}
\caption{\textbf{Simplified Building Blocks.}
\newline\footnotesize
Circles represent hidden states; arrows represent sequences of operations; text boxes, e.g., FNN and BatchNorm, represent functions.
The drawings show the core building blocks of the four models when the input and output dimensions of the building block are the same.
For projections and technical formulations, see Appendix~\ref{appendix:resnetp_specification}.
}
\label{fig:archsimplespec}
\end{figure}

Generally speaking, an identity shortcut requires the $h_{d-1}$ and $h_d$ representations to have the same hidden width, but the hidden width can change across layers in a neural network.
For example, the hidden width can be 32 in one layer and 16 in the next layer, but we cannot use an identity shortcut path between representations with different hidden widths because the integer 16 is not equal to 32.

However, our ResNet+ allows the hidden width to change while still being able to maintain the residual learning structure by using projection paths.
To better unlock the potential of the residual learning as well as the neural network approach in asset pricing, we tailor\footnotemark \ ResNet+ to the nature of financial data and asset pricing tasks; ResNet+ preserves the ordinary feedforward transition when a new hidden width is first introduced and applies residual learning to the representation when additional depth is allocated within that stage.
\footnotetext{
The implementation uses layer normalization \citep{ba2016layer}, LeakyReLU, dropout \citep{Srivastava2014}, and pre-activation residual branches.
Given the use of cross-sectional rank normalization in preprocessing, layer normalization is used in the model architecture to reduce the effects of scale differences across time periods; LeakyReLU is used to accommodate the preceding rank normalization step; and dropout is used to alleviate overfitting.
To better follow residual network practice, we use the pre-activation ordering in \citet{he2016identity}, where normalization and activation occur before the affine transformation in each branch.
}

To isolate the sole effect of residual learning from the combined effect of the tailored architecture and residual learning, we remove the residual-learning components from ResNet+ and replace them with tailored feedforward blocks to form its counterpart NN+, which is identical in overall structure to ResNet+ except that it replaces residual blocks with feedforward blocks.

\section{Empirical Study}
\label{sec:empirical}

Our empirical design follows \citet{gu2020EmpiricalAssetPricing} in evaluating out-of-sample stock return forecasts and uses the U.S. stock-level panel data and screens of \citet{jensen2023ThereReplicationCrisisa}.
Within this setting, comparisons between residual and feedforward model families at the same width and depth specifications test whether residual learning improves future return forecasting by evaluating the corresponding out-of-sample portfolio performance.

\subsection{Data and Preprocessing}
The stock-level panel dataset is constructed using data from the Center for Research in Security Prices (CRSP) and annual accounting data from Compustat, following the predefined rules of \citet{jensen2023ThereReplicationCrisisa}\footnotemark.
\footnotetext{The screening rules are applied to remove duplicate records, non-standard securities, and secondary share classes to ensure a clean, reliable equity panel with one observation per security per month for ordinary common shares traded on primary exchanges and make sure they are uniquely identified across data sources.
Details of the screening rules and the data acquisition process are available in the online ``ReplicationCrisis'' repository maintained by bkelly-lab on GitHub.
See \url{https://github.com/bkelly-lab/ReplicationCrisis}.}
The dataset spans January 1963 through December 2023, totaling 61 years.
It contains eligible U.S. common stocks after the Jensen, Kelly, and Pedersen security screens are applied.
The beginning year is selected to match the starting year of the online dataset for the Fama--French five-factor model in \citet{fama2015FivefactorAssetPricing}\footnotemark.
The resulting sample contains approximately 5,000 stocks per month and 3.57 million stock-month observations in total.

\footnotetext{
The online dataset is available on Kenneth R. French's website.
See \url{https://mba.tuck.dartmouth.edu/pages/faculty/ken.french/data_library.html}.}

Predictors combine $153$ firm characteristics from the stock-level panel data with $8$ macroeconomic predictors from \citet{welch2008ComprehensiveLookEmpirical}\footnotemark.

\footnotetext{The eight macroeconomic predictors are the dividend-price ratio (\texttt{dp}), earnings-price ratio (\texttt{ep}), book-to-market ratio (\texttt{bm}), net equity expansion (\texttt{ntis}), Treasury-bill rate (\texttt{tbl}), term spread (\texttt{tms}), default spread (\texttt{dfy}), and stock variance (\texttt{svar}).
The dataset and detailed variable descriptions are available on Amit Goyal's website.
See \url{https://sites.google.com/view/agoyal145}.}

All predictors are lagged by one month, so that information at month $t$ is used to forecast future returns in month $t+1$.
For example, at the end of September 2023, for a randomly selected stock, the predictor ``total accruals'' is the most recent publicly available value of total accruals on or before September 30, 2023.
The same timing applies to all other $152+8$ predictors, and the response variable is the realized excess return in October 2023, which is the stock return from September 30, 2023, to October 31, 2023, adjusted by the risk-free rate.

For data preprocessing, in each month, we first fill missing firm characteristic values with their cross-sectional medians and then apply cross-sectional rank normalization.
We apply rank normalization to the firm characteristics; the macroeconomic predictors are not rank-normalized because every stock shares the same value at the same time.
If a firm characteristic is missing for every stock in a month, we fill it with zero because zero represents the cross-sectional median after rank normalization.
Therefore, filling firm characteristics that are entirely missing within a cross-section with zero is consistent with the intuition of median imputation.

\begin{figure}
    \centering
    \includegraphics[width=1\textwidth]{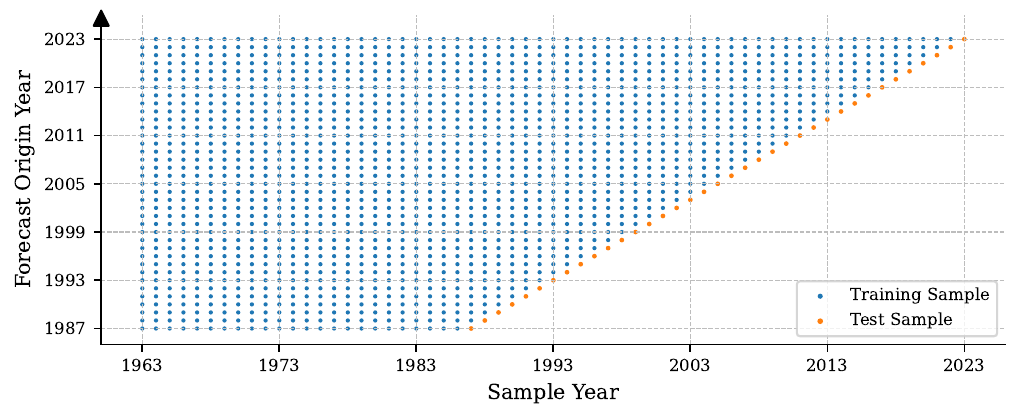}
    \caption{\textbf{Expanding Training Window.}
    \newline
    Each row represents the information observable in one forecast origin year.
    Time flows along the y-axis, from bottom to top, and the x-axis represents the sample years observable at the current forecast origin year.
    The blue points in each row indicate the data in the training sample for that forecast origin year; the orange point indicates the corresponding out-of-sample test sample.
    The model parameters are re-estimated for the latest forecast origin year to adapt to the latest market conditions.
    Forecast origins from January through December are used to forecast returns from February through the following January.
    }
    \label{fig:ExpandingTrainTestSplitStrategy}
\end{figure}

With expanding training windows, the model parameters are optimized using an initial training sample containing predictor observations from January 1963 through December 1986 and their one-month-ahead returns.
The last response observation in the initial training sample is the realized return in January 1987.
After the January 1987 returns are observed, model training is completed, and the model uses forecast origin observations from January through December 1987 to forecast out-of-sample future returns from February 1987 through January 1988.
Subsequent annual refits follow this expanding approach, yielding 443 forecast origins from January 1987 through November 2023 and realized returns from February 1987 through December 2023.
The expanding training window is designed to avoid forward-looking bias while preserving the mapping $X_t \rightarrow R_{t+1}$ throughout model training and out-of-sample forecasting.

Every model in our study shares the same source data, preprocessing methods, expanding training window rule, optimization objective, and evaluation measures, to isolate the effects of model architecture and depth.

\subsection{Experimental Design}

Our experimental design studies depth by comparing out-of-sample results from residual and feedforward model families within prespecified width and depth specifications.
At each width and depth specification, the two model families in each comparison set share every implementation detail except for the model architecture, with the only difference being whether the residual learning is applied.
The design is intended to reveal out-of-sample performance as model depth (and the model width) gradually increases, while avoiding ex post model selection, which is important given multiple testing concerns \citep{harvey2016CrossSectionExpectedReturns}.
The shallowest specification of one width schedule is used to anchor the out-of-sample performance for the specific width schedule of the model of interest, and then we track the performance as model depth gradually increases across a range of width schedules.

\input{ExperimentalDesignGrid.tex}

Table~\ref{tab:ExperimentalDesignGrid} describes the two comparison sets used throughout the empirical analysis.
ResNet versus NN anchors the analysis to the feedforward neural network benchmark in \citet{gu2020EmpiricalAssetPricing}, which documents limited gains from simply adding feedforward layers, but this comparison set uses the data from \citet{jensen2023ThereReplicationCrisisa}.
This comparison set shows how the residual learning changes the out-of-sample performance of NN as depth (and width) gradually increases.
ResNet+ versus NN+ is the main comparison set: it asks whether the residual learning provides an out-of-sample performance advantage when the feedforward model is tailored to the nature of financial data and is more compatible with the asset pricing task.
Both comparison sets yield evidence on the effect of residual learning on out-of-sample performance, and comparing corresponding width and depth specifications across them shows how NN+ performs relative to NN.

Each width schedule $s$ in both comparison sets starts with the minimum depth under that schedule.
In each comparison set, the residual model is the same as the feedforward model at the minimum depth of each schedule, and the added layers then make their residual and feedforward update rules diverge.
For a given schedule, the models start from an identical forecast and differ only in how later blocks update that anchor.
The deeper the model is, the larger the difference between the residual and feedforward model architectures becomes, and the depth counts hidden layers from 1 through 20.
Each model family has 74 planned width and depth specifications: 14 shallow (1--5), 20 medium (6--10), and 40 deep (11--20).
Equivalently, each comparison set contains 74 planned width and depth specifications.
Completed specifications receive equal\footnotemark \ weight within their group, giving each width and depth specification equal influence.

\footnotetext{Equal weighting is applied to the specifications within each group to ensure fairness when aggregating across different width and depth specifications.
It does not imply equal-weighted portfolio formation.}

In each width and depth specification, every model is trained in-sample and evaluated out-of-sample to forecast one-month-ahead future excess returns from the same stock panel.
Specifically, in every forecast origin year, models are trained\footnotemark \ to minimize mean squared error at the individual stock level at every point in time in the given training sample; $\theta_t^*$ in Equation~\eqref{eq:loss_fn} is then updated to its latest fitted value.
We then use the currently optimal parameter set $\theta_t^*$, together with the out-of-sample test predictors, in the model $f$ specified in Equation~\eqref{eqn:assetpricing:basic} to forecast out-of-sample one-month-ahead future returns.
\footnotetext{Annual January refits use expanding training windows and warm starts based on the preceding annual fit parameters.
Training is performed for at most 100 epochs, with batches of 10,000 observations.
Early stopping monitors training loss over the most recent five years of the training sample and stops after five consecutive epochs showing no improvement.
NN and ResNet share the same Adam optimizer and the same $\ell_1$ optimization parameters; NN+ and ResNet+ share the same AdamW optimizer and the same $\ell_2$ optimization parameters.
Appendix~\ref{appendix:normalization_implementation} gives the full protocol.}
The ten-member ensemble means that, for every model specification defined by width and depth, training is repeated ten times with ten different random seeds; the out-of-sample forecasts from the corresponding ten random-seed-based models are averaged to produce a single out-of-sample forecast before portfolio formation to prevent the results from being driven by a lucky seed.
Finally, the out-of-sample forecasts for the model at the specified width and depth are obtained.

On the last day of each month, eligible stocks are sorted into deciles by their out-of-sample forecasts.
The analysis mainly uses value-weighted decile portfolios, with stock weights proportional to market capitalization.
Using the magnitude of the forecast itself as a performance measure is statistically and economically meaningful, but when it comes to forming a portfolio, using the ranking of the stocks is a more direct way.
Because the ranking of decile assignments does not depend on the distances between forecast values, and ranking reduces the sensitivity to raw forecast scale and outliers, some studies \citep{NguyenLo2012,gu2020EmpiricalAssetPricing,freyberger2020DissectingCharacteristicsNonparametrically} use rankings in empirical asset pricing.

The out-of-sample Sharpe ratio for value-weighted long-short portfolios is our main economic statistic.
Full decile realized returns serve as a proxy for the overall ranking accuracy of the model.
Out-of-sample $R^2$ is a statistical metric of forecasts \citep{RapachStraussZhou2010,gu2020EmpiricalAssetPricing}; our version calibrates out-of-sample $R^2$ to better handle the scale and extreme value issues.\footnotemark \
\footnotetext{
When evaluating the out-of-sample $R^2$, relative to \citet{gu2020EmpiricalAssetPricing}, we remove observations below the 2.5th percentile or above the 97.5th percentile of each month's forecast distribution and retain the central approximately 95\%.
We then center and rescale the retained forecasts as $\widehat R^{c}_{i,t+1}=\widehat\kappa(\widehat R_{i,t+1}-\overline{\widehat R}_{t})$, where $\overline{\widehat R}_{t}$ is the mean of the retained forecasts in month $t$ and $\widehat\kappa\geq0$ is one scale fitted by least squares over all retained out-of-sample stock-month observations for each specification.
We compute $R^2_{\mathrm{OOS}}=1-\frac{\sum(R_{i,t+1}-\widehat R^{c}_{i,t+1})^2}{\sum R_{i,t+1}^2}$ over the retained observations.
For example, multiplying every forecast $\widehat R_{i,t+1}$ by 100 leaves the rank-sorted portfolio unchanged but substantially worsens the unadjusted $R^2_{\mathrm{OOS}}$ because the raw forecast scale is distorted quadratically.
Our $R^2_{\mathrm{OOS}}$ absorbs this common change in scale, leaving $R^2_{\mathrm{OOS}}$ unchanged.
Similarly, multiplying $\widehat R_{i,t+1}$ in both tails by 100 while preserving their ranking leaves the rank-sorted portfolio unchanged but also substantially worsens the unadjusted $R^2_{\mathrm{OOS}}$ because extreme forecasts affect $R^2_{\mathrm{OOS}}$.
Since these tail observations are excluded by our trimming rule, the calibrated statistic is also unchanged.
See Appendix~\ref{appendix:forecast_portfolio_measures}.}
The Fama--French five-factor alpha measures the long-short return not explained by standard factor exposures \citep{fama2015FivefactorAssetPricing}.
Turnover records the trading required to rebalance the portfolio.
Market capitalization screens examine whether portfolio results depend on stocks with extreme market capitalizations or not.
Transaction cost assessments measure whether the portfolio results and conclusions still hold when transaction costs are taken into account.
Market state assessments show whether the portfolio results and conclusions are sensitive to the overall market conditions.
These metrics measure forecast accuracy, cross-sectional ranking, risk-adjusted portfolio performance, and the robustness of the results.

\subsection{Long-Short Portfolio Performance}
\label{sec:portfolio_performance}

During the out-of-sample period, each month, we form a value-weighted long-short portfolio by buying stocks in the highest forecast decile and short-selling stocks in the lowest forecast decile.
In the next month, we rebalance by the latest forecast.
We track the realized returns of the portfolio every month, forming a time series of out-of-sample long-short portfolio returns.
All performance metrics are calculated using the value-weighted (VW) approach, except for the equal-weighted (EW) Sharpe ratio in Table~\ref{tab:FinalBaselinePerformance} and the equal-weighted portfolio analysis in Appendix~\ref{appendix:equal_weighted_portfolios}.

\input{FinalBaselinePerformance.tex}

Table~\ref{tab:FinalBaselinePerformance} reports the performance of each model family across shallow, medium, and deep specifications, and out-of-sample portfolios formed from every model family are profitable.
The reported means and volatilities are based on the monthly long-short returns averaged within each model family and depth group.
To express Sharpe ratios on an annual basis, we estimate the annualized Sharpe ratio by scaling the monthly Sharpe ratio by the square root of 12.
We then calculate the Sharpe ratio for each width and depth specification and average these ratios within each model family and depth group to produce the reported Sharpe ratios.\footnotemark

\footnotetext{See Appendix~\ref{appendix:forecast_portfolio_measures} for formulations.}

ResNet and NN remain close across depths, but ResNet outperforms NN in every depth group.
ResNet's mean monthly long-short return and volatility change little across depths, with the corresponding value-weighted Sharpe ratio moving from 0.95 to 0.93, while the mean monthly long-short return for NN falls from 1.86 to 1.65 percent and its Sharpe ratio falls from 0.94 to 0.82 across those groups.
The out-of-sample performance of ResNet remains nearly the same across all three groups, while NN shows a monotonic decline in these performance metrics, indicating that directly adding the residual learning provides protection against deterioration with depth.

The mean monthly long-short return and volatility of ResNet+ also change little across depths, but as depth increases, the value-weighted Sharpe ratio increases monotonically and gradually from 1.92 in the shallow group to 1.94 in the medium group, reaching 2.07 in the deep group.
NN+ begins with a value-weighted Sharpe ratio of 1.84 and then falls to 1.47 and 0.89 as the specifications move from shallow through medium to deep.
With the tailored architecture, the residual learning provides an improvement in out-of-sample performance as network depth increases, while the performance of NN+ deteriorates as depth increases.

Regressions on the five Fama--French factors test whether the long-short portfolio has a significant alpha, with the market, size, value, profitability, and investment factors as explanatory variables and inference based on 12 Newey\mbox{-}West lags \citep{fama2015FivefactorAssetPricing,NeweyWest1987}.
The $R^2_{OOS}$ statistic shows whether the forecasts are close to the realized returns at the stock level after trimming and scaling.

The alpha, $R^2_{OOS}$, and equal-weighted portfolio results show patterns similar to the value-weighted Sharpe ratio, indicating that, as depth increases, the residual learning in tailored asset pricing neural network architectures is associated with more accurate forecasts, higher out-of-sample risk-adjusted returns, and return patterns that are harder for the common factor model to explain.

\begin{figure}[!t]
    \centering
    \includegraphics[width=\textwidth]{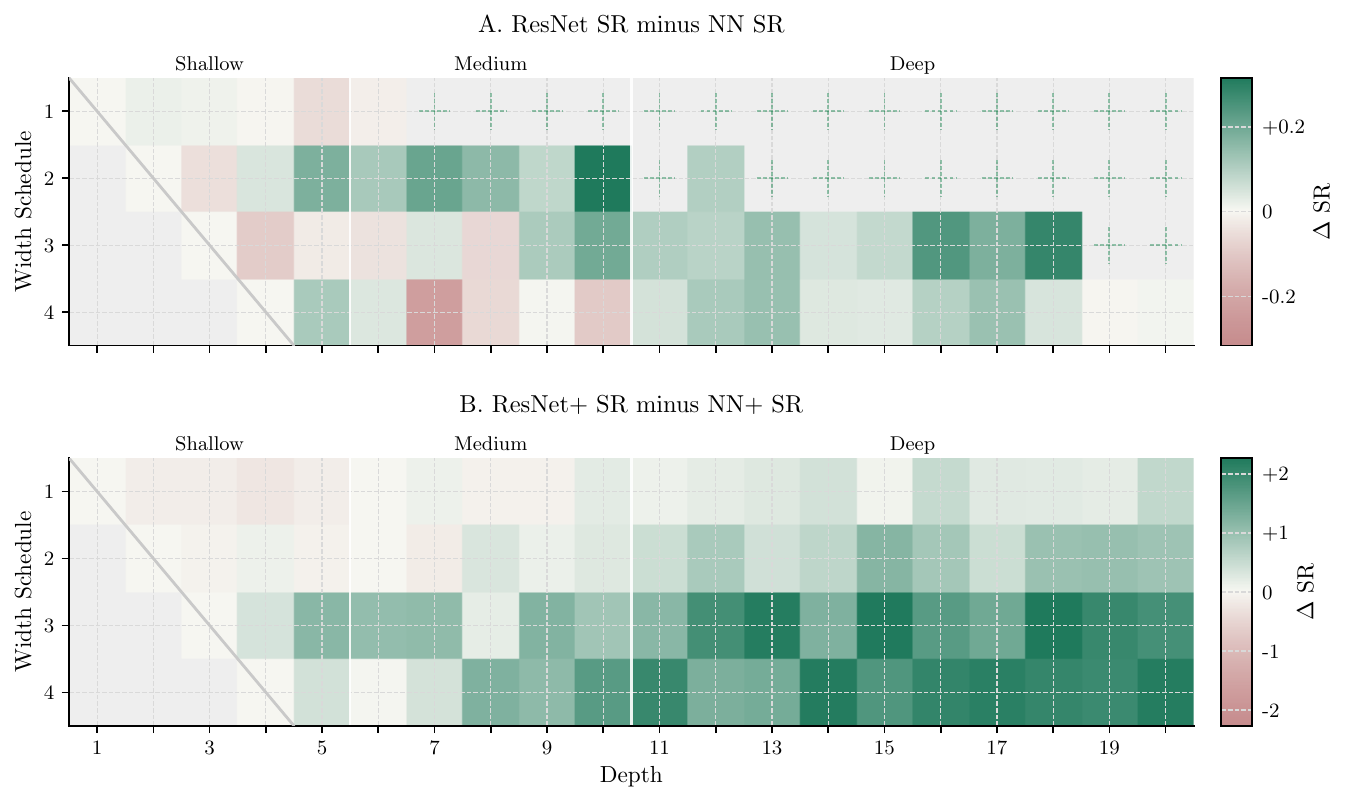}
    \caption{\textbf{Sharpe Ratio Differences.}
    Each grid entry represents a width and depth specification within one comparison set.
    Each entry reports the out-of-sample Sharpe ratio difference between the residual model and the feedforward model for value-weighted long-short portfolios.
    Green entries favor the residual model and red entries favor the feedforward model.
    Dashed green crosses indicate specifications in which the ResNet forecast remains valid while the NN forecast collapses.
    The diagonal line on the left shows the minimal depth for each width schedule.}
    \label{fig:SpecificationDeltaSRHeatmaps}
\end{figure}

To unpack the aggregated width and depth results, we examine differences in the out-of-sample Sharpe ratio for every width and depth specification within each comparison set in value-weighted long-short portfolios.
Figure~\ref{fig:SpecificationDeltaSRHeatmaps} shows that ResNet+ has a higher SR than NN+ in 4 of 14 shallow specifications; the count rises to 17 of 20 medium specifications and all 40 deep specifications.
ResNet has a higher SR than NN in 5 of 14 regular shallow specifications, 10 of 16 regular medium specifications, and 18 of 19 regular deep specifications, showing a trend similar to that of ResNet+.
The NN model is not shown in the figure at some width and depth specifications because it collapses in these specifications and yields a constant forecast; thus, it cannot form a valid portfolio.\footnotemark
\footnotetext{Appendix~\ref{appendix:s1d7_trainability_diagnostics} reports the diagnostic for one randomly selected case.}
When collapse occurs, the constant-forecast NN specification is excluded from the NN model family summary in Table~\ref{tab:FinalBaselinePerformance}, and the corresponding $\Delta$ SR is unavailable in Figure~\ref{fig:SpecificationDeltaSRHeatmaps}.
The heatmap marks that location with a dashed green cross.
Across the grid, residual models are more likely to achieve higher portfolio Sharpe ratios than their feedforward counterparts at medium and deep depths.
In the ResNet versus NN comparison set, the residual learning protects the model from collapse in deeper specifications of width schedules 1, 2 and 3, while the NN model collapses in these specifications.

\subsection{Forecast Ranking across Long-Only Decile Portfolios}
\label{sec:forecast_ranking_slopes}

Long-only value-weighted portfolios across all ten forecast deciles show how realized performance varies over the full ranking.
Each month, instead of forming a long-short portfolio as in Section~\ref{sec:portfolio_performance}, we take long positions in the stocks in each decile to form long-only value-weighted portfolios and track the monthly time series of their realized returns.
Then, we calculate their mean monthly realized returns, volatilities, and annualized Sharpe ratios over the out-of-sample period from the single forecast panel obtained after averaging the ten ensemble members.
Figure~\ref{fig:DecileMonotonicitySelected} averages the results across completed specifications shared by both model families in each comparison set. Table~\ref{tab:FullDecileMonotonicity} reports depth-group statistics for each model family's nonconstant-forecast specifications.

\begin{figure}[!htbp]
    \centering
    \includegraphics[width=\textwidth]{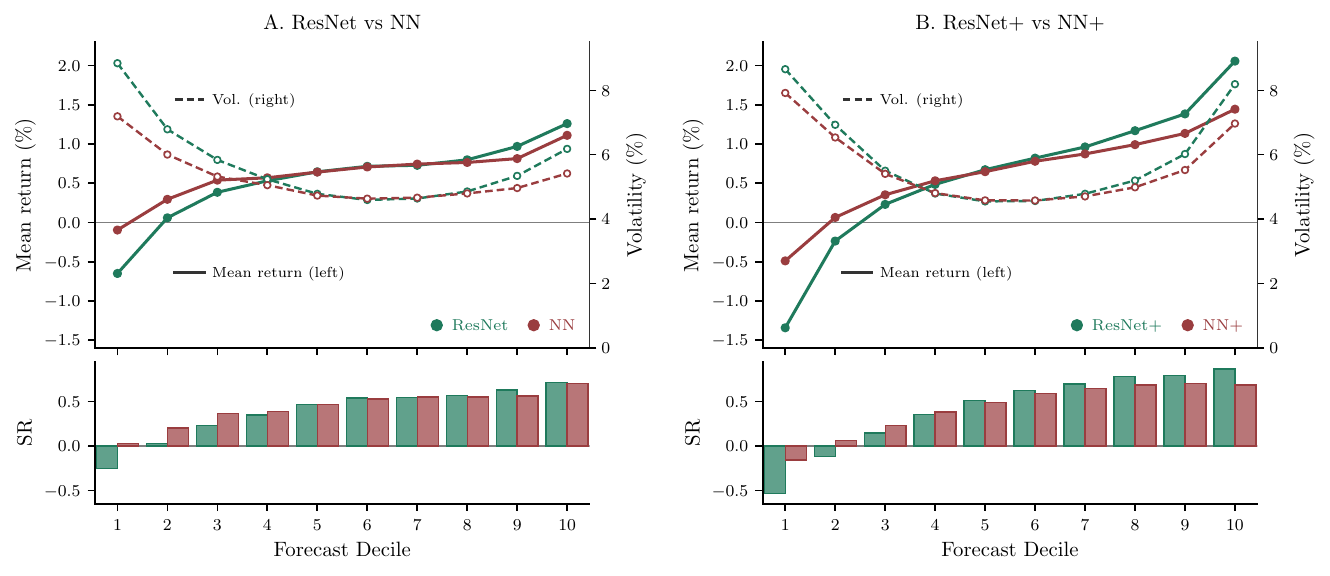}
    \caption{\textbf{Realized Returns, Volatility, and Sharpe Ratios across Deciles.}
    Solid lines report mean monthly realized returns on the left axis, dashed lines report mean volatility of monthly realized returns on the right axis, and the bars report annualized Sharpe ratios.
    The statistics are calculated from monthly value-weighted portfolios and then averaged across completed width and depth specifications shared by both model families in each comparison set.}
    \label{fig:DecileMonotonicitySelected}
\end{figure}

Figure~\ref{fig:DecileMonotonicitySelected} first shows a common rising trend in realized returns across forecast deciles, directly indicating that every model is useful for ranking stocks by their return forecasts.
The return profiles in each subfigure are close in the middle deciles and separate toward the tails, showing that the residual learning is more effective at distinguishing the most promising and the most undesirable stocks.
Deciles 3 and above show positive realized returns; this is due to the fact that the majority of stocks in the U.S. stock market have positive realized returns overall from 1987 to 2023.
The volatility profiles show a U-shaped pattern, with the lowest volatility in the middle deciles and higher volatility in the extreme deciles, showing that stocks in the extreme forecast deciles are more volatile.
In both comparison sets, the volatility of the right tail is lower than that of the left tail; together with the U-shaped volatility profiles, these results show an asymmetric volatility pattern across the forecast deciles.
Right-tail volatility is only slightly lower than the left-tail volatility in the ResNet+ versus NN+ comparison set, whereas there is an apparent decline toward the right tail in the ResNet versus NN comparison set, indicating that the tailored architecture is more effective in reducing the volatility asymmetry in the extreme forecast deciles.

The Sharpe ratios for all models also show a common rising trend across deciles, indicating that every model is economically useful in ranking stocks.
For stocks forecast to perform poorly in the low deciles, the Sharpe ratios of the residual models are generally lower than those of the feedforward models; for stocks forecast to perform promisingly in the high deciles, the Sharpe ratios of the residual models are generally higher than those of the feedforward models.

\input{FullDecileMonotonicity.tex}

Table~\ref{tab:FullDecileMonotonicity} documents how the aggregated results change across depth groups.
The most obvious anomaly is that the magnitude of the raw forecasts shrinks as depth increases for the NN and NN+ models, making the stock rankings less distinct and less accurate, as measured by mean realized returns, while the magnitudes of the raw forecasts of the ResNet and ResNet+ models remain stable.
Without the residual learning, as depth increases, the extra gain from the tailored architecture gradually vanishes, causing the deep NN+ models to approximate the performance of the NN model and eventually collapse.
By comparison, the ResNet and ResNet+ models show stable raw forecasts across depth groups, indicating that the residual learning is effective in preserving the shallow forecasts and preventing the collapse of the deep model.

The performance of ResNet remains stable across depth groups, while the performance of ResNet+ improves as depth increases.
Specifically, for the ResNet+ model, the long-short portfolio Sharpe ratio increases monotonically from 1.92 to 1.94 and finally reaches 2.07; for the ResNet model, the long-short portfolio Sharpe ratio decreases from 0.95 to 0.94 and finally reaches 0.93.
The residual learning in the tailored architecture is effective in refining the forecasts as depth increases.
For ResNet, the residual learning preserves the shallow forecasts as depth increases.

\subsection{Preservation and Refinement}
\label{sec:preservation_refinement}

Next, we examine how additional depth changes the economic value already present in a profitable shallower forecast.

Identity shortcut paths carry an earlier representation forward, while residual branches add residual updates \citep{he2016identity}.
Motivated by this ``architectural decomposition'', we develop an output-level decomposition that separates preservation and revision from the overall change in the forecast rankings as depth increases.
A deeper model may reproduce the ranking of the shallow model, discard it, or preserve most of the ranking and revise the rest.

For each model family and width schedule $s$, the shallow anchor is defined at the minimum possible depth $d_0(s)=s$.
Each deeper forecast is evaluated relative to the shallow anchor for its own model family and width schedule.
For example, the shallow anchor of ResNet+ under width schedule 2 is the forecast at depth 2, and the deeper forecasts to be decomposed are those in the medium and deep groups of ResNet+, spanning depths 6 through 20.

For a given model, width schedule, and target depth, after averaging the forecasts across the ensemble, we compare the shallow and deeper forecast rankings for every stock pair in each month $t$.\footnotemark \
\footnotetext{A stock pair is an unordered pair of stocks.
Suppose there are $N$ stocks in month $t$; then there are $\binom{N}{2}$ stock pairs.}
Let $n_t$ be the total number of pairs and $m_t$ be the number whose ordering changes.
For each changed pair $k$, let $c_{k,t+1}$ denote its pairwise revision score based on next month's realized returns.
We define the preservation score, revision accuracy, and revision economic value as
\begin{equation}
\begin{aligned}
P_t=1-\frac{m_t}{n_t},
\qquad
C_{t+1}&=\frac{1}{m_t}\sum_{k=1}^{m_t}c_{k,t+1},\\
V_{t+1}&=\frac{1}{n_t}\sum_{k=1}^{m_t}c_{k,t+1}
\left|R_{i_k,t+1}-R_{j_k,t+1}\right|,
\end{aligned}
\label{eq:forecast_revision}
\end{equation}
where the preservation score $P_t$ measures the fraction of stock pairs whose forecast ordering is unchanged from the corresponding minimum-depth shallow anchor forecast, ranging from $0$ to $1$, with $0$ indicating no preservation and $1$ indicating complete preservation.
$c_{k,t+1}$ is assigned $+1$ if the changed deeper model's ordering agrees with realized returns ordering, which means the deeper model corrects the wrong ordering of the shallow anchor, and $c_{k,t+1}$ is assigned $-1$ if the shallow anchor's ordering is correct but the deeper model's ordering is wrong.
Changes to or from a forecast tie receive half of the score, and a tie in realized returns contributes zero.\footnotemark \
Averaging $c_{k,t+1}$ gives the revision accuracy $C_{t+1}$, with $100\%$ indicating every revision is correct and $-100\%$ indicating every revision is wrong.
Because a correct revision may be more valuable when the realized returns of the two stocks are far apart, we weight each $c_{k,t+1}$ by the realized return difference to get the revision economic value $V_{t+1}$.
Revision economic value $V_{t+1}$ measures the economic value of these revisions over all $n_t$ stock pairs, where $i_k$ and $j_k$ are the stocks in changed pair $k$ and unchanged pairs contribute zero.
$R_{i_k,t+1}$ and $R_{j_k,t+1}$ are the subsequent realized returns of $i_k$ and $j_k$, respectively.
A positive revision economic value indicates refinement on average, and a negative value indicates deterioration.
We report $P_t$ on its original scale, $C_{t+1}$ in percent, and $V_{t+1}$ in basis points.

\footnotetext{See Appendix~\ref{appendix:performance_metrics} for formalized definitions.}

We then analyze the preservation score $P_t$, the revision accuracy $C_{t+1}$, and revision economic value $V_{t+1}$ over 443 forecast origins for every combination of model family, width schedule, and target depth.
Table~\ref{tab:ForecastRefinement} reports model family means for the medium and deep groups.
Figure~\ref{fig:ForecastPreservationAndCorrectionByDepth} reports individual depths from 6 through 20 for each model family.

\input{ForecastRefinement.tex}

\begin{figure}[!htbp]
    \centering
    \includegraphics[width=\textwidth]{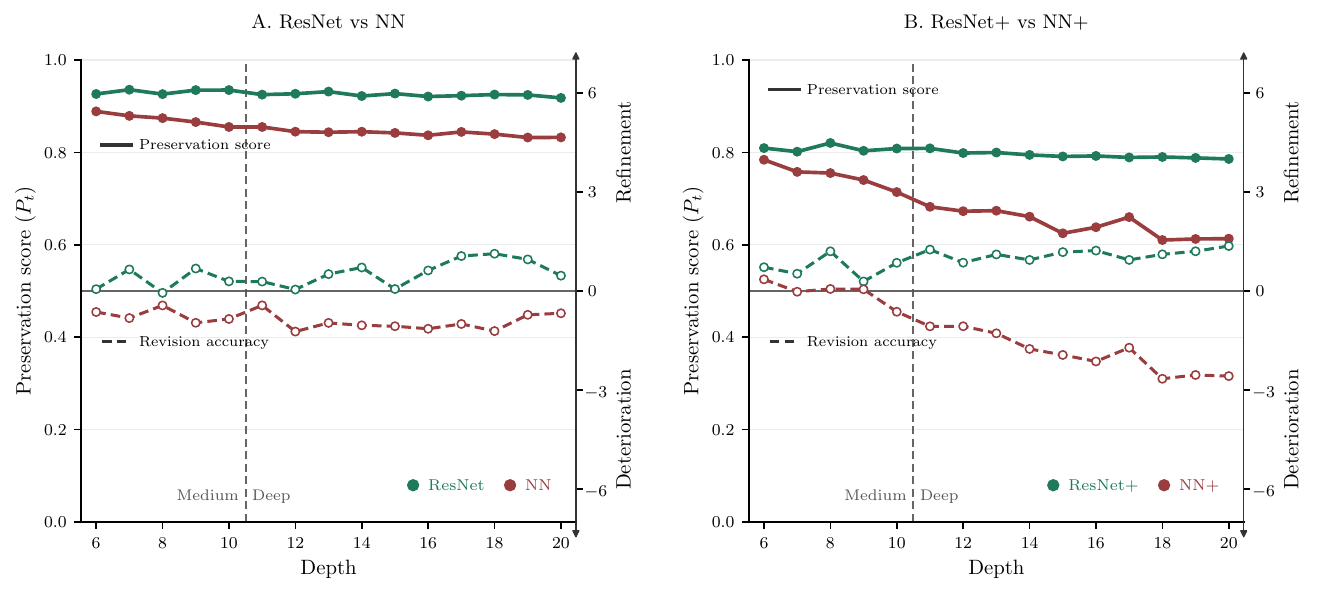}
    \caption{\textbf{Forecast Preservation and Refinement by Depth.}
    Solid lines report the preservation score $P_t$ relative to the corresponding shallow anchor model on the left axis.
    Dashed lines report the revision accuracy $C_{t+1}$ (\%) relative to the corresponding shallow anchor model on the right axis.
    Revision accuracy larger than zero indicates refinement, while accuracy less than zero indicates deterioration.
    Green and red indicate the residual and feedforward models, respectively, in each panel.}
    \label{fig:ForecastPreservationAndCorrectionByDepth}
\end{figure}

Preservation scores $P_t$ for ResNet are close to 1 and nearly the same throughout the depth range, and its revision accuracy $C_{t+1}$ is positive for both the medium and the deep group; for NN, the preservation scores are also close to 1 but lower than those for ResNet, while its revision accuracy is negative.
However, the revision economic values $V_{t+1}$ for ResNet and NN are both negative, with ResNet showing a smaller negative value.
The ResNet model shows the potential of residual-learning framework to make the model deeper.
NN shows a lower and decreasing ability to preserve the shallow forecast and does not yield statistically significant refinement.
This finding is consistent with the portfolio results in Section~\ref{sec:portfolio_performance} and Section~\ref{sec:forecast_ranking_slopes}, where ResNet preserves the out-of-sample performance and NN shows deteriorating out-of-sample performance as depth increases.
For ResNet+, the preservation scores are nearly the same throughout the depth range, while its revision accuracy and revision economic values show refinement and are higher than those of NN+.
For NN+, by contrast, the preservation scores fall from 0.75 to 0.64, and its revision accuracy shows deterioration as depth increases.
The finding is also consistent with previous portfolio results, where ResNet+ shows improving out-of-sample performance and NN+ shows deteriorating out-of-sample performance as depth increases.

To further explore the economic intuition underlying the preservation and refinement, we examine which firm characteristics the deeper models rely on to retain the shallow forecast ranking and make accurate and economically valuable revisions.
We group the firm characteristics into 13 economic themes according to \citet{jensen2023ThereReplicationCrisisa} and set the characteristics in each theme to zero, one theme at a time.
Holding the fitted models and other inputs fixed, we set the same theme to zero in both shallow and deep models, recompute their forecasts, and measure the resulting changes.

For each model specification and month, the performance losses from removing economic theme (characteristic group) $g$ are
\begin{equation}
\begin{aligned}
\Delta P_{g,t}=P_t-P_t^{(-g)},
\qquad
\Delta C_{g,t+1}&=C_{t+1}-C_{t+1}^{(-g)},\\
\Delta V_{g,t+1}&=V_{t+1}-V_{t+1}^{(-g)},
\end{aligned}
\label{eq:theme_losses}
\end{equation}
where the original $P$, $C$, and $V$ are the comparisons between the deep and shallow forecasts, and adding the superscript $(-g)$ to indicate that the characteristics in theme $g$ are set to zero in both models.
A positive performance loss indicates that removing the theme lowers the corresponding metrics, which means the theme is important for the performance of depth in that metric; a negative performance loss (improvement in the metric) indicates that keeping the theme contributes less to the deep model's than to the shallow model's.
We report $\Delta P_t$ on its original scale, $\Delta C_{t+1}$ in percent, and $\Delta V_{t+1}$ in basis points.
$\Delta P_t > 0$ indicates that the theme helps the deep and shallow models produce more consistent rankings; $\Delta C_{t+1} > 0$ indicates the theme helps achieve higher revision accuracy for deep models; and $\Delta V_{t+1} > 0$ indicates the theme helps achieve a higher revision economic value for deep models.

\input{ThemePairwiseLosses.tex}

Table~\ref{tab:theme_pairwise_losses} shows that for most themes, the residual-learning framework in deep ResNet+ preserves the ranking made by the shallow anchor well.
For the ``Low Leverage'' and ``Size'' themes, the residual-learning framework in deep ResNet+ reorders the ranking made by the shallow anchor more extensively and achieves a significantly higher revision economic value, showing economic refinement.
By contrast, in the NN+ model, the significantly heavier reordering is accompanied by deterioration in revision economic value, and preservation is accompanied by deterioration under the ``Low Risk'', ``Quality'', ``Short-Term Reversal'' and ``Value'' themes.
For ResNet and NN, there are only two significantly negative $\Delta P$ values in NN, but three significantly positive $\Delta P$ values appear in ResNet, showing that the residual-learning framework in deep ResNet can better preserve the ranking made by the shallow anchor.

Note that all $\Delta V$ values in NN+ are negative; the reason for the negative $\Delta V$ is that the shallow NN+ model utilizes the theme better than the deep ones.\footnotemark \
Another observation is that the results in the ResNet+ and NN+ model families are more likely to be significant than those in the ResNet and NN model families.
The tailored architecture is more likely to bring significant changes when a theme is removed.
Also, we can observe that only the ``Low Leverage'' and ``Profit Growth'' themes are significant in $\Delta P$ for all four model families, but neither is significant in both $\Delta C$ and $\Delta V$.
It is rare to find a theme that shows a significant effect on all model families.
Adding the ``Low Leverage'' theme characteristics generally makes the deep models reorder the rankings, while the ``Profit Growth'' theme characteristics generally make the deep models preserve the rankings.

\footnotetext{
Revision economic value is the deep ranking economic value minus the shallow ranking economic value.
A theme can therefore improve both forecasts but yield a negative $\Delta V$ if it benefits the shallow forecast more.
Appendix~\ref{appendix:ranking_value_components} provides the decomposition and separate shallow and deep theme losses.}

\section{Robustness Tests}
\label{sec:robustness}

Robustness tests change the stock universe, trading cost, or market state one at a time, with other settings kept the same throughout each exercise.
We report results by model family and depth group for market capitalization and trading frictions, and by model family and subperiod for market states.
The analysis focuses on differences between residual and feedforward models in the out-of-sample performance of value-weighted long-short portfolios, measured by Sharpe ratios in the first two tests and mean returns in the last.

\subsection{Market Capitalization}
\label{sec:robust_market_cap}

Market capitalization raises two concerns for the portfolios previously constructed in Section~\ref{sec:portfolio_performance}, one at each tail.
The first concern is that, at the lower tail, portfolio returns can be concentrated in small-cap stocks, which are prone to manipulation and exhibit unusually high cross-sectional returns \citep{FamaFrench2008Dissecting,hou2020ReplicatingAnomalies}.
Our previous results in Table~\ref{tab:FinalBaselinePerformance} show that the equal-weighted portfolios exhibit higher Sharpe ratios than those of the value-weighted portfolios.
However, limited liquidity can also make such excess returns difficult to exploit.
The second concern is that, at the upper tail, value-weighted portfolios allow a few large firms to dominate portfolio returns, making the measured returns unrepresentative of the broader cross-section \citep{avramov2012AnomaliesFinancialDistress}.
We remove either the smallest or the largest 20 percent of stocks from the stock universe each month based on current market capitalization, then give the models the trimmed universe to generate raw forecasts, and finally form value-weighted long-short portfolios based on the forecast deciles.

\input{SizeExclusionRobustness.tex}

Table~\ref{tab:SizeExclusionRobustness} shows that excluding the smallest 20 percent of stocks lowers the Sharpe ratio in every model and depth group.
This result is consistent with existing research: although the value-weighted portfolio allocates relatively low weights to small stocks, they remain a comparatively strong source of long-short portfolio returns \citep{FamaFrench2008Dissecting,hou2020ReplicatingAnomalies}.\footnotemark \
Removing the largest stocks raises every Sharpe ratio through two channels.
The first channel is the direct effect of removing large stocks that have substantial weights but contribute lower returns, and the second channel is reweighting the remaining stocks, which substantially increases the weights of smaller stocks that contribute higher returns.
\footnotetext{
In the value-weighted portfolio, the excluded smallest stocks account for only a small portion of portfolio value.
Removing the smallest stocks changes the relative value weights of the remaining stocks only modestly.
So, it can be concluded that the SR decline after excluding the smallest stocks can be attributed more directly to the loss of their strong returns.
}

Excluding the largest 20 percent of stocks makes the $\Delta$ SR statistically significant in every depth group for both comparison sets.
In the full universe, the ResNet versus NN differences in all three depth groups and the ResNet+ versus NN+ difference in the shallow group are not statistically significant, but each becomes significant after the largest stocks are excluded.
On the other hand, every $\Delta$ SR that is already significant in the full universe remains significant under either market capitalization exclusion.
These results show that residual models make the result more robust when a particular market capitalization segment is excluded.
And the residual models produce more accurate forecasts for smaller stocks.

Notice that, the deep NN+ Sharpe ratio falls to 0.79 after the smallest stocks are excluded, close to the deep NN Sharpe ratios of 0.82 in the full universe and 0.80 after the same exclusion.
This pattern suggests that, the deep NN+ converges toward NN when the extra gain from the tailored architecture gradually vanishes as depth increases, and the convergence is faster when the smallest stocks are excluded.

\subsection{Trading Frictions}
\label{sec:robust_trading_frictions}

The portfolio performance reported in Section~\ref{sec:portfolio_performance} is based on monthly rebalancing, with more than \$2 traded each month for every \$1 of previous portfolio value.\footnotemark \
However, the reported portfolio performance does not account for any trading frictions.
To examine whether the advantage of residual learning persists after accounting for trading frictions, we consider two linear transaction cost rates as trading frictions: 10 and 50 basis points (bps) per dollar traded.
For example, with monthly turnover of 260\%, these charges reduce monthly realized returns by 0.26\% and 1.30\%, respectively.

\footnotetext{See Equation~\eqref{eq:appendix_turnover}.
Starting with a \$1 capital base, our long-short portfolio has a \$1 long position and a \$1 short position, making it a zero-net-investment portfolio.
Turnover in the first month is 200\%.
In subsequent months, turnover includes both changes in portfolio constituents due to updated forecasts, and changes in the weights of existing stock holdings due to updated forecasts and return drift.
The mean monthly turnover values reported earlier all exceed 200\%, meaning that rebalancing the combined long and short positions requires more than \$2 traded per \$1 of previous portfolio value.}

\input{TransactionCostAdjustedPerformance.tex}

Table~\ref{tab:TransactionCostAdjustedPerformance} shows that transaction costs lower the Sharpe ratio for every model family and depth group, while increasing $\Delta$ SR within each depth group in both comparison sets.
The difference between ResNet+ and NN+ in the shallow group is not statistically significant without transaction costs but becomes significant at 50 bps.
Every significant $\Delta$ SR without transaction costs, remains significant after costs are introduced.
These results show that the advantage of residual learning persists under transaction costs, and the advantage widens as the cost rate rises.

And there are two apparent patterns in turnover and changes in Sharpe ratios.
One is that although residual models have turnover similar to that of the feedforward models within the same depth group, their Sharpe ratios decline less as transaction costs increase.
The other is that ResNet and NN have lower turnover than ResNet+ and NN+ and experience a smaller decline in the Sharpe ratio per unit of turnover.
These two patterns can be explained by one reason: transaction costs mainly reduce mean portfolio returns and have a relatively small effect on volatility.
Table~\ref{tab:FinalBaselinePerformance} reports higher portfolio volatility for ResNet and NN than for ResNet+ and NN+, and also reports higher volatility for residual models than for their feedforward counterparts within each depth group.
When volatility changes little, the same reduction in mean returns causes a smaller decline in the Sharpe ratio for a portfolio with higher volatility.

\subsection{Market States}
\label{sec:robust_market_states}

The market states analysis examines how the model mean return varies across expansions and recessions.
Using the National Bureau of Economic Research (NBER) business cycle chronology \citep{NBERBusinessCycleDating}, we divide the 443 out-of-sample realized return months from February 1987 through December 2023 into nine subperiods, consisting of five expansions and four recessions.
Within each comparison set, each model family's monthly value-weighted long-short returns are first averaged equally across the same width and depth specifications, and then averaged over time to obtain the reported ResNet(+) Mean and NN(+) Mean.\footnotemark

\footnotetext{Some subperiods contain too few monthly observations to estimate return volatility reliably, so here we report mean returns rather than Sharpe ratios.
Inference for mean return differences uses 12 Newey\mbox{-}West lags, and statistical significance is not reported for subperiods with fewer than 13 months.}

\input{MarketStateRobustness.tex}

Table~\ref{tab:MarketStateRobustness} shows that the ResNet+ versus NN+ mean return difference is positive in all five expansion periods, which together cover 407 of 443 months.
Four of the five differences are significant at the 5 percent level.
The two longest expansions account for more than half of all months, and both have significantly positive differences.
During recessions, however, only three of the four mean return differences are positive, and these differences are either statistically insignificant or based on too few months for inference.
The negative difference occurs in March and April 2020, the two months of the COVID recession; both ResNet+ and NN+ have negative mean returns in this subperiod, showing that the return patterns learned from earlier data do not translate into profitable portfolios during this short period.

For ResNet versus NN, the mean return difference is positive in four of the five expansion periods, covering 279 of 443 months, and three of these differences are significant at the 10 percent level.
During the two months of the COVID recession, ResNet also has a negative mean return, whereas NN has a positive mean return.
The short duration of this period makes it difficult to distinguish differences in forecasting ability from sampling variation.
Overall, residual models outperform their feedforward counterparts in most subperiods, but even models with strong and stable performance over the full sample can incur losses during extreme market states.

\section{Conclusions}

We propose that, shallow models are special cases of deep models.

A deeper model can at least reproduce its shallow counterpart with identity mappings $f(x) = x$ and has the potential to outperform the shallow counterpart.
And we achieve an out-of-sample Sharpe ratio of 2.07 for value-weighted long-short portfolios of deep residual models, higher than that for the corresponding shallow ones (1.92) and more than twice that of the deep feedforward models (0.89).

We tailor the feedforward architecture to the features of financial panel data and the training protocol, obtaining NN+ and its residual-learning counterpart, ResNet+.
Within each comparison set, we hold the data, preprocessing, training protocol, and portfolio evaluation methods fixed and compare residual and feedforward models across different depths and width schedules.
We find that residual models outperform their feedforward counterparts, and the edge of residual learning becomes sharper as depth increases, while the feedforward counterparts deteriorate as depth increases.
Further exploration shows that the residual models preserve the shallow forecasts and refine them as depth increases, while their feedforward counterparts fail to do so.
The results remain robust after excluding extremely large- or small-cap stocks and after incorporating transaction costs, and the results persist across most of the NBER business cycles.

The residual-learning framework makes it possible to train deeper neural networks with more parameters directly on structured panel data to output return forecasts, forming native ``large asset pricing models''.

\clearpage

\phantomsection
\addcontentsline{toc}{part}{Appendix}
\noindent{\Large\bfseries Appendix}
\par
\vspace{1em}

\appendix
\captionsetup{font=small, labelfont=bf}

\section{Data, Estimation, and Model Specification}

\subsection{Normalization and Training Protocol}
\label{appendix:normalization_implementation}

After filling missing values with monthly cross-sectional medians, we rank-normalize firm characteristic $x_{i,t}$ as
\begin{equation}
\widetilde x_{i,t}=2\frac{\operatorname{rank}_t(x_{i,t})}{N_t}-1.
\label{eq:appendix_rank_normalization}
\end{equation}
An entirely missing monthly cross-section is assigned zero, and macroeconomic predictors are not rank-normalized across stocks.
Batch and layer normalization take the form
\begin{equation}
\operatorname{Norm}(y)
=\gamma\odot\frac{y-\mu}{\sqrt{\sigma^2+\ep}}+\beta.
\end{equation}
For batch normalization, $(\mu,\sigma^2)$ use the observations in a mini-batch for each feature.
For layer normalization, $(\mu,\sigma^2)$ use the features within one hidden vector.
NN and the transition blocks of ResNet use batch normalization \citep{ioffe2015BatchNormalizationAccelerating}; NN+ and ResNet+ use layer normalization in their corresponding blocks and before the final forecast map \citep{ba2016layer}.

Annual January refits use the preceding year's fitted parameters as starting values.
Each training batch contains 10,000 observations.
We minimize mean squared error for at most 100 epochs, stopping after five consecutive epochs without improvement on the most recent five years of training data.
NN and ResNet use Adam with $\ell_1$ regularization; NN+ and ResNet+ use AdamW with $\ell_2$ regularization.
For each width and depth specification, we average the forecasts from ten random seeds before sorting stocks.

\subsection{ResNet+ Specification Details}
\label{appendix:resnetp_specification}

For $d=1,\dots,D$, the block update is
\begin{align}
h_0 & := \mathrm{Act}(\mathcal{I}(X)),
\label{eq:resnetp_init}
\\
h_d & :=
\begin{cases}
\mathrm{Act}(h_{d-1}+\mathcal{E}_d(h_{d-1})), & n_{d-1}=n_d, \\[4pt]
\mathrm{Act}\!\left(\mathcal{E}_d(h_{d-1})+\sum_{m=1}^{M_d}\mathcal{Q}_{d,m}(h_{d-1})\right), & n_{d-1}\neq n_d,
\end{cases}
\label{eq:resnetp_block}
\end{align}
where $\mathcal{I}(\cdot)$ is the initial map, $\mathrm{Act}(\cdot)$ is the activation function, $h_d \in \mathbb{R}^{n_d}$ is the hidden representation after block $d$, and $n_d$ is the hidden width.
At a width change, $\mathcal{E}_d$ maps to width $n_d$; if it is the only branch, $M_d=0$.
Additional depth at the same transition adds $M_d\geq1$ projection paths $\mathcal{Q}_{d,m}(h)=\mathrm{Act}(\mathrm{LN}_{d,m}(W_{d,m}h))$ from the same input $h_{d-1}$.

\begin{figure}[!htbp]
    \centering
    \includegraphics[width=0.9\textwidth]{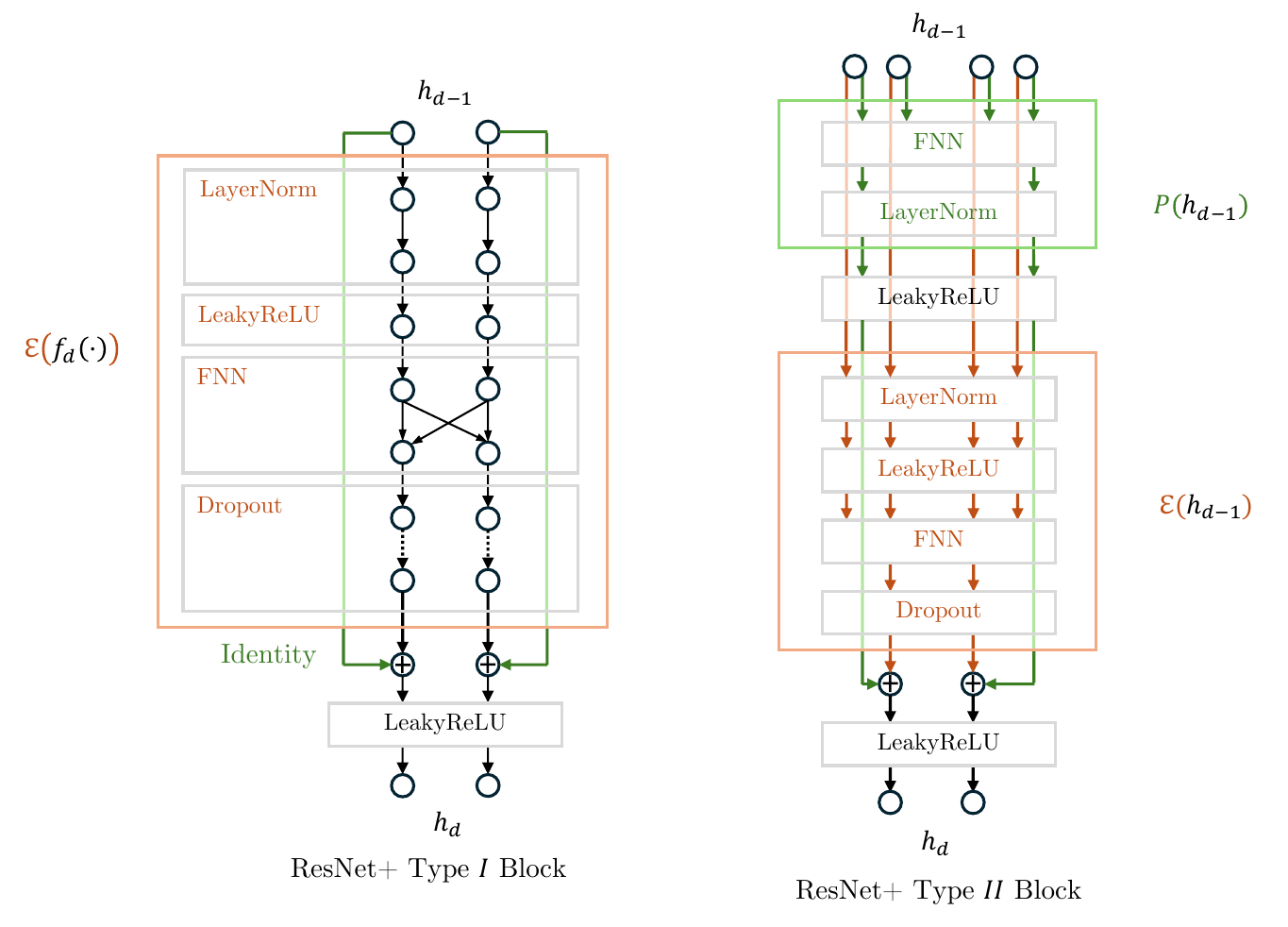}
\caption{
\textbf{ResNet+ Forecasting Blocks.}
\newline\small
The left panel shows Type \emph{I}, which adds the learned residual update $\mathcal{E}_d(h_{d-1})$ to the identity shortcut path when the hidden width is unchanged.
Type \emph{II} shows a width change with additional layers: $\mathcal{P}$ denotes $\mathcal{Q}_{d,m}$ in equation~\eqref{eq:resnetp_block}, and FNN denotes the feedforward branch $\mathcal{E}_d$.
The circled ``+'' denotes elementwise addition.
    }
    \label{fig:CoreBuildingBlockResNetPlus}
\end{figure}

The activation function $\mathrm{Act}(\cdot)$ is LeakyReLU.
Residual branches use layer normalization and dropout \citep{Srivastava2014}.

The return forecast is
\begin{equation}
\widehat R_{i,t+1}:=\mathcal{O}(h_{D,i,t}),
\label{eq:resnetp_output}
\end{equation}
where $\mathcal{O}(\cdot)$ is the final return forecast map.

\FloatBarrier

\clearpage
\begin{landscape}
\subsection{Predictor List}
\label{appendix:predictor_list}

\input{PredictorList.tex}
\end{landscape}

\FloatBarrier

\clearpage
\section{Measurement and Construction Details}
\label{appendix:performance_metrics}

\subsection{Forecast and Portfolio Measures}
\label{appendix:forecast_portfolio_measures}

Let $\widehat R_{i,t+1}$ and $R_{i,t+1}$ denote the forecast and realized one-month-ahead excess returns.
Each month, we exclude stocks with forecasts below the 2.5th or above the 97.5th percentile.
On the retained stock set $\mathcal U_t^{\mathrm{trim}}$, define
\begin{equation}
\widehat R^\circ_{i,t+1}
=\widehat R_{i,t+1}
-\frac{1}{|\mathcal U_t^{\mathrm{trim}}|}
\sum_{j\in\mathcal U_t^{\mathrm{trim}}}\widehat R_{j,t+1}.
\end{equation}
Realized returns are not demeaned.
Let $\mathcal O^{\mathrm{trim}}=\{(i,t):i\in\mathcal U_t^{\mathrm{trim}}\}$.
One nonnegative scale is fitted for each specification,
\begin{equation}
\widehat\kappa
=\max\left\{0,
\frac{\sum_{\mathcal O^{\mathrm{trim}}}\widehat R^\circ_{i,t+1}R_{i,t+1}}
{\sum_{\mathcal O^{\mathrm{trim}}}(\widehat R^\circ_{i,t+1})^2}
\right\}.
\end{equation}
The calibrated out-of-sample statistic is
\begin{equation}
R^2_{\mathrm{OOS}}
=1-
\frac{\sum_{\mathcal O^{\mathrm{trim}}}
(R_{i,t+1}-\widehat\kappa\widehat R^\circ_{i,t+1})^2}
{\sum_{\mathcal O^{\mathrm{trim}}}R_{i,t+1}^2}.
\label{eq:appendix_oos_r2}
\end{equation}
The tables report $100R^2_{\mathrm{OOS}}$.
We scale forecasts only when calculating $R^2_{\mathrm{OOS}}$.

For the common monthly stock sample, let $\mathcal P_t=\{(i,j):i<j\}$ and $|\mathcal P_t|=\binom{N_t}{2}$.
With shallow forecast $\widehat R^S$, deeper forecast $\widehat R^D$, and realized return $R$, define
\begin{align}
S_{ij,t}&=\operatorname{sgn}(\widehat R^S_{i,t+1}-\widehat R^S_{j,t+1}),\\
D_{ij,t}&=\operatorname{sgn}(\widehat R^D_{i,t+1}-\widehat R^D_{j,t+1}),\\
Z_{ij,t+1}&=\operatorname{sgn}(R_{i,t+1}-R_{j,t+1}),\\
\mathcal C_t&=\{(i,j)\in\mathcal P_t:S_{ij,t}\ne D_{ij,t}\}.
\label{eq:appendix_pairwise_signs}
\end{align}
Then
\begin{equation}
P_t=1-\frac{|\mathcal C_t|}{|\mathcal P_t|},
\qquad
C_{t+1}
=\frac{1}{|\mathcal C_t|}
\sum_{(i,j)\in\mathcal C_t}
\frac{(D_{ij,t}-S_{ij,t})Z_{ij,t+1}}{2}.
\end{equation}
Months with $\mathcal C_t=\varnothing$ do not enter the within-specification mean of $C_{t+1}$.

Each month, stocks are sorted into ten forecast deciles $S_{t+1,k}$.
For $w\in\{\mathrm{VW},\mathrm{EW}\}$,
\begin{equation}
R^w_{t+1,k}=\sum_{i\in S_{t+1,k}}w^w_{i,t,k}R_{i,t+1},
\quad
w^{\mathrm{EW}}_{i,t,k}=\frac{1}{|S_{t+1,k}|},
\quad
w^{\mathrm{VW}}_{i,t,k}=\frac{\mathrm{ME}_{i,t}}{\sum_{j\in S_{t+1,k}}\mathrm{ME}_{j,t}}.
\label{eq:appendix_decile_return}
\end{equation}
The long-short return and annualized Sharpe ratio are
\begin{equation}
R^w_{t+1,\mathrm{HL}}=R^w_{t+1,10}-R^w_{t+1,1},
\qquad
\mathrm{SR}^w=\sqrt{12}\frac{\bar R^w_{\mathrm{HL}}}{\sigma^w_{\mathrm{HL}}}.
\label{eq:appendix_hl_sr}
\end{equation}
The Fama--French five-factor alpha satisfies
\begin{equation}
R^w_{t+1,\mathrm{HL}}
=\alpha^w_{\mathrm{FF5}}+\beta^{w\top}f_{t+1}+\varepsilon_{t+1},
\label{eq:appendix_ff5_alpha}
\end{equation}
where $f_{t+1}$ contains $\mathrm{Mkt}{-}\mathrm{RF}$, SMB, HML, RMW, and CMA.
We report monthly alpha in percent.
Its $t$ statistic uses Newey\mbox{-}West standard errors with 12 lags.
For Table~\ref{tab:FinalBaselinePerformance}, monthly long-short returns are averaged equally across the included regular specifications within each model and depth group before one regression is estimated.

If $a_{i,t}$ is the signed long-short weight, turnover is
\begin{equation}
a^{\mathrm{drift}}_{i,t+1}
=\frac{a_{i,t}(1+R_{i,t+1})}{1+\sum_j a_{j,t}R_{j,t+1}},
\qquad
\mathrm{TO}_{t+1}=\sum_i|a_{i,t+1}-a^{\mathrm{drift}}_{i,t+1}|.
\label{eq:appendix_turnover}
\end{equation}
The long and short legs sum to one and minus one, and first-month gross turnover is $\mathrm{TO}_1=2$.

\subsection{Comparison Sets, Depth Summaries, and Heatmaps}
\label{appendix:table_figure_construction}
\label{appendix:depth_summaries_heatmaps}

Let $j=(s,d)$ index width schedule $s\in\{1,\ldots,4\}$ and depth $d$.
For residual model family $r$, feedforward model family $b$, and depth set $A$, define
\begin{equation}
\mathcal C_{r,b}(A)
=\{(s,d):s\in\{1,\ldots,4\},\ d\in A,\
\mathrm{SR}^{\mathrm{VW}}_{r,s,d}\text{ and }
\mathrm{SR}^{\mathrm{VW}}_{b,s,d}\text{ are regular}\}.
\label{eq:appendix_completed_pairs}
\end{equation}
We exclude specifications that produce constant forecasts from the portfolio performance tables and statistical tests.
Ties are allowed if forecasts still differ across stocks.
The shallow, medium, and deep groups cover hidden depths $1$--$5$, $6$--$10$, and $11$--$20$.
Within each model family, completed regular specifications receive equal weight.

Let $J(A)=|\mathcal C_{r,b}(A)|$.
The depth-group Sharpe ratio difference and monthly return difference are
\begin{align}
\Delta\mathrm{SR}_{r,b}(A)
&=\frac{1}{J(A)}\sum_{(s,d)\in\mathcal C_{r,b}(A)}
(\mathrm{SR}^{\mathrm{VW}}_{r,s,d}-\mathrm{SR}^{\mathrm{VW}}_{b,s,d}),
\label{eq:appendix_depth_delta_sr}\\
\Delta R^{r,b}_{t+1}(A)
&=\frac{1}{J(A)}\sum_{(s,d)\in\mathcal C_{r,b}(A)}
(R^{\mathrm{VW}}_{r,s,d,t+1,\mathrm{HL}}-R^{\mathrm{VW}}_{b,s,d,t+1,\mathrm{HL}}).
\label{eq:appendix_cell_avg_delta_return}
\end{align}
The FF5 alpha difference is the intercept in a five-factor regression of $\Delta R^{r,b}_{t+1}(A)$.
The heatmaps report
\begin{equation}
H_{r,b}(s,d)=\mathrm{SR}^{\mathrm{VW}}_{r,s,d}-\mathrm{SR}^{\mathrm{VW}}_{b,s,d}.
\label{eq:appendix_heatmap_delta}
\end{equation}

\subsection{Sharpe Ratio Difference Inference}
\label{appendix:sharpe_difference_inference}

The residual-minus-feedforward tests target equation~\eqref{eq:appendix_depth_delta_sr}.
For model $m$, let $\mathcal C_m(A)$ be its regular specifications in depth set $A$, with $S=\{1,\ldots,5\}$ and $D=\{11,\ldots,20\}$.
The deep-minus-shallow comparison in Table~\ref{tab:FinalBaselinePerformance} tests
\begin{equation}
\Delta\mathrm{SR}^{D-S}_m
=\frac{1}{|\mathcal C_m(D)|}\sum_{j\in\mathcal C_m(D)}\mathrm{SR}^{\mathrm{VW}}_{m,j}
-\frac{1}{|\mathcal C_m(S)|}\sum_{j\in\mathcal C_m(S)}\mathrm{SR}^{\mathrm{VW}}_{m,j}.
\end{equation}
For both comparisons, we calculate each specification's Sharpe ratio before averaging across specifications.
For model $m$, define
\begin{equation}
\widehat{\mathrm{IF}}^m_{j,t}
=\sqrt{12}\left[
\frac{u^m_{j,t}}{\widehat\sigma^m_j}
-\frac{\widehat\mu^m_j\{(u^m_{j,t})^2-(\widehat\sigma^m_j)^2\}}
{2(\widehat\sigma^m_j)^3}
\right],
\quad
u^m_{j,t}=R^m_{j,t}-\widehat\mu^m_j.
\end{equation}
The two monthly influence series are
\begin{align}
\overline\ell^{r,b}_t(A)
&=\frac{1}{J(A)}\sum_{j\in\mathcal C_{r,b}(A)}
\left(\widehat{\mathrm{IF}}^r_{j,t}-\widehat{\mathrm{IF}}^b_{j,t}\right),\\
\overline\ell^{D-S}_{m,t}
&=\frac{1}{|\mathcal C_m(D)|}\sum_{j\in\mathcal C_m(D)}\widehat{\mathrm{IF}}^m_{j,t}
-\frac{1}{|\mathcal C_m(S)|}\sum_{j\in\mathcal C_m(S)}\widehat{\mathrm{IF}}^m_{j,t}.
\end{align}
For either contrast, let $\widehat s=\mathrm{SE}_{\mathrm{NW}(12)}(\overline\ell_t)$.
With $M=1{,}000$ circular block bootstrap draws and fixed block length $b=12$ months,
\begin{equation}
d=\frac{|\widehat\Delta|}{\widehat s},
\qquad
d^{*,q}=\frac{|\widehat\Delta^{*,q}-\widehat\Delta|}{\widehat s^{*,q}},
\qquad
p=\frac{1+\sum_{q=1}^{M}\mathds{1}\{d^{*,q}\ge d\}}{M+1}.
\label{eq:appendix_lw_pvalue}
\end{equation}
Each bootstrap draw samples the same months for all return series.
We adapt the studentized circular block method of \citet{LedoitWolf2008Sharpe} to the differences above using Bartlett Newey\mbox{-}West(12) standard errors.

\subsection{Robustness Measures}
\label{appendix:inference_robustness_tables}

Market capitalization robustness tests use the full universe and the universes obtained after excluding the smallest or largest 20 percent of stocks by formation-month market capitalization.
For transaction cost rate $\kappa$ basis points per dollar traded,
\begin{equation}
R^{\mathrm{net}}_{f,s,d,t+1}(\kappa)
=R^{\mathrm{VW}}_{f,s,d,t+1,\mathrm{HL}}
-\mathrm{TO}^{\mathrm{VW}}_{f,s,d,t}\frac{\kappa}{10{,}000},
\qquad \kappa\in\{0,10,50\}.
\label{eq:appendix_net_cost_return}
\end{equation}

\FloatBarrier

\section{Shallow and Deep Ranking Economic Values}
\label{appendix:ranking_value_components}

\subsection{Definitions and Relation to Revision Economic Value}

For a common monthly stock sample, let $\mathcal P_t=\{(i,j):i<j\}$ and $n_t=|\mathcal P_t|$.
Define the shallow and deep forecast ordering signs as
\begin{equation}
\begin{aligned}
S_{ij,t}&=\operatorname{sgn}(\widehat R^S_{i,t+1}-\widehat R^S_{j,t+1}),\\
D_{ij,t}&=\operatorname{sgn}(\widehat R^D_{i,t+1}-\widehat R^D_{j,t+1}).
\end{aligned}
\label{eq:appendix_value_orderings}
\end{equation}
The shallow and deep ranking economic values are
\begin{equation}
\begin{aligned}
V_{t+1}^{S}&=\frac{1}{2n_t}\sum_{(i,j)\in\mathcal P_t}
S_{ij,t}(R_{i,t+1}-R_{j,t+1}),\\
V_{t+1}^{D}&=\frac{1}{2n_t}\sum_{(i,j)\in\mathcal P_t}
D_{ij,t}(R_{i,t+1}-R_{j,t+1}).
\end{aligned}
\label{eq:appendix_individual_ranking_values}
\end{equation}
These are measures of forecast rankings, not portfolio returns.

Let $\mathcal C_t=\{(i,j)\in\mathcal P_t:S_{ij,t}\ne D_{ij,t}\}$ and $m_t=|\mathcal C_t|$.
For a changed pair, the pairwise revision score is
\begin{equation}
c_{ij,t+1}=\frac{D_{ij,t}-S_{ij,t}}{2}
\operatorname{sgn}(R_{i,t+1}-R_{j,t+1}).
\label{eq:appendix_value_revision_score}
\end{equation}
Since unchanged pairs have $D_{ij,t}=S_{ij,t}$ and $\operatorname{sgn}(x)|x|=x$, subtraction yields
\begin{equation}
\begin{aligned}
V_{t+1}^{D}-V_{t+1}^{S}
&=\frac{1}{2n_t}\sum_{(i,j)\in\mathcal P_t}
(D_{ij,t}-S_{ij,t})(R_{i,t+1}-R_{j,t+1})\\
&=\frac{1}{n_t}\sum_{(i,j)\in\mathcal C_t}
c_{ij,t+1}|R_{i,t+1}-R_{j,t+1}|\\
&=\frac{1}{n_t}\sum_{k=1}^{m_t}
c_{k,t+1}|R_{i_k,t+1}-R_{j_k,t+1}|
=V_{t+1}.
\end{aligned}
\label{eq:appendix_value_decomposition}
\end{equation}
\subsection{Ranking Economic Values and Theme Losses}

Table~\ref{tab:appendix_forecast_ranking_values} reports the shallow and deep ranking economic values using the same specification samples and aggregation weights as Table~\ref{tab:ForecastRefinement}.

\input{AppendixForecastRankingValues.tex}
\FloatBarrier

Table~\ref{tab:appendix_theme_value_losses} separates the shallow and deep ranking economic value losses underlying Table~\ref{tab:theme_pairwise_losses}, which satisfy
\begin{equation}
\Delta V_{g,t+1}=\Delta V^D_{g,t+1}-\Delta V^S_{g,t+1}.
\label{eq:appendix_theme_value_decomposition}
\end{equation}
\input{AppendixThemeValueLosses.tex}
\FloatBarrier

\section{Theoretical Performance Bounds}
\label{appendix:theoretical_performance_bounds}

In this section, we first introduce the general idea of residual blocks preventing gradient vanishing, e.g.~\citet{he2016deep,he2016identity}, then we study the stability of ResNet+ by providing supporting theoretical results. 

Referring to~\citet{he2016deep,he2016identity}, we consider a pre-activation residual network with $D$ residual blocks, let $(h_d)_{d=0}^{D-1}$ be the sequence of hidden states, it can be formulated as below.
$$
h_d:=h_{d-1}+\psi_d\left(\theta;h_{d-1}\right),
$$
where $\theta$ is the learnable parameter of the neural networks. 
Denoting $\mathcal{L}$ the loss function acts at the final layer, by chain rule, we can get the backpropagation updates follow
\begin{equation}
\label{eqn:loss:change}
\nabla_{h_{d-1}}\mathcal{L} = \left[\frac{\partial h_{d}}{\partial h_{d-1}}\right]^{\top} \nabla_{h_{d}}\mathcal{L} = I + J_{\psi_d}(h_{d-1})^{\top} \nabla_{h_{d}}\mathcal{L}, \qquad J_{\psi_d}(h_{d-1}):= \frac{\partial \psi_{\theta}(h_{d-1})}{\partial h_{d-1}},
\end{equation}
where $J$ is the Jacobian matrix describing how a tiny change in the input features propagates through all the operations inside the residual block. Hence, even if the derivative $\frac{\partial \psi_{\theta}(h_{d-1})}{\partial h_{d-1}}$ is small, the backpropagation updates still preserve gradient magnitude in the sense that $\nabla_{h_{d-1}}\mathcal{L} \approx \nabla_{h_d}\mathcal{L}$. Comparing to nonresidual networks, e.g.~\citet{gu2020EmpiricalAssetPricing}, where the backpropagation updates compose many Jacobians such that $$
\nabla_{h_{d-1}}\mathcal{L} = \left[\frac{\partial h_{d}}{\partial h_{d-1}}\right]^{\top} \nabla_{h_{d}}\mathcal{L} = J_{\psi_d}(h_{d-1})^{\top} \nabla_{h_{d}}\mathcal{L},
$$
with $J_{\psi_d}(h_{d-1}):= \frac{\partial \psi_{\theta}(h_{d-1})}{\partial h_{d-1}}$. After iterating over many layers, we can have 
$$\nabla_{h_0}\mathcal{L} = \prod_{i=1}^{d-1}J_i^{\top}\nabla_{h_{d-1}}\mathcal{L}.
$$
Therefore, once the network equips with with many activation functions and initializations, each Jacobian usually admits $\Vert J_i \Vert < 1$ under~\citet{he2016deep} setting, also see~\citet{he2015delving}, then the products of Jacobian shrinks exponentially with depth, the gradient vanishes and make deep networks untrainable.

Now we adapt the residual structure to our learning model and specify the residual function, the ResNet architecture is showing in \autoref{fig:resnet}. Under the setting, the effective unit block of ResNet+ is formulated by
\begin{equation}
\label{def:psi}
h_d:=h_{d-1}+\psi_d\left(\theta;h_{d-1}\right), \qquad \psi = \mathrm{FNN} \circ \mathrm{LeakyReLU} \circ \mathrm{LayerNorm},
\end{equation} 
For $a\in(0,1)$, define $\phi(u)=\mathrm{LeakyReLU}_a(u)$ element-wise by
\begin{equation}
\label{eqn:lrelu}
\phi(u_i) = \begin{cases}
u_i, & u_i \geq  0,\\
a u_i, & u_i<0.
\end{cases}
\end{equation}
And $\mathrm{LN}$ is the layer normalization (LayerNorm, LN) proposed by~\citet{ba2016layer}. Let $x\in\mathbb{R}^d$ and $\mathbf{1}$ be all-one vector in $d$-dimension, denoting
$$
\alpha(x)=\frac{1}{d}\mathbf{1}^T x, \quad \hat x=x-\alpha(x)\mathbf{1},
$$
and for any $\ep > 0$, we further define
$$
\sigma(x)^2=\tfrac{1}{d}\Vert  \hat x\Vert ^2, \quad v(x)=\sqrt{\sigma(x)^2+\ep}.
$$
LN function with gain $\gamma\in\mathbb{R}^d$ and bias $\beta\in\mathbb{R}^d$ is defined as below,
\begin{equation}
\label{eqn:layer:norm}
\mathrm{LN}_{\gamma,\,\beta}(x) = \gamma\odot \frac{\hat x}{v(x)}+\beta,
\end{equation}
where $\odot$ is the element-wise product. We note that the parameters $\gamma,\beta \in \mathbb{R}^d$ are learnable and LN function is nonlinear, see~\citet{ni2024nonlinearity}. In the following, we compute the lower bound of the norm of Jacobian $J_{\psi_d}(h_{d-1}):=I + \frac{\partial \psi_{\theta}(h_{d-1})}{\partial h_{d-1}}$ to show the nonvanishing gradient. We first compute Jacobian with respect to LN function and give a norm bound in the following lemma.
\begin{lemma}[LayerNorm Jacobian]
\label{lemma:ln:jacobian}
Under the setting, assume $\Vert \gamma \Vert_{\infty}< \infty $ and $\Vert \beta \Vert_{\infty}< \infty $, the Jacobian of~~$\mathrm{LN}_{\gamma,\,\beta}$ satisfies, for all $x\in\mathbb{R}^n$,
\begin{equation}
J_{\mathrm{LN}}(x) = D_\gamma\left(\frac{1}{v(x)}\left(I - \frac{1}{n}\mathbf{1}\mathbf{1}^T\right)-\frac{1}{d v(x)^3} \hat{x}\hat{x}^T\right),~\text{ where $D_\gamma=\mathrm{diag}(\gamma)$}.
\end{equation}
Moreover, its operator norm is uniformly bounded as
\begin{equation}
\label{eqn:ln:lip}
\Vert J_{\mathrm{LN}}(x)\Vert  \leq \frac{\Vert \gamma\Vert _\infty}{\sqrt{\ep}}\left(1 + \frac{1}{\sqrt{n}}\right),~\text{ for all $x\in\mathbb{R}^n$},
\end{equation}
so $\mathrm{LN}_{\gamma,\,\beta}$ is Lipschitz and depth-independent with constant
$\frac{2}{\sqrt{\ep}}\Vert \gamma \Vert_\infty$.
\end{lemma}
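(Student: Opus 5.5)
The plan is to compute the Jacobian directly by the chain rule and then bound its operator norm through a spectral analysis of the resulting symmetric matrix. Throughout I read the ambient dimension as $n=d$, which the statement uses interchangeably. Because $\ep>0$, we have $v(x)\ge\sqrt{\ep}>0$ for every $x$, so $\mathrm{LN}_{\gamma,\beta}$ is $C^\infty$ on all of $\mathbb{R}^n$ and no case distinction is needed. The bias $\beta$ drops out under differentiation, and the elementwise gain contributes the left factor $D_\gamma$. It therefore suffices to differentiate $x\mapsto \hat x/v(x)$.

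\textbf{Step 1 (Jacobian).} First, $\hat x=Px$ with $P=I-\frac1n\mathbf{1}\mathbf{1}^T$, so $\partial\hat x/\partial x=P$. Next, $\sigma(x)^2=\frac1n\hat x^T\hat x$, so $\nabla\sigma^2=\frac2n P\hat x=\frac2n\hat x$, using $P\hat x=\hat x$. Hence $\nabla v=\frac{1}{n v}\hat x$. The quotient rule then gives
\[
\frac{\partial}{\partial x}\frac{\hat x}{v}=\frac{P}{v}-\frac{\hat x\,\hat x^T}{n v^3},
\]
and multiplying on the left by $D_\gamma$ yields the stated formula.

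\textbf{Step 2 (norm bound).} Let $M=\frac{P}{v}-\frac{\hat x\hat x^T}{nv^3}$. I will not bound the two terms separately, since the triangle inequality only gives $2/\sqrt{\ep}$; instead I diagonalize $M$. The matrix $M$ is symmetric, and $\hat x$ lies in the range of $P$. Choose an orthonormal basis consisting of $\mathbf{1}/\sqrt n$, the unit vector $\hat x/\Vert\hat x\Vert$ (if $\hat x\neq0$), and a basis of the orthogonal complement of these two within $\mathbf{1}^\perp$. In this basis the eigenvalues of $M$ are:
\begin{itemize}
\item $0$ on $\mathbf{1}$;
\item $1/v$ on the directions in $\mathbf{1}^\perp$ orthogonal to $\hat x$;
\item $\frac1v-\frac{\Vert\hat x\Vert^2}{nv^3}=\frac1v-\frac{\sigma^2}{v^3}=\frac{\ep}{v^3}\in[0,1/v]$ on $\hat x$.
\end{itemize}
Therefore $\Vert M\Vert\le 1/v(x)\le 1/\sqrt{\ep}$. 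Using $\Vert D_\gamma\Vert=\Vert\gamma\Vert_\infty$, this gives
\[
\Vert J_{\mathrm{LN}}(x)\Vert\le \frac{\Vert\gamma\Vert_\infty}{\sqrt{\ep}},
\]
which is stronger than the claimed bound $\frac{\Vert\gamma\Vert_\infty}{\sqrt{\ep}}\bigl(1+\frac1{\sqrt n}\bigr)$. The only place I expect care to be needed is the rank-one term, where the naive estimate $\Vert\hat x\Vert^2/(nv^3)\le 1/v$ is only a fallback. The cancellation $1/v-\sigma^2/v^3=\ep/v^3$ is what gives the sharp constant.

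\textbf{Step 3 (Lipschitz).} Since $\mathbb{R}^n$ is convex and $\mathrm{LN}_{\gamma,\beta}$ is $C^1$, the fundamental theorem of calculus along the segment from $x$ to $y$ gives
\[
\mathrm{LN}(y)-\mathrm{LN}(x)=\int_0^1 J_{\mathrm{LN}}\bigl(x+s(y-x)\bigr)(y-x)\,ds .
\]
Taking norms and applying the uniform bound from Step 2 shows that $\mathrm{LN}_{\gamma,\beta}$ is Lipschitz with constant at most $\frac{\Vert\gamma\Vert_\infty}{\sqrt{\ep}}\bigl(1+\frac1{\sqrt n}\bigr)\le\frac{2}{\sqrt{\ep}}\Vert\gamma\Vert_\infty$. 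This constant does not depend on $x$, on the depth index, or on $\beta$. The hypothesis $\Vert\beta\Vert_\infty<\infty$ is used only to ensure that the map is well-defined with finite values.
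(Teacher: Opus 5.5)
Your Step 1 is the same chain-rule computation the paper performs, via $\hat x = Px$ and $\nabla v = \hat x/(nv)$. Step 2 takes a genuinely different route, and it is the better one.

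The paper bounds the two terms of $D_\gamma\bigl(P/v - \hat x\hat x^T/(nv^3)\bigr)$ separately by the triangle inequality. It uses $\Vert P\Vert \le 1 + 1/\sqrt{n}$ and $\Vert \hat x\hat x^T\Vert = n\sigma(x)^2 \le n v(x)^2$. As you anticipated, this only yields $\Vert\gamma\Vert_\infty(2+1/\sqrt{n})/v(x)$, or $2\Vert\gamma\Vert_\infty/v(x)$ with the exact value $\Vert P\Vert = 1$. The paper's final step to $\frac{\Vert\gamma\Vert_\infty}{\sqrt{\varepsilon}}(1+1/\sqrt{n})$ silently drops the rank-one term's contribution, and that inequality fails when $\hat x$ is near $0$. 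So the paper's argument establishes neither the displayed bound nor the stated Lipschitz constant $2\Vert\gamma\Vert_\infty/\sqrt{\varepsilon}$.

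Your diagonalization uses the cancellation $1/v - \sigma^2/v^3 = \varepsilon/v^3 \in [0, 1/v]$ along $\hat x$; equivalently, $0 \preceq \hat x\hat x^T/(nv^3) \preceq P/v$. This gives $\Vert J_{\mathrm{LN}}(x)\Vert \le \Vert\gamma\Vert_\infty/v(x) \le \Vert\gamma\Vert_\infty/\sqrt{\varepsilon}$, which implies both claims of the lemma. The bound is also sharp: for $\gamma = c\mathbf{1}$, $\hat x = 0$, and $n \ge 2$, the Jacobian is $cP/\sqrt{\varepsilon}$, whose norm is exactly $\Vert\gamma\Vert_\infty/\sqrt{\varepsilon}$.

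The paper's term-by-term approach is shorter and needs no structural information about how the two terms interact, but it cannot reach the constant it states. Yours costs one spectral observation and closes that gap. Your Step 3 mean-value argument also supplies the Jacobian-to-Lipschitz passage that the paper only asserts.
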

\begin{proof}
We denote $D_{\gamma} = \mathrm{diag}(\gamma)$, a diagonal matrix with $\gamma$ on its diagonal line, and the orthogonal projector $P := I-\frac{1}{n}\mathbf{1}\mathbf{1}^{T}$ onto the mean-zero subspace. Then by the definition, we can have
\begin{equation}
\hat x = Px \quad\text{ and }\quad \sigma(x)^2 = \frac{1}{n}\Vert Px \Vert^2 = \frac{1}{n}xPx^T,
\end{equation}
where we note that $P^2 = P$ by the symmetric and idempotent. Hence, LN function follows
\begin{equation}
\mathrm{LN}_{\gamma,\,\beta}(x)  =  \frac{1}{v(x)}D_\gamma P x + \beta, \quad v(x) = \sqrt{\sigma(x)^2 + \ep}.
\end{equation}
Now we differentiate to get
\begin{align}
J_{\mathrm{LN}}(x) & = D_\gamma\left( \frac{1}{v(x)}P - \frac{1}{v(x)^2}\left(\nabla v(x)\right)\,(Px)^T \right)
= D_\gamma\left(\frac{1}{v(x)} P-\frac{1}{n v(x)^3} \hat{x}\hat{x}^T\right),
\end{align}
where we can compute
\begin{equation}
\nabla v(x) = \frac{1}{2v(x)}\nabla\sigma(x)^2=\frac{1}{2v(x)} \cdot \frac{2}{n}Px = \frac{1}{dv(x)}Px.
\end{equation}
Moreover, we compute the norm of Jacobian, by the definition, we have $\Vert D_\gamma\Vert =\Vert \gamma\Vert_\infty = \sup_{i \in [h]} |\gamma_i|$ and $\Vert \hat x\hat x^T\Vert =\Vert \hat x\Vert ^2=n\sigma(x)^2 \leq n v(x)^2$. Recall $P = I - \frac{1}{n}\mathbf{1}\mathbf{1}^T$, we can have $\Vert P\Vert \leq 1 + \frac{1}{\sqrt{n}}$. So we can get
\begin{align}
\Vert J_{\mathrm{LN}}(x)\Vert & = \left\Vert D_\gamma\left(\frac{1}{v(x)} P-\frac{1}{n v(x)^3} \hat{x}\hat{x}^T\right) \right\Vert
\nonumber
\\
& \leq \Vert \gamma\Vert _\infty\left(\frac{1 + 1/\sqrt{n}}{v(x)} + \frac{v(x)^2}{v(x)^3}\right)
\leq \frac{\Vert \gamma\Vert _\infty}{\sqrt{\ep}}\left(1 + \frac{1}{\sqrt{n}}\right).
\end{align}
Thus $\mathrm{LN}_{\gamma,\,\beta}$ is Lipschitz. And the proof is complete.
\end{proof}
In our Jacobian bound~\eqref{eqn:ln:lip}, this norm bound of Jacobian is uniformly controlled by $\Vert \gamma\Vert _\infty/\sqrt{\ep}$. We notice from LN function defined in~\eqref{eqn:layer:norm}, LayerNorm plays the role of normalization over the training, hence, we have $\Vert \gamma \Vert_{\infty}/\sqrt{\ep}$ is a moderate term in the sense that for any small $\ep$, the gain parameter $\Vert \gamma \Vert_{\infty}$ is relatively small.


We assume weights are initialized with bounded variance and  the spectral norms of $\Vert W_d\Vert $ are controlled under the \autoref{assump:second-moment}.
\begin{assumption}[The boundedness of second moment]
\label{assump:second-moment}
For each residual block $d\in\{0,\dots,D-1\}$, let $W_d$ be the affine weight matrix in $F_d(h)=W_d h+b_d$. Assume $\mathbb{E}\left\Vert W_d \right\Vert^2 : = C_{\Vert W_d \Vert} < \infty$.
\end{assumption}
This assumption is mild and natural in the setting of initialization and normalization and it is standard in random-matrix-based analyses of deep networks, see~\citet{vershynin2018high,hanin2019finite,xiao2018dynamical,bartlett2017spectrally,lu2017expressive,ruthotto2020deep,telgarsky2016benefits,yang2020feature}. For example,~\citet{vershynin2018high,hanin2019finite} initializes weights by the computation per neuron distributed over its inputs, in this case, the spectral norm concentrates around $\mathcal{O}(1)$ with high probability, and $\mathbb{E} \Vert W_l \Vert^2<\infty$ holds automatically. 


\begin{corollary}[Jacobian norm of $\psi$]
\label{cor:ln:lrelu:affine}
At each layer $h_d$, $d = 0,1,\ldots,D-1$, let $\mathrm{LN}_{\gamma,\,\beta}:\mathbb{R}^{n_d} \rightarrow \mathbb{R}^{n_d}$ be per-sampled LayerNorm, $\phi_d=\mathrm{LeakyReLU}_a$ with $a\in(0,1)$, and let $F(z)=W_d z + b_d$ be affine with $W_d\in\mathbb{R}^{m_d \times n_d}$, $b_d\in\mathbb{R}^{m_d}$ such that $\Vert W_d \Vert < \infty$ and $\Vert b_d\Vert < \infty$. $\psi$ is the composed function defined early in~\eqref{def:psi}. Then, for $h_d$, $d = 0,1,\ldots,D-1$, it holds
\begin{equation}
\label{eqn:Jacobian:sharp}
\Vert J_\psi(h_d)\Vert \leq \Vert W_d\Vert \frac{\Vert \gamma_d \Vert_\infty}{\sqrt{\ep}}\left(1+\frac{1}{\sqrt{n_d}}\right), \quad d = 0,1,\ldots, D-1.
\end{equation}
\end{corollary}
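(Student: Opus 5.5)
The plan is to apply the chain rule to $\psi = F \circ \phi_d \circ \mathrm{LN}_{\gamma_d,\beta_d}$ at $h_d$ and use submultiplicativity of the operator norm. Write $z = \mathrm{LN}_{\gamma_d,\beta_d}(h_d)$ and $u = \phi_d(z)$. Where $\psi$ is differentiable,
\begin{equation*}
J_\psi(h_d) = J_F(u)\, J_{\phi_d}(z)\, J_{\mathrm{LN}}(h_d) = W_d \, \mathrm{diag}\bigl(\phi_d'(z_1),\ldots,\phi_d'(z_{n_d})\bigr)\, J_{\mathrm{LN}}(h_d),
\end{equation*}
since the affine map $F$ has constant Jacobian $W_d$ and the bias $b_d$ plays no role. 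Consequently
\begin{equation*}
\Vert J_\psi(h_d)\Vert \leq \Vert W_d\Vert \, \Vert J_{\phi_d}(z)\Vert \, \Vert J_{\mathrm{LN}}(h_d)\Vert .
\end{equation*}

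The three factors are handled in turn. The factor $\Vert W_d\Vert$ is finite by hypothesis and is kept as is. For the LeakyReLU factor, \eqref{eqn:lrelu} gives a diagonal Jacobian whose entries lie in $\{a,1\}$. Since $a\in(0,1)$, its operator norm is $\max_i |\phi_d'(z_i)| \le 1$. For the LayerNorm factor, Lemma~\ref{lemma:ln:jacobian} applied with $n = n_d$ and gain $\gamma_d$ gives
\begin{equation*}
\Vert J_{\mathrm{LN}}(h_d)\Vert \le \frac{\Vert \gamma_d\Vert_\infty}{\sqrt{\ep}}\left(1+\frac{1}{\sqrt{n_d}}\right),
\end{equation*}
uniformly in $h_d$. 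Multiplying the three bounds yields \eqref{eqn:Jacobian:sharp}. Nothing in this argument depends on $d$ beyond the layer-specific quantities $W_d$, $\gamma_d$ and $n_d$, so the bound holds for every $d=0,\ldots,D-1$.

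The main obstacle is that LeakyReLU is not differentiable at coordinates where $z_i = 0$. I would handle this in one of two equivalent ways, both giving the same constant. The first is to note that the nondifferentiable set has Lebesgue measure zero. The second is to interpret $J_\psi$ as any element of the Clarke generalized Jacobian. In the Clarke interpretation the diagonal entries of $J_{\phi_d}$ lie in $[a,1]$, so the bound $\Vert J_{\phi_d}\Vert \le 1$ persists.

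Alternatively, one can avoid Jacobians altogether by proving the Lipschitz statement. LeakyReLU is $1$-Lipschitz because $|\phi(s)-\phi(t)|\le|s-t|$ coordinatewise. LayerNorm is Lipschitz with the constant from Lemma~\ref{lemma:ln:jacobian}, by the mean value inequality along segments. Composing the Lipschitz constants then bounds $J_\psi$ wherever it exists.
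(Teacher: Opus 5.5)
Your proposal is correct and follows the paper's proof essentially step for step. Both use the chain rule for $\psi = F\circ\phi_d\circ\mathrm{LN}_{\gamma_d,\beta_d}$, submultiplicativity of the operator norm, $J_F = W_d$, $\Vert J_{\phi_d}\Vert\le 1$, and Lemma~\ref{lemma:ln:jacobian} for the LayerNorm factor; the paper handles the LeakyReLU kink by choosing the subgradient at $0$ in $[a,1]$, which matches your Clarke-type reading, and your measure-zero and Lipschitz-composition remarks add rigor rather than a different route.
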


\begin{proof}
Denoting by $W$ the weighted matrix of the FNN such that $F(h) = Wh + b$, then the Jacobian is $J_F(h) = W$ with $(W)_{ij} = \frac{\partial (F(h))_i}{\partial h_j}=W_{ij}$. Moreover, the Jacobian for LeakyReLU is $J_\phi(u)=\mathrm{diag}(\phi'(u))$ follows its definition in~\eqref{eqn:lrelu} with entries in $\{a,1\}$ where the subgradient at $0$ is chosen in $[a,1]$. Hence, $\left\Vert J_\phi(u) \right\Vert \leq \max\{1,a\} = 1$ for $a\in(0,1)$. We finally complete the proof by using chain rule:
\begin{align}
\Vert J_\psi(h) \Vert = \Vert J_F J_{\phi}(\mathrm{LN}(h)) J_{\mathrm{LN}}(h)\Vert
\leq \Vert J_F \Vert \Vert J_{\phi}(\mathrm{LN}(h)) \Vert \Vert J_{\mathrm{LN}}(h) \Vert
\leq \Vert W \Vert \frac{\Vert \gamma\Vert_\infty}{\sqrt{\ep}}\left(1+\frac{1}{\sqrt{n}}\right),
\label{eqn:ln:lrelu:affine}
\end{align}
where $J_F = W$, $\left\Vert J_\phi(u) \right\Vert \leq \max\{1,a\} = 1$ and \autoref{lemma:ln:jacobian}  to get Jacobian of LN. The proof is complete.
\end{proof}
Considering the full input to output Jacobians after $D$ blocks is the product as the following.
\begin{equation}
\frac{\partial}{\partial h_0}h_D  =  \frac{\partial}{\partial h_{D-1}}h_D \cdot \frac{\partial}{\partial h_{D-2}}h_{D-1} \cdots \frac{\partial}{\partial h_{0}}h_1= \prod_{d=1}^D \Big(I+ \,J_\psi(h_{d-1})\Big).
\end{equation}
Then we show the stability of ResNet+ by studying the norm of $\prod_{d=1}^D \Big(I+ \,J_\psi(h_{d-1})\Big)$. 

\begin{theorem}[The stability of gradients]
Under \autoref{assump:second-moment}, there exists $s \geq e > r > 0$ holding
\begin{equation}
0 < r\Vert \nabla_{h_D}\mathcal{L} \Vert \leq \Vert \nabla_{h_0}\mathcal{L} \Vert < s\Vert \nabla_{h_D}\mathcal{L} \Vert <\infty, \quad\text{with positive probability.}
\end{equation}
There is no exploding or vanishing in gradients updates through backpropagation updates.
\end{theorem}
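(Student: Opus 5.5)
We work from the backpropagation identity $\nabla_{h_0}\mathcal{L} = \big(\prod_{d=1}^D (I+J_\psi(h_{d-1}))\big)^{\top}\nabla_{h_D}\mathcal{L}$. The claim reduces to two-sided bounds on the singular values of the product matrix $M:=\prod_{d=1}^D (I+J_\psi(h_{d-1}))$. Specifically, we will show
$$\sigma_{\min}(M)\,\Vert \nabla_{h_D}\mathcal{L}\Vert \le \Vert \nabla_{h_0}\mathcal{L}\Vert \le \Vert M\Vert\,\Vert \nabla_{h_D}\mathcal{L}\Vert .$$
It then suffices to find an event of positive probability on which $\sigma_{\min}(M)\ge r>0$ and $\Vert M\Vert< s<\infty$. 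Write $\delta_d := \Vert W_{d-1}\Vert \frac{\Vert \gamma_{d-1}\Vert_\infty}{\sqrt{\ep}}\big(1+\frac{1}{\sqrt{n_{d-1}}}\big)$. By Corollary~\ref{cor:ln:lrelu:affine}, which in turn uses Lemma~\ref{lemma:ln:jacobian}, we have $\Vert J_\psi(h_{d-1})\Vert\le \delta_d$ uniformly in the hidden state.

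\textbf{Upper bound.} Submultiplicativity gives
$$\Vert M\Vert \le \prod_{d}(1+\delta_d)\le \exp\Big(\sum_d \delta_d\Big).$$
So on the event $E:=\{\sum_{d=1}^D \delta_d \le c\}$ we get $\Vert M\Vert\le e^{c}$.

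\textbf{Lower bound.} If every $\delta_d<1$, then each factor $I+J_\psi$ is invertible by a Neumann series. Its smallest singular value satisfies $\sigma_{\min}(I+J_\psi)\ge 1-\delta_d$. Multiplying, $\sigma_{\min}(M)\ge \prod_d(1-\delta_d)$. On the event $\{\delta_d\le 1/2\ \forall d,\ \sum_d\delta_d\le c\}$, the inequality $\log(1-x)\ge -2x$ for $x\le 1/2$ gives $\sigma_{\min}(M)\ge e^{-2c}$.

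We choose $c=1$, which gives $r=e^{-2}$ and $s=e$ (or any $s> e$, to obtain the strict inequality the statement demands). This is consistent with $s\ge e>r>0$. Nonzeroness of $\nabla_{h_D}\mathcal{L}$ is implicit; where it is zero the chain is trivial.

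\textbf{Positive probability.} By Markov's inequality and Assumption~\ref{assump:second-moment},
$$\mathbb{P}(\Vert W_d\Vert > t_d)\le C_{\Vert W_d\Vert}/t_d^2 .$$
Set $K_d:=\frac{\Vert\gamma_d\Vert_\infty}{\sqrt\ep}(1+n_d^{-1/2})$, treating $\gamma_d$ as fixed and bounded as in Lemma~\ref{lemma:ln:jacobian}, and pick thresholds $t_d = \min\{1/(2K_d),\,1/(DK_d)\}$. Then $E$ contains $\bigcap_d\{\Vert W_d\Vert\le t_d\}$.

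This final step is the main obstacle. A union bound only gives a positive lower bound on the probability if $\sum_d C_{\Vert W_d\Vert}/t_d^2<1$, and that need not hold, since the assumption controls second moments rather than small-ball behaviour. The first thing we will try is to use independence of the blocks' weights at initialization. The probability of the intersection is then $\prod_d\mathbb{P}(\Vert W_d\Vert\le t_d)$, and it suffices that each factor be positive. That holds whenever the law of $W_d$ puts mass near $0$, as it does for Gaussian or other continuous initializations with full support.

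If independence is unavailable, the fallback is to allow $c$, and hence $s=e^{c}$ and $r=e^{-2c}$, to depend on $D$. We would choose $c$ large enough that Markov's inequality yields a positive probability for the upper bound. The lower bound is handled separately via the invertibility condition $\delta_d<1$, again on a positive-probability event. We would record explicitly which additional mild distributional hypothesis on $W_d$ (full support near $0$, or independence across blocks) is being used.
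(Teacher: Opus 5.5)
Your argument is correct, modulo the extra distributional hypothesis you flag, and it shares the paper's deterministic core. Corollary~\ref{cor:ln:lrelu:affine} gives $\Vert J_d\Vert\le\Vert W_d\Vert K_d$ with $K_d=\Vert\gamma_d\Vert_\infty(1+n_d^{-1/2})/\sqrt{\varepsilon}$. From there, $\sigma_{\min}(I+J_d)\ge 1-\Vert J_d\Vert$ yields the lower bound and $\Vert\prod_d(I+J_d)\Vert\le\exp(\sum_d\Vert J_d\Vert)$ yields the upper one.

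The difference lies in the probabilistic step. The paper handles the two directions on separate events. It places the lower bound on $\{\max_d\Vert W_d\Vert K_d\le 1-r^{1/D}\}$ and simply asserts that this event has positive probability under Gaussian initializations. It obtains the upper bound through Cauchy--Schwarz and Markov, giving $\mathbb{P}(\Vert\prod_d(I+J_d)\Vert\ge s)\le(\sum_d C_{\Vert W_d\Vert})(\sum_d K_d^2)/(\ln s)^2$. That is the only place Assumption~\ref{assump:second-moment} is used. The two events are never intersected, even though the theorem asserts both inequalities at once.

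You instead put both bounds on a single small-ball event. Simultaneity is then automatic and the constants are explicit ($r=e^{-2}$, any $s>e$), but Assumption~\ref{assump:second-moment} plays no role. The paper's Markov route buys a quantitative high-probability tail for the upper inequality, but it can never produce the lower one.

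You are right that a second-moment bound cannot force mass near $0$. A deterministic $W_d$ with $\Vert W_d\Vert K_d>1$ satisfies the assumption and makes your event, and the paper's, null. Both proofs therefore need something like independent Gaussian weights, and you should state this as an explicit hypothesis.

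Your fallback is unnecessary and does not remove that hypothesis. Whenever $\delta_d<1$ for all $d$, you already have $\Vert M\Vert<2^D$. Positivity of that event still requires the same small-ball condition.

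Finally, working with $\sigma_{\min}(M)$, as you do, is what bounding $\Vert\nabla_{h_0}\mathcal{L}\Vert$ from below actually requires. The paper states its product bound for the operator norm, which does not give that lower bound.
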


\begin{proof}
\textbf{Nonvanishing gradients.} Denoting matrices $J_d:=J_{\psi}(h_d)$, $d=0,\ldots,d-1$, noticing the smallest eigenvalue $\sigma_{\min}(I + J_d) \geq 1 - \Vert J_d\Vert$, we can have 
$
\left\Vert \prod_{d=0}^{D-1} \Big(I+ \,J_\psi(h_{d-1})\Big)\right\Vert \geq \prod_{d=0}^{D-1}\left( 1 - \Vert J_d \Vert \right),
$
then for $0 < r < 1$, the event 
$$
\left\{\max _d\left\Vert J_d\right\Vert  \leq 1-r^{1 / D}\right\} \subseteq \left\{\prod_{d=0}^{D-1}\left(1-\left\Vert J_d\right\Vert\right) \geq r\right\}.
$$
Using \autoref{cor:ln:lrelu:affine}, we can compute 
\begin{align}
\mathbb{P}\left(\prod_{d=0}^{D-1}\left( 1 - \Vert J_d \Vert \right) \geq r \right) & \geq \mathbb{P}\left(\max _d\left\Vert J_d\right\Vert \leq 1-r^{1/D}\right)
\nonumber
\\
& \geq \mathbb{P}\left(\max_d\left(\Vert W_d \Vert \frac{\Vert \gamma_d\Vert_\infty}{\sqrt{\ep}}\left(1+\frac{1}{\sqrt{n_d}}\right)\right) \leq 1-r^{1/D} \right),
\end{align}
where the probability is taken over the distribution of $W_d$ and $\gamma_d$ denotes the LayerNorm gain parameter and $\ep > 0$ is the stabilization constant, $\Vert\gamma_d \Vert_{\infty}/\sqrt{\ep}$ is moderate. In particular, under the common continuous initializations of $W_d$ such as Gaussian, the setup in~\citet{he2016deep}, this usually does have positive probability.
Hence, we can compute from~\eqref{eqn:loss:change}, the backpropagation updates follow
\begin{equation}
\Vert \nabla_{h_0}\mathcal{L} \Vert \geq \prod_{d=0}^{D-1}(1 - \Vert J_d \Vert)\Vert \nabla_{h_D}\mathcal{L} \Vert \geq r\Vert \nabla_{h_D}\mathcal{L} \Vert > 0, \quad\text{with positive probability.}
\end{equation}

\textbf{Nonexploding gradients.} Moreover, we can compute
\begin{equation}
\left\Vert \prod_{d=0}^{D-1} (I+ J_d) \right\Vert \leq \prod_{d=0}^{D-1} \Vert I+ J_d \Vert \leq \prod_{d=0}^{D-1} \left(1+ \Vert J_l \Vert \right)
 \leq \exp\left( \sum_{d=0}^{D-1} \Vert J_d \Vert \right),
\end{equation}
where we use $1+x \leq e^x$ to get the last inequality. Thus for the threshold $s>0$,
\begin{align}
\mathbb{P}\left(\left\Vert \prod_{d=0}^{D-1} (I+ J_d) \right\Vert \geq s \right)
& \leq \mathbb{P}\left(\exp\left( \sum_{d=0}^{D-1} \Vert J_d \Vert \right) \geq s \right)
\nonumber
\\
& \leq \mathbb{P}\left(\sum_{d=0}^{D-1} \Vert W_d \Vert \frac{\Vert \gamma_d \Vert_\infty}{\sqrt{\ep}}\left(1+\frac{1}{\sqrt{n_d}}\right) \geq \ln s \right)
\nonumber
\\
& \leq \mathbb{P}\left(\left(\sum_{d=0}^{D-1} \Vert W_d \Vert^2\right)^{1/2}\left(\sum_{d=0}^{D-1}\frac{\Vert \gamma_d\Vert^2_\infty}{\ep}\left(1+\frac{1}{\sqrt{n_d}}\right)^2 \right)^{1/2} \geq \ln s \right)
\nonumber
\\
& \leq \mathbb{E}\left[\sum_{d=0}^{D-1} \left\Vert W_d \right\Vert^2\right]\frac{1}{(\ln s)^2} \cdot \sum_{d=0}^{D-1}\frac{\Vert \gamma_d\Vert^2_\infty}{\ep}\left(1+\frac{1}{\sqrt{n_d}}\right)^2
\nonumber
\\
& \leq \left(\sum_{d=0}^{D-1} C_{\Vert W_d \Vert}\right) \cdot \frac{1}{(\ln s)^2} \cdot \sum_{d=0}^{D-1}\frac{\Vert \gamma_d\Vert^2_\infty}{\ep}\left(1+\frac{1}{\sqrt{n_d}}\right)^2,
\end{align}
where we use~\ref{eqn:ln:lrelu:affine} to get the second inequality, and then we use Cauchy-Schwartz inequality to get the third inequality and finally we use Markov's inequality and \autoref{assump:second-moment}, where the second moment is bounded, to get the last two inequalities. Hence, we get from ~\eqref{eqn:loss:change}, the backpropagation updates follow
\begin{equation}
\Vert \nabla_{h_0}\mathcal{L} \Vert \leq \left\Vert \prod_{d=0}^{D-1} (I + J_d) \right\Vert \Vert \nabla_{h_D}\mathcal{L} \Vert  < s, \quad\text{with positive probability.}
\end{equation}
The proof is complete.
\end{proof}

\FloatBarrier

\section{Supplementary Depth and Portfolio Displays}
\label{appendix:supplementary_depth_figures}

\subsection{Forecast Ranking Diagnostic}
\label{appendix:s1d7_trainability_diagnostics}

Table~\ref{tab:S1D7DiagnosticSummary} compares NN and ResNet at depths 3 and 7 under width schedule 1.
Forecast dispersion determines whether the fitted model can distinguish stocks for portfolio sorting.
The comparison holds the width schedule and training protocol fixed while varying architecture and depth.

\input{S1D7DiagnosticSummary.tex}

\FloatBarrier

\subsection{Equal-Weighted Portfolios}
\label{appendix:equal_weighted_portfolios}

Table~\ref{tab:EqualWeightedRobustness} repeats the paired portfolio tests using equal-weighted portfolios.

\input{EqualWeightedRobustness.tex}

\FloatBarrier

\clearpage
\begin{landscape}
\subsection{Width and Depth Profiles}

Tables~\ref{tab:PairedSpecificationDepthProfile} and~\ref{tab:ResNetVsNNDepthProfile} report results by width and depth for the completed specifications in the grouped summaries and heatmaps.

\input{PairedSpecificationDepthProfile.tex}

\input{ResNetVsNNDepthProfile.tex}
\end{landscape}

\FloatBarrier

\clearpage
\section{Model Scale Reference}
\label{appendix:parameter_scale}

Figure~\ref{fig:ParameterScaleGap} shows the fitted parameter ranges for our four models alongside familiar asset pricing specifications and published neural networks.

\begin{figure}[!htbp]
\centering
\includegraphics[width=\textwidth]{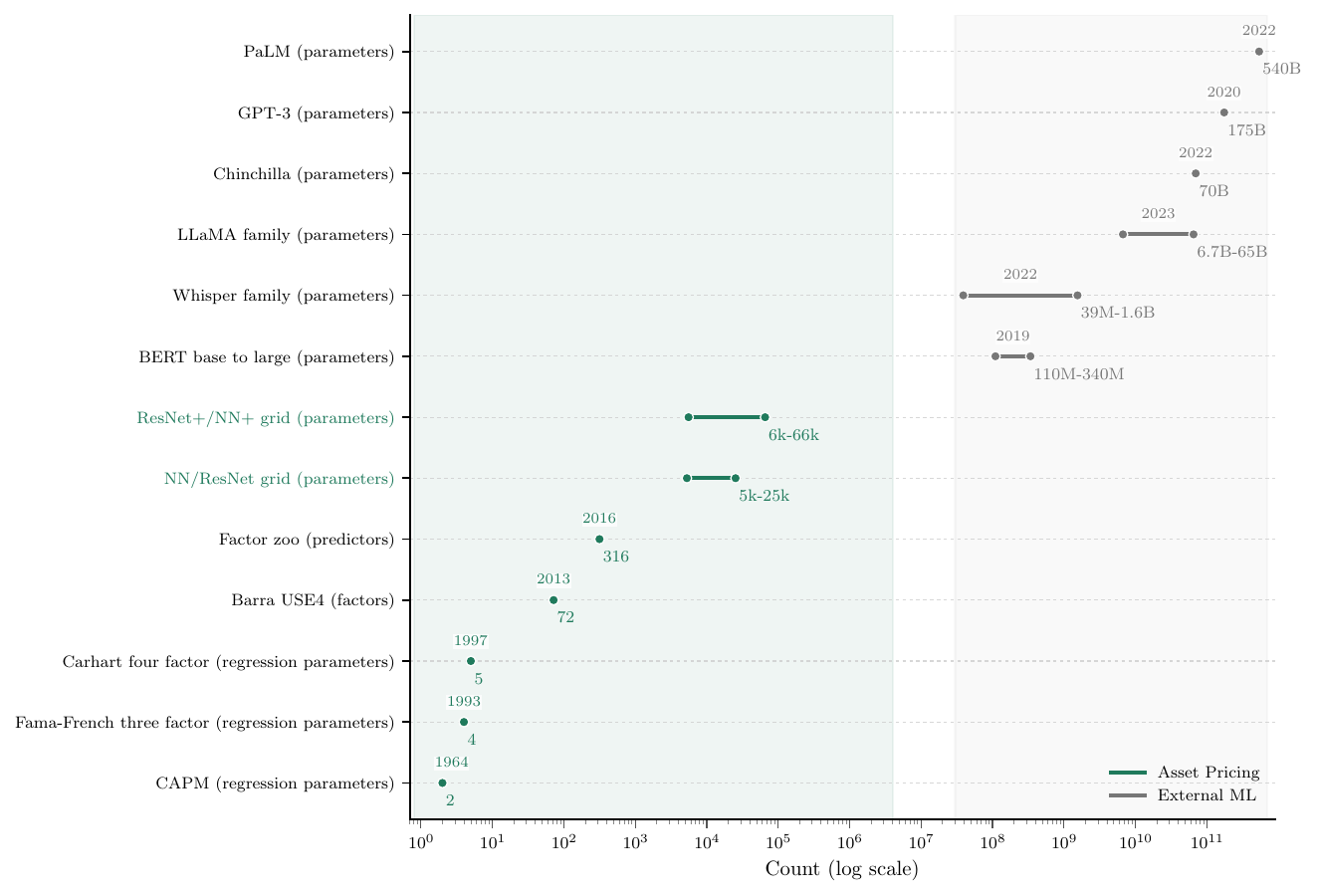}
\caption{
\textbf{Model Scale Reference for Asset Pricing Specifications.}
\newline\small
The horizontal axis uses a logarithmic scale.
CAPM, the Fama--French model, and the Carhart model count regression parameters; Barra USE4 counts factors, and the factor zoo counts predictors.
The remaining rows count neural-network parameters.
The models are from \citet{sharpe1964CapitalAssetPrices,FamaFrench1993,Carhart1997,msci2013BarraUSE4,harvey2016CrossSectionExpectedReturns,devlin2019BERT,radford2023RobustSpeechRecognition,touvron2023LLaMA,hoffmann2022TrainingComputeOptimal,brown2020LanguageModelsAre,chowdhery2022PaLM}
}
\label{fig:ParameterScaleGap}
\end{figure}

\FloatBarrier

\bibliographystyle{chicago}
\bibliography{ref}
\end{document}

%% file: ExperimentalDesignGrid.tex
\begin{table}[htbp]
\centering
\scriptsize
\setstretch{1.0}
\renewcommand{\arraystretch}{1.05}
\setlength{\tabcolsep}{0pt}
\captionsetup{skip=4pt,justification=raggedright,singlelinecheck=false,labelsep=newline,labelfont=bf,textfont=bf}
\caption{\textbf{Empirical Design}}
\label{tab:ExperimentalDesignGrid}
\begin{tikzpicture}
\node[inner sep=0pt, outer sep=0pt] (designmatrix) {%
\begin{tabular}{@{}>{\raggedright\arraybackslash}p{\dimexpr14\textwidth/62\relax}@{}>{\raggedright\arraybackslash}p{\dimexpr15\textwidth/62\relax}@{\hspace{6pt}}>{\centering\arraybackslash}p{\dimexpr16\textwidth/62\relax}@{}>{\centering\arraybackslash}p{\dimexpr16\textwidth/62\relax}@{}}
\toprule
 &  & ResNet vs NN & ResNet+ vs NN+ \\
\midrule
\multirow{7}{*}{Data} & Forecast Target & \multicolumn{2}{@{}>{\centering\arraybackslash}p{\dimexpr32\textwidth/62\relax}@{}}{Excess Return\textsuperscript{a}} \\
 & Forecast Horizon & \multicolumn{2}{@{}>{\centering\arraybackslash}p{\dimexpr32\textwidth/62\relax}@{}}{One Month} \\
	 & Predictors & \multicolumn{2}{@{}>{\centering\arraybackslash}p{\dimexpr32\textwidth/62\relax}@{}}{153 JKP + 8 Macro\textsuperscript{b}} \\
	 & Sample & \multicolumn{2}{@{}>{\centering\arraybackslash}p{\dimexpr32\textwidth/62\relax}@{}}{Jan 1963--Dec 2023} \\
	 & Universe & \multicolumn{2}{@{}>{\centering\arraybackslash}p{\dimexpr32\textwidth/62\relax}@{}}{U.S. Common Stocks} \\
	 & Forecast Origins & \multicolumn{2}{@{}>{\centering\arraybackslash}p{\dimexpr32\textwidth/62\relax}@{}}{Jan 1987--Nov 2023} \\
	 & Realized Returns & \multicolumn{2}{@{}>{\centering\arraybackslash}p{\dimexpr32\textwidth/62\relax}@{}}{Feb 1987--Dec 2023} \\
\arrayrulecolor{gray!45}\midrule[0.25pt]\arrayrulecolor{black}
\multirow{3}{*}{Training} & Training Window & \multicolumn{2}{@{}>{\centering\arraybackslash}p{\dimexpr32\textwidth/62\relax}@{}}{Annually Expanding} \\
 & Loss Function & \multicolumn{2}{@{}>{\centering\arraybackslash}p{\dimexpr32\textwidth/62\relax}@{}}{Mean Squared Error} \\
 & Forecast Averaging & \multicolumn{2}{@{}>{\centering\arraybackslash}p{\dimexpr32\textwidth/62\relax}@{}}{10 Ensemble Members} \\
\arrayrulecolor{gray!45}\midrule[0.25pt]\arrayrulecolor{black}
\multirow{2}{*}{Variation} & Width Schedules & \multicolumn{2}{@{}>{\centering\arraybackslash}p{\dimexpr32\textwidth/62\relax}@{}}{{[32]}; [32, 16]; [32, 16, 8]; [32, 16, 8, 4]} \\
 & Depths & \multicolumn{2}{@{}>{\centering\arraybackslash}p{\dimexpr32\textwidth/62\relax}@{}}{1--20\textsuperscript{c}} \\
\arrayrulecolor{gray!45}\midrule[0.25pt]\arrayrulecolor{black}
\multirow{3}{*}{Model} & Feedforward Model & NN\textsuperscript{d} & NN+\textsuperscript{d} \\
 & Residual Model & ResNet\textsuperscript{d} & ResNet+\textsuperscript{d} \\
 & Residual Shortcut Path & Identity & Identity / Projection \\
\bottomrule
\end{tabular}%
};
\draw[line width=\arrayrulewidth] ([xshift=\dimexpr29\textwidth/62\relax]designmatrix.north west) -- ([xshift=\dimexpr29\textwidth/62\relax]designmatrix.south west);
\end{tikzpicture}
\par\vspace{4pt}
\begin{minipage}{\textwidth}
\setstretch{0.96}\footnotesize Centered entries apply to both comparison sets; side entries are specific to one comparison set.
\textsuperscript{a} Excess return is measured relative to the 30-day Treasury bill return.
\textsuperscript{b} JKP denotes the firm characteristics from \citet{jensen2023ThereReplicationCrisisa}; Macro denotes the eight monthly predictors following \citet{welch2008ComprehensiveLookEmpirical}.
\textsuperscript{c} Depth reports hidden layer counts; width schedules 1 to 4 start at depths 1 to 4 and all extend through 20.
\textsuperscript{d} NN is the \citet{gu2020EmpiricalAssetPricing} feedforward neural network benchmark; NN+ is the feedforward counterpart to ResNet+; ResNet and ResNet+ add a residual learning mechanism to the corresponding feedforward specifications.
\end{minipage}
\end{table}

%% file: FinalBaselinePerformance.tex
\begin{table}[!htb]
\centering
\scriptsize
\setstretch{1.0}
\renewcommand{\arraystretch}{1.06}
\setlength{\tabcolsep}{2pt}
\newcolumntype{Q}[1]{>{\hsize=#1\hsize\linewidth=\hsize\Centering\arraybackslash}X}
\newcolumntype{D}[1]{>{\hsize=#1\hsize\linewidth=\hsize\Centering\arraybackslash}X}
\newcommand{\finalbaselinepaneldivider}{\tikz[baseline=-0.6ex]{\draw[gray!65,dash pattern=on 1.2pt off 1.2pt,line width=0.25pt] (0,-0.9ex) -- (0,1.7ex);}}
\captionsetup{skip=3pt,justification=raggedright,singlelinecheck=false,labelsep=newline,labelfont=bf,textfont=bf}
\caption{\textbf{Long-Short Portfolio Performance by Depth Group and Model Family}}
\label{tab:FinalBaselinePerformance}
\begin{tabularx}{\linewidth}{@{}>{\Centering\arraybackslash}p{44pt}@{\hspace{5pt}\finalbaselinepaneldivider\hspace{5pt}}cQ{1.05}Q{0.95}Q{0.85}D{1.10}Q{0.85}Q{0.95}Q{1.00}Q{1.25}@{}}
\toprule
 & \mbox{Model} & \mbox{Mean} & \mbox{SD} & \mbox{SR} & \mbox{$\Delta$ SR} & \mbox{SR(EW)} & \mbox{$R^2_{\mathrm{OOS}}$} & \mbox{FF5 $\alpha$} & \mbox{Turnover} \\
\midrule[0.55pt]
\multirow{4}{*}{Shallow} & NN & 1.86 & 6.89 & 0.94 & \multirow{2}{=}{\centering\hspace{4pt}0.009} & 2.31 & 0.21 & \ensuremath{2.00^{***}} & 212.22 \\
 & ResNet & 1.92 & 7.02 & 0.95 &  & 2.40 & 0.22 & \ensuremath{2.04^{***}} & 211.75 \\
\arrayrulecolor{gray!45}
\cmidrule[0.25pt](l{2pt}){2-10}
\arrayrulecolor{black}
 & NN+ & 3.15 & 5.80 & 1.84 & \multirow{2}{=}{\centering\hspace{4pt}0.085} & 3.28 & 0.24 & \ensuremath{3.03^{***}} & 257.53 \\
 & ResNet+ & 3.36 & 6.03 & 1.92 &  & 3.53 & 0.30 & \ensuremath{3.28^{***}} & 255.68 \\
\addlinespace[2pt]
\midrule[0.55pt]
\multirow{4}{*}{Medium} & NN & 1.81 & 6.92 & 0.90 & \multirow{2}{=}{\centering\hspace{4pt}0.046} & 2.29 & 0.20 & \ensuremath{1.94^{***}} & 209.64 \\
 & ResNet & 1.92 & 7.09 & 0.94 &  & 2.41 & 0.21 & \ensuremath{2.03^{***}} & 212.21 \\
\arrayrulecolor{gray!45}
\cmidrule[0.25pt](l{2pt}){2-10}
\arrayrulecolor{black}
 & NN+ & 2.29 & 5.18 & 1.47 & \multirow{2}{=}{\centering\hspace{4pt}\ensuremath{0.472^{***}}} & 2.60 & 0.16 & \ensuremath{2.09^{***}} & 250.78 \\
 & ResNet+ & 3.25 & 5.75 & 1.94 &  & 3.52 & 0.29 & \ensuremath{3.09^{***}} & 256.67 \\
\addlinespace[2pt]
\midrule[0.55pt]
\multirow{4}{*}{Deep} & NN & 1.65 & 6.94 & 0.82 & \multirow{2}{=}{\centering\hspace{4pt}0.100} & 2.13 & 0.18 & \ensuremath{1.81^{***}} & 209.58 \\
 & ResNet & 1.90 & 7.11 & 0.93 &  & 2.38 & 0.21 & \ensuremath{2.01^{***}} & 210.47 \\
\arrayrulecolor{gray!45}
\cmidrule[0.25pt](l{2pt}){2-10}
\arrayrulecolor{black}
 & NN+ & 1.33 & 4.32 & 0.89 & \multirow{2}{=}{\centering\hspace{4pt}\ensuremath{1.179^{***}}} & 1.60 & 0.09 & \ensuremath{1.26^{***}} & 261.97 \\
 & ResNet+ & 3.48 & 5.82 & 2.07 &  & 3.83 & 0.33 & \ensuremath{3.35^{***}} & 259.49 \\
\bottomrule[0.55pt]
\end{tabularx}
\par\nointerlineskip
{\scriptsize
\renewcommand{\arraystretch}{1.06}
\newcommand{\depthcontrastdivider}{\tikz[baseline=-0.6ex]{\draw[gray!65,dash pattern=on 1.2pt off 1.2pt,line width=0.25pt] (0,-0.9ex) -- (0,1.7ex);}}
\begin{tabularx}{\linewidth}{@{}>{\Centering\arraybackslash}p{130pt}@{\hspace{5pt}\depthcontrastdivider\hspace{5pt}}*{4}{>{\Centering\arraybackslash}X}@{}}
\noalign{\vskip\aboverulesep}
\multirow{2}{=}[-3pt]{\centering Deep SR minus Shallow SR} & NN & ResNet & NN+ & ResNet+ \\
\arrayrulecolor{gray!45}
\cmidrule[0.25pt](l{2pt}){2-5}
\arrayrulecolor{black}
 & \ensuremath{-0.11^{**}} & -0.02 & \ensuremath{-0.94^{***}} & \ensuremath{0.15^{**}} \\
\bottomrule
\end{tabularx}
}
\par\vspace{4pt}
\begin{minipage}{\linewidth}
\setstretch{1.0}\footnotesize Mean and SD are the monthly mean and standard deviation of value\mbox{-}weighted long-short portfolio returns.
SR and SR(EW) are the annualized Sharpe ratios for value\mbox{-}weighted and equal\mbox{-}weighted long-short portfolios, respectively.
Except for the portfolios underlying SR(EW), all portfolios are value\mbox{-}weighted.
$\Delta$ SR is the value\mbox{-}weighted residual-minus-feedforward Sharpe ratio difference.
Within each depth group, model-family results use each model's available width and depth specifications, whereas $\Delta$ SR uses the set completed by both models; specifications receive equal weight.
Deep SR minus Shallow SR reports, within each model family, the mean Sharpe ratio across available Deep specifications minus the corresponding mean across available Shallow specifications; specifications receive equal weight within each group.
Stars on $\Delta$ SR and Deep SR minus Shallow SR use two-sided Ledoit--Wolf studentized circular block bootstrap tests that the respective contrasts equal zero, with 12-month blocks; stars on FF5 $\alpha$ use two-sided tests with Newey\mbox{-}West standard errors and 12 lags.
*, **, and *** denote the 10\%, 5\%, and 1\% levels.
Mean, SD, FF5 $\alpha$, and Turnover are monthly percentages; $R^2_{\mathrm{OOS}}$ is reported in percent.
Constant forecasts are excluded.
\end{minipage}
\end{table}

%% file: FullDecileMonotonicity.tex
\begin{table}[p]
\centering
\setstretch{1.0}
\fontsize{8.0}{9.5}\selectfont
\renewcommand{\arraystretch}{0.97}
\setlength{\tabcolsep}{2.5pt}
\captionsetup{width=\linewidth,font=footnotesize,skip=5pt,justification=raggedright,singlelinecheck=false,labelsep=newline,labelfont=bf,textfont=bf}
\caption{\textbf{Decile Portfolio Performance by Depth Group and Model Family}}
\label{tab:FullDecileMonotonicity}
\begin{tabular*}{\linewidth}{@{\extracolsep{\fill}}l*{12}{r}@{}}
NN & \multicolumn{4}{c}{Shallow} & \multicolumn{4}{c}{Medium} & \multicolumn{4}{c}{Deep} \\
\cmidrule(r){1-1} \cmidrule(lr){2-5} \cmidrule(lr){6-9} \cmidrule(lr){10-13}
Rank & \multicolumn{1}{c}{Fcst} & \multicolumn{1}{c}{Mean} & \multicolumn{1}{c}{SD} & \multicolumn{1}{c}{SR} & \multicolumn{1}{c}{Fcst} & \multicolumn{1}{c}{Mean} & \multicolumn{1}{c}{SD} & \multicolumn{1}{c}{SR} & \multicolumn{1}{c}{Fcst} & \multicolumn{1}{c}{Mean} & \multicolumn{1}{c}{SD} & \multicolumn{1}{c}{SR} \\
\midrule[0.45pt]
Low & -2.74 & -0.56 & 8.54 & -0.23 & -2.18 & -0.53 & 8.50 & -0.21 & -1.00 & -0.42 & 8.08 & -0.18 \\
2 & -1.34 & 0.07 & 6.62 & 0.04 & -1.10 & 0.07 & 6.68 & 0.04 & -0.44 & 0.14 & 6.58 & 0.08 \\
3 & -0.59 & 0.44 & 5.68 & 0.27 & -0.49 & 0.46 & 5.72 & 0.28 & -0.10 & 0.45 & 5.62 & 0.28 \\
4 & -0.07 & 0.53 & 5.15 & 0.36 & -0.05 & 0.53 & 5.17 & 0.36 & 0.16 & 0.54 & 5.10 & 0.37 \\
5 & 0.35 & 0.67 & 4.73 & 0.49 & 0.30 & 0.65 & 4.77 & 0.47 & 0.39 & 0.66 & 4.78 & 0.48 \\
6 & 0.72 & 0.72 & 4.58 & 0.55 & 0.62 & 0.72 & 4.57 & 0.55 & 0.61 & 0.71 & 4.58 & 0.54 \\
7 & 1.10 & 0.75 & 4.63 & 0.56 & 0.95 & 0.77 & 4.61 & 0.58 & 0.84 & 0.81 & 4.57 & 0.61 \\
8 & 1.51 & 0.83 & 4.85 & 0.60 & 1.31 & 0.82 & 4.82 & 0.59 & 1.09 & 0.86 & 4.73 & 0.63 \\
9 & 2.05 & 0.95 & 5.31 & 0.62 & 1.77 & 0.94 & 5.17 & 0.63 & 1.42 & 0.89 & 4.92 & 0.63 \\
High & 2.98 & 1.30 & 6.05 & 0.75 & 2.59 & 1.27 & 5.88 & 0.75 & 1.96 & 1.23 & 5.46 & 0.78 \\
\addlinespace[3pt]
H$-$L & 5.71 & 1.86 & 6.89 & 0.94 & 4.77 & 1.81 & 6.92 & 0.90 & 2.96 & 1.65 & 6.94 & 0.82 \\
\midrule[0.45pt]
\addlinespace[5pt]
ResNet & \multicolumn{4}{c}{Shallow} & \multicolumn{4}{c}{Medium} & \multicolumn{4}{c}{Deep} \\
\cmidrule(r){1-1} \cmidrule(lr){2-5} \cmidrule(lr){6-9} \cmidrule(lr){10-13}
Rank & \multicolumn{1}{c}{Fcst} & \multicolumn{1}{c}{Mean} & \multicolumn{1}{c}{SD} & \multicolumn{1}{c}{SR} & \multicolumn{1}{c}{Fcst} & \multicolumn{1}{c}{Mean} & \multicolumn{1}{c}{SD} & \multicolumn{1}{c}{SR} & \multicolumn{1}{c}{Fcst} & \multicolumn{1}{c}{Mean} & \multicolumn{1}{c}{SD} & \multicolumn{1}{c}{SR} \\
\midrule[0.45pt]
Low & -3.19 & -0.63 & 8.72 & -0.25 & -3.18 & -0.66 & 8.86 & -0.26 & -3.19 & -0.65 & 8.87 & -0.25 \\
2 & -1.61 & 0.04 & 6.76 & 0.02 & -1.56 & 0.06 & 6.80 & 0.03 & -1.57 & 0.06 & 6.79 & 0.03 \\
3 & -0.78 & 0.42 & 5.76 & 0.25 & -0.72 & 0.38 & 5.83 & 0.23 & -0.73 & 0.37 & 5.87 & 0.22 \\
4 & -0.21 & 0.52 & 5.18 & 0.35 & -0.15 & 0.52 & 5.23 & 0.34 & -0.16 & 0.54 & 5.26 & 0.36 \\
5 & 0.24 & 0.68 & 4.73 & 0.50 & 0.30 & 0.65 & 4.77 & 0.48 & 0.29 & 0.63 & 4.81 & 0.45 \\
6 & 0.64 & 0.71 & 4.58 & 0.53 & 0.70 & 0.72 & 4.60 & 0.54 & 0.69 & 0.72 & 4.60 & 0.54 \\
7 & 1.04 & 0.74 & 4.61 & 0.56 & 1.09 & 0.74 & 4.64 & 0.55 & 1.08 & 0.72 & 4.65 & 0.54 \\
8 & 1.48 & 0.83 & 4.83 & 0.59 & 1.52 & 0.80 & 4.88 & 0.57 & 1.51 & 0.78 & 4.86 & 0.56 \\
9 & 2.05 & 0.98 & 5.32 & 0.64 & 2.07 & 0.97 & 5.35 & 0.63 & 2.05 & 0.96 & 5.35 & 0.62 \\
High & 3.04 & 1.29 & 6.12 & 0.73 & 3.04 & 1.26 & 6.20 & 0.71 & 3.00 & 1.25 & 6.19 & 0.70 \\
\addlinespace[3pt]
H$-$L & 6.23 & 1.92 & 7.02 & 0.95 & 6.22 & 1.92 & 7.09 & 0.94 & 6.19 & 1.90 & 7.11 & 0.93 \\
\midrule[0.45pt]
\addlinespace[5pt]
NN+ & \multicolumn{4}{c}{Shallow} & \multicolumn{4}{c}{Medium} & \multicolumn{4}{c}{Deep} \\
\cmidrule(r){1-1} \cmidrule(lr){2-5} \cmidrule(lr){6-9} \cmidrule(lr){10-13}
Rank & \multicolumn{1}{c}{Fcst} & \multicolumn{1}{c}{Mean} & \multicolumn{1}{c}{SD} & \multicolumn{1}{c}{SR} & \multicolumn{1}{c}{Fcst} & \multicolumn{1}{c}{Mean} & \multicolumn{1}{c}{SD} & \multicolumn{1}{c}{SR} & \multicolumn{1}{c}{Fcst} & \multicolumn{1}{c}{Mean} & \multicolumn{1}{c}{SD} & \multicolumn{1}{c}{SR} \\
\midrule[0.45pt]
Low & -1.62 & -1.13 & 8.70 & -0.44 & -0.96 & -0.72 & 8.40 & -0.27 & -0.57 & -0.15 & 7.40 & 0.00 \\
2 & -0.69 & -0.20 & 7.06 & -0.09 & -0.28 & -0.06 & 6.85 & -0.02 & -0.06 & 0.22 & 6.19 & 0.15 \\
3 & -0.27 & 0.21 & 5.65 & 0.13 & 0.01 & 0.28 & 5.53 & 0.18 & 0.12 & 0.44 & 5.26 & 0.30 \\
4 & -0.01 & 0.46 & 4.88 & 0.33 & 0.18 & 0.48 & 4.79 & 0.35 & 0.22 & 0.58 & 4.79 & 0.42 \\
5 & 0.18 & 0.60 & 4.55 & 0.46 & 0.29 & 0.67 & 4.50 & 0.51 & 0.28 & 0.65 & 4.63 & 0.49 \\
6 & 0.35 & 0.81 & 4.43 & 0.64 & 0.39 & 0.80 & 4.51 & 0.62 & 0.34 & 0.76 & 4.66 & 0.56 \\
7 & 0.52 & 0.94 & 4.66 & 0.70 & 0.49 & 0.91 & 4.66 & 0.68 & 0.40 & 0.83 & 4.75 & 0.61 \\
8 & 0.72 & 1.11 & 5.10 & 0.75 & 0.61 & 1.05 & 4.95 & 0.74 & 0.48 & 0.92 & 4.97 & 0.64 \\
9 & 1.01 & 1.41 & 5.98 & 0.81 & 0.79 & 1.25 & 5.56 & 0.78 & 0.59 & 0.98 & 5.35 & 0.62 \\
High & 1.75 & 2.02 & 8.25 & 0.84 & 1.20 & 1.57 & 7.15 & 0.74 & 0.86 & 1.18 & 6.43 & 0.61 \\
\addlinespace[3pt]
H$-$L & 3.37 & 3.15 & 5.80 & 1.84 & 2.16 & 2.29 & 5.18 & 1.47 & 1.42 & 1.33 & 4.32 & 0.89 \\
\midrule[0.45pt]
\addlinespace[5pt]
ResNet+ & \multicolumn{4}{c}{Shallow} & \multicolumn{4}{c}{Medium} & \multicolumn{4}{c}{Deep} \\
\cmidrule(r){1-1} \cmidrule(lr){2-5} \cmidrule(lr){6-9} \cmidrule(lr){10-13}
Rank & \multicolumn{1}{c}{Fcst} & \multicolumn{1}{c}{Mean} & \multicolumn{1}{c}{SD} & \multicolumn{1}{c}{SR} & \multicolumn{1}{c}{Fcst} & \multicolumn{1}{c}{Mean} & \multicolumn{1}{c}{SD} & \multicolumn{1}{c}{SR} & \multicolumn{1}{c}{Fcst} & \multicolumn{1}{c}{Mean} & \multicolumn{1}{c}{SD} & \multicolumn{1}{c}{SR} \\
\midrule[0.45pt]
Low & -1.95 & -1.27 & 8.62 & -0.51 & -2.00 & -1.26 & 8.56 & -0.51 & -2.32 & -1.41 & 8.71 & -0.56 \\
2 & -0.89 & -0.23 & 6.88 & -0.12 & -0.94 & -0.22 & 6.85 & -0.11 & -1.10 & -0.25 & 6.99 & -0.12 \\
3 & -0.40 & 0.24 & 5.52 & 0.15 & -0.44 & 0.24 & 5.48 & 0.15 & -0.54 & 0.22 & 5.51 & 0.14 \\
4 & -0.08 & 0.46 & 4.86 & 0.33 & -0.12 & 0.49 & 4.80 & 0.36 & -0.17 & 0.49 & 4.77 & 0.36 \\
5 & 0.17 & 0.62 & 4.57 & 0.47 & 0.13 & 0.66 & 4.54 & 0.51 & 0.11 & 0.70 & 4.55 & 0.53 \\
6 & 0.38 & 0.84 & 4.55 & 0.64 & 0.34 & 0.82 & 4.57 & 0.62 & 0.35 & 0.82 & 4.56 & 0.62 \\
7 & 0.60 & 0.97 & 4.79 & 0.70 & 0.56 & 0.95 & 4.81 & 0.69 & 0.60 & 0.96 & 4.78 & 0.70 \\
8 & 0.86 & 1.14 & 5.22 & 0.75 & 0.81 & 1.18 & 5.20 & 0.79 & 0.90 & 1.17 & 5.19 & 0.78 \\
9 & 1.23 & 1.42 & 6.20 & 0.79 & 1.18 & 1.38 & 6.04 & 0.79 & 1.32 & 1.37 & 5.95 & 0.80 \\
High & 2.13 & 2.09 & 8.40 & 0.86 & 2.04 & 1.98 & 8.08 & 0.85 & 2.32 & 2.08 & 8.17 & 0.88 \\
\addlinespace[3pt]
H$-$L & 4.08 & 3.36 & 6.03 & 1.92 & 4.04 & 3.25 & 5.75 & 1.94 & 4.63 & 3.48 & 5.82 & 2.07 \\
\midrule[0.45pt]
\end{tabular*}
\par\vspace{2pt}
\begin{minipage}{\linewidth}
\setstretch{1.0}\fontsize{8.0}{8.5}\selectfont From Low to High, the portfolios are formed long-only; H$-$L is the long-short portfolio formed as High minus Low.
Fcst is the mean return forecast; Mean and SD are the mean and standard deviation of realized returns. All three are monthly percentages; SR is annualized.
Each model family uses its completed nonconstant-forecast specifications, as in Table~\ref{tab:FinalBaselinePerformance}; eligible specifications can differ across families.
For portfolios other than H$-$L, statistics are computed separately for each width schedule and depth over the 443 out-of-sample months, then averaged equally within each depth group.
H$-$L Fcst and Mean are computed directly as High minus Low; H$-$L SD and SR are computed from the long-short return series.
\end{minipage}
\end{table}

%% file: ForecastRefinement.tex
\begin{table}[htbp]
\centering
\scriptsize
\setstretch{1.0}
\renewcommand{\arraystretch}{1.12}
\captionsetup{skip=4pt,justification=raggedright,singlelinecheck=false,labelsep=newline,labelfont=bf,textfont=bf}
\caption{\textbf{Forecast Preservation and Refinement by Depth Group}}
\label{tab:ForecastRefinement}
\newcommand{\forecastrefinementdivider}{\tikz[baseline=-0.6ex]{\draw[gray!65,dash pattern=on 1.2pt off 1.2pt,line width=0.25pt] (0,-0.9ex) -- (0,1.7ex);}}
\setlength{\tabcolsep}{2pt}
\begin{tabular*}{\linewidth}{@{\extracolsep{\fill}}l@{\hspace{4pt}\forecastrefinementdivider\hspace{4pt}}lccc@{\hspace{8pt}\forecastrefinementdivider\hspace{4pt}}lccc@{}}
\toprule
 & Model & $P_t$ & $C_{t+1} / \%$ & $V_{t+1} / bps$ & Model & $P_t$ & $C_{t+1} / \%$ & $V_{t+1} / bps$ \\
\midrule
\multirow{3}{*}{Medium} & ResNet & \makebox[3.8em][c]{0.93} & \makebox[3.8em][c]{0.32} & \makebox[3.8em][c]{-0.94} & ResNet+ & \makebox[3.8em][c]{0.81} & \makebox[3.8em][c]{\ensuremath{0.72^{***}}} & \makebox[3.8em][c]{0.68} \\
 & NN & \makebox[3.8em][c]{0.87} & \makebox[3.8em][c]{\ensuremath{-0.74^{*}}} & \makebox[3.8em][c]{\ensuremath{-2.07^{**}}} & NN+ & \makebox[3.8em][c]{0.75} & \makebox[3.8em][c]{-0.04} & \makebox[3.8em][c]{\ensuremath{-15.89^{***}}} \\
 & Difference & \makebox[3.8em][c]{\ensuremath{0.06^{***}}} & \makebox[3.8em][c]{\ensuremath{1.06^{**}}} & \makebox[3.8em][c]{1.13} & Difference & \makebox[3.8em][c]{\ensuremath{0.06^{***}}} & \makebox[3.8em][c]{0.75} & \makebox[3.8em][c]{\ensuremath{16.58^{***}}} \\
\addlinespace[1pt]
\arrayrulecolor{gray!45}
\midrule[0.25pt]
\arrayrulecolor{black}
\multirow{3}{*}{Deep} & ResNet & \makebox[3.8em][c]{0.92} & \makebox[3.8em][c]{\ensuremath{0.58^{**}}} & \makebox[3.8em][c]{\ensuremath{-1.14^{*}}} & ResNet+ & \makebox[3.8em][c]{0.79} & \makebox[3.8em][c]{\ensuremath{1.12^{***}}} & \makebox[3.8em][c]{\ensuremath{4.92^{***}}} \\
 & NN & \makebox[3.8em][c]{0.84} & \makebox[3.8em][c]{-0.99} & \makebox[3.8em][c]{-2.64} & NN+ & \makebox[3.8em][c]{0.64} & \makebox[3.8em][c]{\ensuremath{-1.88^{***}}} & \makebox[3.8em][c]{\ensuremath{-34.45^{***}}} \\
 & Difference & \makebox[3.8em][c]{\ensuremath{0.08^{***}}} & \makebox[3.8em][c]{\ensuremath{1.58^{**}}} & \makebox[3.8em][c]{1.50} & Difference & \makebox[3.8em][c]{\ensuremath{0.15^{***}}} & \makebox[3.8em][c]{\ensuremath{2.99^{***}}} & \makebox[3.8em][c]{\ensuremath{39.36^{***}}} \\
\bottomrule
\end{tabular*}
\par\vspace{4pt}
\begin{minipage}{\linewidth}
\setstretch{1.0}\footnotesize Preservation score $P_t$ is the share of stock pairs whose forecast ordering is unchanged from the corresponding shallow anchor forecasts.
Revision accuracy $C_{t+1}$ is the average pairwise revision score among changed stock pairs.
Revision economic value $V_{t+1}$ measures the economic value of these revisions.
The reported $C_{t+1}$ is in percent; $V_{t+1}$ is in monthly basis points.
$P_t$ is unscaled.
For each model, measures are first averaged over forecast origins for every combination of width schedule and target depth; the resulting specification means then receive equal weight within each depth group.
Difference is the residual model mean minus the feedforward model mean.
Stars test zero using Newey\mbox{-}West standard errors with 12 lags; *, **, and *** denote the 10\%, 5\%, and 1\% levels.
Significance levels are not reported for the model-family $P_t$ estimates.
\end{minipage}
\end{table}

%% file: ThemePairwiseLosses.tex
\begin{table}[!t]
\centering\scriptsize\setstretch{1.0}
\renewcommand{\arraystretch}{1.12}
\setlength{\tabcolsep}{2pt}
\captionsetup{skip=5pt,justification=raggedright,singlelinecheck=false,labelsep=newline,labelfont=bf,textfont=bf}
\caption{\textbf{Forecast Preservation and Revision Losses by Theme}}
\label{tab:theme_pairwise_losses}
\begin{tabular*}{\linewidth}{@{\extracolsep{\fill}}lccc@{\hspace{10pt}}ccc@{}}
\toprule
 & \multicolumn{3}{c}{ResNet} & \multicolumn{3}{c}{NN} \\
\cmidrule(lr){2-4}\cmidrule(lr){5-7}
Theme & $\Delta P_t$ & $\Delta C_{t+1} / \%$ & $\Delta V_{t+1} / bps$ & $\Delta P_t$ & $\Delta C_{t+1} / \%$ & $\Delta V_{t+1} / bps$ \\
\midrule
Accruals & $-0.001^{***}$ & $0.02$ & $0.05$ & $-0.001$ & $0.11$ & $0.01$ \\
Debt Issuance & $0.000$ & $0.04$ & $0.02$ & $0.001$ & $-0.01$ & $0.11$ \\
Investment & $-0.001^{*}$ & $0.01$ & $-0.23$ & $-0.001$ & $-0.16$ & $0.07$ \\
Low Leverage & $-0.003^{***}$ & $0.49^{***}$ & $0.40$ & $-0.011^{***}$ & $0.05$ & $0.57$ \\
Low Risk & $0.002^{**}$ & $-0.29$ & $-0.65$ & $-0.001$ & $-0.53$ & $0.54$ \\
Momentum & $0.003^{***}$ & $-0.36^{***}$ & $-0.32$ & $0.004$ & $0.22$ & $-2.83^{**}$ \\
Profit Growth & $0.001^{*}$ & $-0.01$ & $-0.01$ & $0.002^{*}$ & $0.14$ & $0.51$ \\
Profitability & $-0.003^{***}$ & $-0.05$ & $0.02$ & $-0.003^{***}$ & $0.41^{***}$ & $0.06$ \\
Quality & $0.001$ & $-0.31^{**}$ & $-0.14$ & $-0.002$ & $0.19$ & $0.70$ \\
Seasonality & $0.000$ & $-0.05$ & $-0.02$ & $-0.001$ & $-0.17^{**}$ & $0.02$ \\
Short-Term Reversal & $0.001$ & $0.08$ & $0.38^{*}$ & $0.001$ & $-0.02$ & $-1.59^{*}$ \\
Size & $-0.001$ & $0.22$ & $-0.15$ & $0.002$ & $-0.10$ & $-7.87^{***}$ \\
Value & $0.001$ & $-0.11$ & $-0.17$ & $-0.001$ & $-0.98^{***}$ & $-0.29$ \\
\midrule
\addlinespace[5pt]
 & \multicolumn{3}{c}{ResNet+} & \multicolumn{3}{c}{NN+} \\
\cmidrule(lr){2-4}\cmidrule(lr){5-7}
Theme & $\Delta P_t$ & $\Delta C_{t+1} / \%$ & $\Delta V_{t+1} / bps$ & $\Delta P_t$ & $\Delta C_{t+1} / \%$ & $\Delta V_{t+1} / bps$ \\
\midrule
Accruals & $0.001^{***}$ & $0.00$ & $-0.12$ & $0.001^{***}$ & $-0.08^{***}$ & $-0.13$ \\
Debt Issuance & $0.005^{***}$ & $-0.79^{***}$ & $0.28$ & $0.003^{***}$ & $-0.10^{***}$ & $-0.28$ \\
Investment & $0.001$ & $-1.33^{***}$ & $-1.67$ & $0.004^{***}$ & $-0.40^{***}$ & $-0.98$ \\
Low Leverage & $-0.007^{***}$ & $0.44^{***}$ & $1.83^{***}$ & $-0.003^{***}$ & $0.50^{***}$ & $-1.90$ \\
Low Risk & $0.007^{***}$ & $0.32$ & $0.26$ & $0.009^{***}$ & $-1.45^{***}$ & $-6.58^{***}$ \\
Momentum & $0.022^{***}$ & $0.10$ & $1.68$ & $-0.001$ & $-0.32^{***}$ & $-3.34^{***}$ \\
Profit Growth & $0.003^{***}$ & $1.48^{***}$ & $0.71$ & $0.001^{***}$ & $0.13^{***}$ & $-0.43$ \\
Profitability & $0.003^{***}$ & $-0.42^{***}$ & $1.17$ & $0.000$ & $0.10^{***}$ & $-0.22$ \\
Quality & $0.008^{***}$ & $1.67^{***}$ & $2.05$ & $0.002^{***}$ & $-0.58^{***}$ & $-2.32^{***}$ \\
Seasonality & $0.003^{***}$ & $0.16$ & $-0.66$ & $0.001^{***}$ & $-0.04$ & $-0.80$ \\
Short-Term Reversal & $0.012^{***}$ & $-0.46^{***}$ & $0.82$ & $0.004^{***}$ & $0.30^{*}$ & $-3.23^{*}$ \\
Size & $-0.005^{***}$ & $-0.05$ & $6.11^{***}$ & $-0.008^{***}$ & $1.63^{***}$ & $-4.32$ \\
Value & $0.017^{***}$ & $1.96^{***}$ & $0.30$ & $0.003^{***}$ & $-0.30^{*}$ & $-5.76^{***}$ \\
\bottomrule
\end{tabular*}
\par\vspace{5pt}
\begin{minipage}{\linewidth}\footnotesize\setstretch{1.0}
Entries are full-input minus theme-zeroed measures, with the same theme set to zero in both shallow and deep models and fitted models and other inputs held fixed.
$P$ is the preservation score, $C$ is revision accuracy among changed pairs, and $V$ is the revision economic value averaged over all stock pairs.
$\Delta P$ is unscaled, $\Delta C$ is in percent, and $\Delta V$ is in monthly basis points.
Target depths are 11--20, each compared with its minimum-depth anchor. Eligible specification means over 443 forecast origins receive equal weight.
Stars test each paired loss using Newey\mbox{-}West standard errors with 12 lags and Holm adjustment across 13 themes within each model and measure; *, **, and *** denote the 10\%, 5\%, and 1\% levels.
Cross-model differences are not tested. Theme effects are not additive, and $V$ is not a portfolio return.
\end{minipage}
\end{table}

%% file: SizeExclusionRobustness.tex
\begin{table}[h]
\centering
\scriptsize
\setstretch{1.0}
\renewcommand{\arraystretch}{1.12}
\newcolumntype{Z}[1]{>{\hsize=#1\hsize\linewidth=\hsize\Centering\arraybackslash}X}
\captionsetup{skip=4pt,justification=raggedright,singlelinecheck=false,labelsep=newline,labelfont=bf,textfont=bf}
\caption{\textbf{Market Capitalization Exclusion Tests}}
\label{tab:SizeExclusionRobustness}
\newcommand{\sizeexclusiondivider}{\tikz[baseline=-0.6ex]{\draw[gray!65,dash pattern=on 1.2pt off 1.2pt,line width=0.25pt] (0,-0.9ex) -- (0,1.7ex);}}
\begin{tabularx}{\linewidth}{@{}Z{1.20}Z{0.90}@{\hspace{5pt}\sizeexclusiondivider\hspace{5pt}}Z{0.95}Z{0.95}Z{1.00}@{}}
\toprule
Universe & Depth & $\Delta$ SR & ResNet SR & NN SR \\
\midrule
\multirow{3}{=}{\Centering All Stocks} & Shallow & \makebox[3.2em][c]{\ensuremath{0.01}} & 0.95 & 0.94 \\
 & Medium & \makebox[3.2em][c]{\ensuremath{0.05}} & 0.95 & 0.90 \\
 & Deep & \makebox[3.2em][c]{\ensuremath{0.10}} & 0.92 & 0.82 \\
\arrayrulecolor{gray!45}
\cmidrule[0.25pt]{1-5}
\arrayrulecolor{black}
\multirow{3}{=}{\Centering Exclude Smallest 20\%} & Shallow & \makebox[3.2em][c]{\ensuremath{0.00}} & 0.89 & 0.88 \\
 & Medium & \makebox[3.2em][c]{\ensuremath{0.04}} & 0.89 & 0.85 \\
 & Deep & \makebox[3.2em][c]{\ensuremath{0.07}} & 0.88 & 0.80 \\
\arrayrulecolor{gray!45}
\cmidrule[0.25pt]{1-5}
\arrayrulecolor{black}
\multirow{3}{=}{\Centering Exclude Largest 20\%} & Shallow & \makebox[3.2em][c]{\ensuremath{0.04^{**}}} & 1.44 & 1.41 \\
 & Medium & \makebox[3.2em][c]{\ensuremath{0.09^{**}}} & 1.45 & 1.37 \\
 & Deep & \makebox[3.2em][c]{\ensuremath{0.14^{*}}} & 1.43 & 1.28 \\
\addlinespace[1pt]
\midrule
Universe & Depth & $\Delta$ SR & ResNet+ SR & NN+ SR \\
\midrule
\multirow{3}{=}{\Centering All Stocks} & Shallow & \makebox[3.2em][c]{\ensuremath{0.09}} & 1.92 & 1.84 \\
 & Medium & \makebox[3.2em][c]{\ensuremath{0.47^{***}}} & 1.94 & 1.47 \\
 & Deep & \makebox[3.2em][c]{\ensuremath{1.18^{***}}} & 2.07 & 0.89 \\
\arrayrulecolor{gray!45}
\cmidrule[0.25pt]{1-5}
\arrayrulecolor{black}
\multirow{3}{=}{\Centering Exclude Smallest 20\%} & Shallow & \makebox[3.2em][c]{\ensuremath{0.09^{*}}} & 1.74 & 1.65 \\
 & Medium & \makebox[3.2em][c]{\ensuremath{0.50^{***}}} & 1.76 & 1.27 \\
 & Deep & \makebox[3.2em][c]{\ensuremath{1.08^{***}}} & 1.87 & 0.79 \\
\arrayrulecolor{gray!45}
\cmidrule[0.25pt]{1-5}
\arrayrulecolor{black}
\multirow{3}{=}{\Centering Exclude Largest 20\%} & Shallow & \makebox[3.2em][c]{\ensuremath{0.20^{**}}} & 2.46 & 2.26 \\
 & Medium & \makebox[3.2em][c]{\ensuremath{0.70^{***}}} & 2.51 & 1.81 \\
 & Deep & \makebox[3.2em][c]{\ensuremath{1.67^{***}}} & 2.78 & 1.12 \\
\bottomrule
\end{tabularx}
\par\vspace{4pt}
\begin{minipage}{\linewidth}
\setstretch{1.0}\footnotesize Market capitalization screens are applied on the raw forecasts before portfolio formation.
All Stocks is the full universe; Exclude Smallest 20\% keeps the largest 80\%, and Exclude Largest 20\% keeps the smallest 80\%.
$\Delta$ SR is residual (ResNet or ResNet+) minus feedforward (NN or NN+) annualized Sharpe ratio for value-weighted long-short portfolios.
Within each depth group, SRs and $\Delta$ SRs are averaged equally across the set of width and depth specifications completed by both models.
Stars on $\Delta$ SR use the two-sided Ledoit--Wolf studentized circular block bootstrap test for equal Sharpe ratios; *, **, and *** denote the 10\%, 5\%, and 1\% levels.
\end{minipage}
\end{table}

%% file: TransactionCostAdjustedPerformance.tex
\begin{table}[h]
\centering
\scriptsize
\setstretch{1.0}
\renewcommand{\arraystretch}{1.12}
\newcolumntype{Y}[1]{>{\hsize=#1\hsize\linewidth=\hsize\Centering\arraybackslash}X}
\captionsetup{skip=4pt,justification=raggedright,singlelinecheck=false,labelsep=newline,labelfont=bf,textfont=bf}
\caption{\textbf{Transaction Cost Tests}}
\label{tab:TransactionCostAdjustedPerformance}
\newcommand{\transactioncostdivider}{\tikz[baseline=-0.6ex]{\draw[gray!65,dash pattern=on 1.2pt off 1.2pt,line width=0.25pt] (0,-0.9ex) -- (0,1.7ex);}}
\begin{tabularx}{\linewidth}{@{}Y{0.85}Y{0.72}@{\hspace{5pt}\transactioncostdivider\hspace{5pt}}Y{0.73}@{\hspace{5pt}\transactioncostdivider\hspace{5pt}}Y{1.15}Y{1.00}@{\hspace{5pt}\transactioncostdivider\hspace{5pt}}Y{1.10}Y{1.05}@{}}
\toprule
Depth & Cost & $\Delta$ SR & ResNet SR & Turnover & NN SR & Turnover \\
\midrule
\multirow{3}{=}{\Centering Shallow} & 0 bps & \makebox[3.2em][c]{\ensuremath{0.01}} & 0.95 & \multirow{3}{=}{\Centering 211.75} & 0.94 & \multirow{3}{=}{\Centering 212.22} \\
 & 10 bps & \makebox[3.2em][c]{\ensuremath{0.01}} & 0.84 &  & 0.83 &  \\
 & 50 bps & \makebox[3.2em][c]{\ensuremath{0.02}} & 0.42 &  & 0.40 &  \\
\arrayrulecolor{gray!45}
\cmidrule[0.25pt]{1-7}
\arrayrulecolor{black}
\multirow{3}{=}{\Centering Medium} & 0 bps & \makebox[3.2em][c]{\ensuremath{0.05}} & 0.94 & \multirow{3}{=}{\Centering 212.67} & 0.90 & \multirow{3}{=}{\Centering 209.64} \\
 & 10 bps & \makebox[3.2em][c]{\ensuremath{0.05}} & 0.83 &  & 0.80 &  \\
 & 50 bps & \makebox[3.2em][c]{\ensuremath{0.06}} & 0.42 &  & 0.38 &  \\
\arrayrulecolor{gray!45}
\cmidrule[0.25pt]{1-7}
\arrayrulecolor{black}
\multirow{3}{=}{\Centering Deep} & 0 bps & \makebox[3.2em][c]{\ensuremath{0.10}} & 0.93 & \multirow{3}{=}{\Centering 208.80} & 0.82 & \multirow{3}{=}{\Centering 209.58} \\
 & 10 bps & \makebox[3.2em][c]{\ensuremath{0.10}} & 0.83 &  & 0.72 &  \\
 & 50 bps & \makebox[3.2em][c]{\ensuremath{0.12}} & 0.42 &  & 0.30 &  \\
\addlinespace[1pt]
\midrule
Depth & Cost & $\Delta$ SR & ResNet+ SR & Turnover & NN+ SR & Turnover \\
\midrule
\multirow{3}{=}{\Centering Shallow} & 0 bps & \makebox[3.2em][c]{\ensuremath{0.09}} & 1.92 & \multirow{3}{=}{\Centering 255.68} & 1.84 & \multirow{3}{=}{\Centering 257.53} \\
 & 10 bps & \makebox[3.2em][c]{\ensuremath{0.09}} & 1.77 &  & 1.68 &  \\
 & 50 bps & \makebox[3.2em][c]{\ensuremath{0.13^{**}}} & 1.18 &  & 1.06 &  \\
\arrayrulecolor{gray!45}
\cmidrule[0.25pt]{1-7}
\arrayrulecolor{black}
\multirow{3}{=}{\Centering Medium} & 0 bps & \makebox[3.2em][c]{\ensuremath{0.47^{***}}} & 1.94 & \multirow{3}{=}{\Centering 256.67} & 1.47 & \multirow{3}{=}{\Centering 250.78} \\
 & 10 bps & \makebox[3.2em][c]{\ensuremath{0.49^{***}}} & 1.78 &  & 1.30 &  \\
 & 50 bps & \makebox[3.2em][c]{\ensuremath{0.55^{***}}} & 1.16 &  & 0.61 &  \\
\arrayrulecolor{gray!45}
\cmidrule[0.25pt]{1-7}
\arrayrulecolor{black}
\multirow{3}{=}{\Centering Deep} & 0 bps & \makebox[3.2em][c]{\ensuremath{1.18^{***}}} & 2.07 & \multirow{3}{=}{\Centering 259.49} & 0.89 & \multirow{3}{=}{\Centering 261.97} \\
 & 10 bps & \makebox[3.2em][c]{\ensuremath{1.26^{***}}} & 1.92 &  & 0.66 &  \\
 & 50 bps & \makebox[3.2em][c]{\ensuremath{1.56^{***}}} & 1.30 &  & -0.26 &  \\
\bottomrule
\end{tabularx}
\par\vspace{4pt}
\begin{minipage}{\linewidth}
\setstretch{1.0}\footnotesize Transaction costs are deducted from monthly portfolio returns based on the stated cost rate and turnover.
Cost is measured in basis points (bps) per dollar traded; turnover is the average monthly percentage of portfolio value traded.
$\Delta$ SR is residual (ResNet or ResNet+) minus feedforward (NN or NN+) annualized Sharpe ratio for value-weighted long-short portfolios net of transaction costs.
Within each depth group, model SRs are averaged equally across each model's available width and depth specifications, whereas $\Delta$ SR and turnover are averaged equally across the set of specifications completed by both models.
Stars on $\Delta$ SR use the two-sided Ledoit--Wolf studentized circular block bootstrap test for equal Sharpe ratios; *, **, and *** denote the 10\%, 5\%, and 1\% levels.
\end{minipage}
\end{table}

%% file: MarketStateRobustness.tex
\begin{table}[t]
\centering
\scriptsize
\setstretch{1.0}
\renewcommand{\arraystretch}{1.10}
\newcolumntype{K}[1]{>{\hsize=#1\hsize\linewidth=\hsize\Centering\arraybackslash}X}
\captionsetup{skip=4pt,justification=raggedright,singlelinecheck=false,labelsep=newline,labelfont=bf,textfont=bf}
\caption{\textbf{NBER Business Cycle Tests}}
\label{tab:MarketStateRobustness}
\newcommand{\marketstatedivider}{\tikz[baseline=-0.6ex]{\draw[gray!65,dash pattern=on 1.2pt off 1.2pt,line width=0.25pt] (0,-0.9ex) -- (0,1.7ex);}}
\begin{tabularx}{\linewidth}{@{}K{0.80}K{1.15}K{0.65}@{\hspace{5pt}\marketstatedivider\hspace{5pt}}K{0.90}K{1.15}K{1.15}@{}}
\toprule
State & Period & Months & $\Delta$ Mean & ResNet Mean & NN Mean \\
\midrule
\multirow{5}{=}{\Centering Expansion} & 1987:02--1990:07 & 42 & \makebox[3.2em][c]{\ensuremath{0.82^{***}}} & 3.26 & 2.44 \\
 & 1991:04--2001:03 & 120 & \makebox[3.2em][c]{\ensuremath{0.03}} & 2.23 & 2.21 \\
 & 2001:12--2007:12 & 73 & \makebox[3.2em][c]{\ensuremath{0.25^{*}}} & 1.60 & 1.35 \\
 & 2009:07--2020:02 & 128 & \makebox[3.2em][c]{\ensuremath{-0.09}} & 0.87 & 0.97 \\
 & 2020:05--2023:12 & 44 & \makebox[3.2em][c]{\ensuremath{0.47^{***}}} & 2.23 & 1.77 \\
\arrayrulecolor{gray!45}
\cmidrule[0.25pt]{1-6}
\arrayrulecolor{black}
\multirow{4}{=}{\Centering Recession} & 1990:08--1991:03 & 8\textsuperscript{\textdagger} & \makebox[3.2em][c]{\ensuremath{0.66^{\dagger}}} & 4.52 & 3.86 \\
 & 2001:04--2001:11 & 8\textsuperscript{\textdagger} & \makebox[3.2em][c]{\ensuremath{1.14^{\dagger}}} & 5.52 & 4.38 \\
 & 2008:01--2009:06 & 18 & \makebox[3.2em][c]{\ensuremath{-0.02}} & 2.44 & 2.46 \\
 & 2020:03--2020:04 & 2\textsuperscript{\textdagger} & \makebox[3.2em][c]{\ensuremath{-2.40^{\dagger}}} & -0.74 & 1.66 \\
\addlinespace[1pt]
\midrule
State & Period & Months & $\Delta$ Mean & ResNet+ Mean & NN+ Mean \\
\midrule
\multirow{5}{=}{\Centering Expansion} & 1987:02--1990:07 & 42 & \makebox[3.2em][c]{\ensuremath{2.39^{***}}} & 2.90 & 0.51 \\
 & 1991:04--2001:03 & 120 & \makebox[3.2em][c]{\ensuremath{2.89^{***}}} & 5.62 & 2.73 \\
 & 2001:12--2007:12 & 73 & \makebox[3.2em][c]{\ensuremath{0.61^{**}}} & 2.23 & 1.61 \\
 & 2009:07--2020:02 & 128 & \makebox[3.2em][c]{\ensuremath{0.42^{**}}} & 1.73 & 1.31 \\
 & 2020:05--2023:12 & 44 & \makebox[3.2em][c]{\ensuremath{0.56}} & 3.55 & 2.98 \\
\arrayrulecolor{gray!45}
\cmidrule[0.25pt]{1-6}
\arrayrulecolor{black}
\multirow{4}{=}{\Centering Recession} & 1990:08--1991:03 & 8\textsuperscript{\textdagger} & \makebox[3.2em][c]{\ensuremath{4.37^{\dagger}}} & 5.59 & 1.22 \\
 & 2001:04--2001:11 & 8\textsuperscript{\textdagger} & \makebox[3.2em][c]{\ensuremath{3.71^{\dagger}}} & 7.80 & 4.10 \\
 & 2008:01--2009:06 & 18 & \makebox[3.2em][c]{\ensuremath{1.17}} & 3.98 & 2.81 \\
 & 2020:03--2020:04 & 2\textsuperscript{\textdagger} & \makebox[3.2em][c]{\ensuremath{-3.58^{\dagger}}} & -5.15 & -1.57 \\
\bottomrule
\end{tabularx}
\par\vspace{4pt}
\begin{minipage}{\linewidth}
\setstretch{1.0}\footnotesize The periods are defined according to the NBER business cycle.
$\Delta$ Mean is residual (ResNet or ResNet+) minus feedforward (NN or NN+) monthly return for value-weighted long-short portfolios, reported in percent and averaged equally across months.
Within each comparison set, model returns are first averaged equally across the same completed width and depth specifications in each month.
Stars on $\Delta$ Mean test zero using Newey\mbox{-}West standard errors with 12 lags; *, **, and *** denote the 10\%, 5\%, and 1\% levels.
\textsuperscript{\textdagger} When intervals contain fewer than 13 monthly observations, statistical significance is not reported for these $\Delta$ Mean estimates.
\end{minipage}
\end{table}

%% file: PredictorList.tex
\begingroup
\footnotesize
\setlength{\LTleft}{0pt}
\setlength{\LTright}{\fill}
\setlength{\LTpost}{0pt}
\setstretch{1.0}
\setlength{\tabcolsep}{0pt}
\renewcommand{\arraystretch}{0.88}
\captionsetup{skip=4pt,justification=raggedright,singlelinecheck=false,labelsep=newline,labelfont=bf,textfont=bf}
\begin{longtable}{@{}>{\raggedleft\arraybackslash}p{25pt}@{\hspace{8pt}}>{\raggedright\arraybackslash}p{100pt}@{\hspace{8pt}}>{\raggedright\arraybackslash}p{240pt}@{\hspace{8pt}}>{\raggedright\arraybackslash}p{70pt}@{\hspace{8pt}}>{\raggedright\arraybackslash}p{78pt}@{\hspace{8pt}}>{\raggedright\arraybackslash}p{88pt}@{}}
\caption{\textbf{Predictors for One-Month-Ahead Return Forecasts}}\label{tab:PredictorList}\\
\toprule
\textbf{No.} & \textbf{Acronym} & \textbf{Description} & \textbf{Block} & \textbf{Source} & \textbf{Frequency} \\
\midrule
\endfirsthead
\caption[]{\textbf{Predictors for One-Month-Ahead Return Forecasts (continued)}}\\
\toprule
\textbf{No.} & \textbf{Acronym} & \textbf{Description} & \textbf{Block} & \textbf{Source} & \textbf{Frequency} \\
\midrule
\endhead
\midrule
\multicolumn{6}{r}{Continued on next page}\\
\endfoot
\bottomrule
\endlastfoot
1 & niq\_su & Standardized earnings surprise & Firm chars. & JKP panel & Monthly aligned \\
2 & ret\_6\_1 & Price momentum t-6 to t-1 & Firm chars. & JKP panel & Monthly aligned \\
3 & ret\_12\_1 & 12-month price momentum & Firm chars. & JKP panel & Monthly aligned \\
4 & saleq\_su & Standardized revenue surprise & Firm chars. & JKP panel & Monthly aligned \\
5 & tax\_gr1a & Tax expense surprise & Firm chars. & JKP panel & Monthly aligned \\
6 & ni\_inc8q & Consecutive earnings increases & Firm chars. & JKP panel & Monthly aligned \\
7 & prc\_highprc\_252d & Price to 12-month high & Firm chars. & JKP panel & Monthly aligned \\
8 & resff3\_6\_1 & 6-month residual momentum & Firm chars. & JKP panel & Monthly aligned \\
9 & resff3\_12\_1 & 12-month residual momentum & Firm chars. & JKP panel & Monthly aligned \\
10 & be\_me & Book-to-market equity & Firm chars. & JKP panel & Monthly aligned \\
11 & debt\_me & Debt-to-market & Firm chars. & JKP panel & Monthly aligned \\
12 & at\_me & Assets-to-market & Firm chars. & JKP panel & Monthly aligned \\
13 & ret\_60\_12 & Long-term reversal & Firm chars. & JKP panel & Monthly aligned \\
14 & ni\_me & Earnings-to-price & Firm chars. & JKP panel & Monthly aligned \\
15 & fcf\_me & Free cash flow-to-price & Firm chars. & JKP panel & Monthly aligned \\
16 & div12m\_me & Dividend yield & Firm chars. & JKP panel & Monthly aligned \\
17 & eqpo\_me & Payout yield & Firm chars. & JKP panel & Monthly aligned \\
18 & eqnpo\_me & Net payout yield & Firm chars. & JKP panel & Monthly aligned \\
19 & sale\_gr3 & Sales growth (3 years) & Firm chars. & JKP panel & Monthly aligned \\
20 & sale\_gr1 & Sales growth (1 year) & Firm chars. & JKP panel & Monthly aligned \\
21 & ebitda\_mev & EBITDA to enterprise value & Firm chars. & JKP panel & Monthly aligned \\
22 & sale\_me & Sales-to-market & Firm chars. & JKP panel & Monthly aligned \\
23 & ocf\_me & Cash flow to market & Firm chars. & JKP panel & Monthly aligned \\
24 & ival\_me & Intrinsic value-to-market & Firm chars. & JKP panel & Monthly aligned \\
25 & bev\_mev & Book to enterprise value & Firm chars. & JKP panel & Monthly aligned \\
26 & netdebt\_me & Net debt-to-price & Firm chars. & JKP panel & Monthly aligned \\
27 & eq\_dur & Equity duration & Firm chars. & JKP panel & Monthly aligned \\
28 & capex\_abn & Abnormal investment & Firm chars. & JKP panel & Monthly aligned \\
29 & at\_gr1 & Asset growth & Firm chars. & JKP panel & Monthly aligned \\
30 & ppeinv\_gr1a & Change in PPE and inventory & Firm chars. & JKP panel & Monthly aligned \\
31 & noa\_at & Net operating assets & Firm chars. & JKP panel & Monthly aligned \\
32 & noa\_gr1a & Change in net operating assets & Firm chars. & JKP panel & Monthly aligned \\
33 & lnoa\_gr1a & Change in long-term operating assets & Firm chars. & JKP panel & Monthly aligned \\
\pagebreak
34 & capx\_gr1 & CAPEX growth (1 year) & Firm chars. & JKP panel & Monthly aligned \\
35 & capx\_gr2 & CAPEX growth (2 years) & Firm chars. & JKP panel & Monthly aligned \\
36 & capx\_gr3 & CAPEX growth (3 years) & Firm chars. & JKP panel & Monthly aligned \\
37 & chcsho\_12m & Net stock issues & Firm chars. & JKP panel & Monthly aligned \\
38 & eqnpo\_12m & Equity net payout & Firm chars. & JKP panel & Monthly aligned \\
39 & debt\_gr3 & Growth in book debt (3 years) & Firm chars. & JKP panel & Monthly aligned \\
40 & inv\_gr1 & Inventory growth & Firm chars. & JKP panel & Monthly aligned \\
41 & inv\_gr1a & Inventory change & Firm chars. & JKP panel & Monthly aligned \\
42 & oaccruals\_at & Operating accruals & Firm chars. & JKP panel & Monthly aligned \\
43 & taccruals\_at & Total accruals & Firm chars. & JKP panel & Monthly aligned \\
44 & cowc\_gr1a & Change in current working capital & Firm chars. & JKP panel & Monthly aligned \\
45 & coa\_gr1a & Change in current operating assets & Firm chars. & JKP panel & Monthly aligned \\
46 & col\_gr1a & Change in current operating liabilities & Firm chars. & JKP panel & Monthly aligned \\
47 & nncoa\_gr1a & Change in net noncurrent assets & Firm chars. & JKP panel & Monthly aligned \\
48 & ncoa\_gr1a & Change in noncurrent assets & Firm chars. & JKP panel & Monthly aligned \\
49 & ncol\_gr1a & Change in noncurrent liabilities & Firm chars. & JKP panel & Monthly aligned \\
50 & nfna\_gr1a & Change in net financial assets & Firm chars. & JKP panel & Monthly aligned \\
51 & sti\_gr1a & Change in short-term investments & Firm chars. & JKP panel & Monthly aligned \\
52 & lti\_gr1a & Change in long-term investments & Firm chars. & JKP panel & Monthly aligned \\
53 & fnl\_gr1a & Change in financial liabilities & Firm chars. & JKP panel & Monthly aligned \\
54 & be\_gr1a & Change in common equity & Firm chars. & JKP panel & Monthly aligned \\
55 & oaccruals\_ni & Percent operating accruals & Firm chars. & JKP panel & Monthly aligned \\
56 & taccruals\_ni & Percent total accruals & Firm chars. & JKP panel & Monthly aligned \\
57 & netis\_at & Net total issuance & Firm chars. & JKP panel & Monthly aligned \\
58 & eqnetis\_at & Net equity issuance & Firm chars. & JKP panel & Monthly aligned \\
59 & dbnetis\_at & Net debt issuance & Firm chars. & JKP panel & Monthly aligned \\
60 & niq\_be & Quarterly ROE & Firm chars. & JKP panel & Monthly aligned \\
61 & niq\_be\_chg1 & Change in quarterly ROE & Firm chars. & JKP panel & Monthly aligned \\
62 & niq\_at & Quarterly ROA & Firm chars. & JKP panel & Monthly aligned \\
63 & niq\_at\_chg1 & Change in quarterly ROA & Firm chars. & JKP panel & Monthly aligned \\
64 & ebit\_bev & Return on operating assets & Firm chars. & JKP panel & Monthly aligned \\
65 & ebit\_sale & Profit margin & Firm chars. & JKP panel & Monthly aligned \\
66 & sale\_bev & Asset turnover & Firm chars. & JKP panel & Monthly aligned \\
\pagebreak
67 & at\_turnover & Capital turnover & Firm chars. & JKP panel & Monthly aligned \\
68 & gp\_at & Gross profits-to-assets & Firm chars. & JKP panel & Monthly aligned \\
69 & gp\_atl1 & Gross profits-to-lagged assets & Firm chars. & JKP panel & Monthly aligned \\
70 & ope\_be & Operating profit to equity & Firm chars. & JKP panel & Monthly aligned \\
71 & ope\_bel1 & Operating profit to lagged equity & Firm chars. & JKP panel & Monthly aligned \\
72 & op\_at & Operating profit to assets & Firm chars. & JKP panel & Monthly aligned \\
73 & op\_atl1 & Operating profit to lagged assets & Firm chars. & JKP panel & Monthly aligned \\
74 & cop\_at & Cash operating profit to assets & Firm chars. & JKP panel & Monthly aligned \\
75 & cop\_atl1 & Cash operating profit to lagged assets & Firm chars. & JKP panel & Monthly aligned \\
76 & f\_score & Piotroski F-score & Firm chars. & JKP panel & Monthly aligned \\
77 & o\_score & Ohlson O-score & Firm chars. & JKP panel & Monthly aligned \\
78 & z\_score & Altman Z-score & Firm chars. & JKP panel & Monthly aligned \\
79 & pi\_nix & Taxable to book income & Firm chars. & JKP panel & Monthly aligned \\
80 & at\_be & Book leverage & Firm chars. & JKP panel & Monthly aligned \\
81 & saleq\_gr1 & Sales growth (1 quarter) & Firm chars. & JKP panel & Monthly aligned \\
82 & rd\_me & R\&D-to-market & Firm chars. & JKP panel & Monthly aligned \\
83 & rd\_sale & R\&D-to-sales & Firm chars. & JKP panel & Monthly aligned \\
84 & opex\_at & Operating leverage & Firm chars. & JKP panel & Monthly aligned \\
85 & emp\_gr1 & Hiring rate & Firm chars. & JKP panel & Monthly aligned \\
86 & rd5\_at & R\&D capital to assets & Firm chars. & JKP panel & Monthly aligned \\
87 & age & Firm age & Firm chars. & JKP panel & Monthly aligned \\
88 & dsale\_dinv & Sales change minus inventory change & Firm chars. & JKP panel & Monthly aligned \\
89 & dsale\_drec & Sales change minus receivables change & Firm chars. & JKP panel & Monthly aligned \\
90 & dgp\_dsale & Gross margin change minus sales change & Firm chars. & JKP panel & Monthly aligned \\
91 & dsale\_dsga & Sales change minus SG\&A change & Firm chars. & JKP panel & Monthly aligned \\
92 & sale\_emp\_gr1 & Labor force efficiency & Firm chars. & JKP panel & Monthly aligned \\
93 & tangibility & Asset tangibility & Firm chars. & JKP panel & Monthly aligned \\
94 & kz\_index & Kaplan-Zingales index & Firm chars. & JKP panel & Monthly aligned \\
95 & ocfq\_saleq\_std & Cash flow volatility & Firm chars. & JKP panel & Monthly aligned \\
96 & cash\_at & Cash-to-assets & Firm chars. & JKP panel & Monthly aligned \\
97 & ni\_ar1 & Earnings persistence & Firm chars. & JKP panel & Monthly aligned \\
98 & ni\_ivol & Earnings volatility & Firm chars. & JKP panel & Monthly aligned \\
99 & earnings\_variability & Earnings variability & Firm chars. & JKP panel & Monthly aligned \\
\pagebreak
100 & aliq\_at & Liquidity of book assets & Firm chars. & JKP panel & Monthly aligned \\
101 & aliq\_mat & Liquidity of market assets & Firm chars. & JKP panel & Monthly aligned \\
102 & seas\_1\_1an & Year 1-lagged return, annual & Firm chars. & JKP panel & Monthly aligned \\
103 & seas\_1\_1na & Year 1-lagged return, nonannual & Firm chars. & JKP panel & Monthly aligned \\
104 & seas\_2\_5an & Years 2-5 lagged returns, annual & Firm chars. & JKP panel & Monthly aligned \\
105 & seas\_2\_5na & Years 2-5 lagged returns, nonannual & Firm chars. & JKP panel & Monthly aligned \\
106 & seas\_6\_10an & Years 6-10 lagged returns, annual & Firm chars. & JKP panel & Monthly aligned \\
107 & seas\_6\_10na & Years 6-10 lagged returns, nonannual & Firm chars. & JKP panel & Monthly aligned \\
108 & seas\_11\_15an & Years 11-15 lagged returns, annual & Firm chars. & JKP panel & Monthly aligned \\
109 & seas\_11\_15na & Years 11-15 lagged returns, nonannual & Firm chars. & JKP panel & Monthly aligned \\
110 & seas\_16\_20an & Years 16-20 lagged returns, annual & Firm chars. & JKP panel & Monthly aligned \\
111 & seas\_16\_20na & Years 16-20 lagged returns, nonannual & Firm chars. & JKP panel & Monthly aligned \\
112 & market\_equity & Market equity & Firm chars. & JKP panel & Monthly aligned \\
113 & ivol\_ff3\_21d & FF3 idiosyncratic volatility & Firm chars. & JKP panel & Monthly aligned \\
114 & ivol\_capm\_252d & CAPM idiosyncratic volatility, 252 days & Firm chars. & JKP panel & Monthly aligned \\
115 & ivol\_capm\_21d & CAPM idiosyncratic volatility, 21 days & Firm chars. & JKP panel & Monthly aligned \\
116 & ivol\_hxz4\_21d & q-factor idiosyncratic volatility & Firm chars. & JKP panel & Monthly aligned \\
117 & rvol\_21d & Return volatility & Firm chars. & JKP panel & Monthly aligned \\
118 & beta\_60m & Market beta & Firm chars. & JKP panel & Monthly aligned \\
119 & betabab\_1260d & Frazzini-Pedersen beta & Firm chars. & JKP panel & Monthly aligned \\
120 & beta\_dimson\_21d & Dimson beta & Firm chars. & JKP panel & Monthly aligned \\
121 & turnover\_126d & Share turnover & Firm chars. & JKP panel & Monthly aligned \\
122 & turnover\_var\_126d & Turnover variation & Firm chars. & JKP panel & Monthly aligned \\
123 & dolvol\_126d & Dollar trading volume & Firm chars. & JKP panel & Monthly aligned \\
124 & dolvol\_var\_126d & Dollar volume variation & Firm chars. & JKP panel & Monthly aligned \\
125 & prc & Price per share & Firm chars. & JKP panel & Monthly aligned \\
126 & ami\_126d & Amihud Measure & Firm chars. & JKP panel & Monthly aligned \\
127 & zero\_trades\_21d & Zero trade days, 1 month & Firm chars. & JKP panel & Monthly aligned \\
128 & zero\_trades\_126d & Zero trade days, 6 months & Firm chars. & JKP panel & Monthly aligned \\
129 & zero\_trades\_252d & Zero trade days, 12 months & Firm chars. & JKP panel & Monthly aligned \\
130 & rmax1\_21d & Maximum daily return & Firm chars. & JKP panel & Monthly aligned \\
131 & rskew\_21d & Total skewness & Firm chars. & JKP panel & Monthly aligned \\
132 & iskew\_capm\_21d & CAPM idiosyncratic skewness & Firm chars. & JKP panel & Monthly aligned \\
\pagebreak
133 & iskew\_ff3\_21d & FF3 idiosyncratic skewness & Firm chars. & JKP panel & Monthly aligned \\
134 & iskew\_hxz4\_21d & q-factor idiosyncratic skewness & Firm chars. & JKP panel & Monthly aligned \\
135 & coskew\_21d & Coskewness & Firm chars. & JKP panel & Monthly aligned \\
136 & ret\_1\_0 & Short-term reversal & Firm chars. & JKP panel & Monthly aligned \\
137 & betadown\_252d & Downside beta & Firm chars. & JKP panel & Monthly aligned \\
138 & bidaskhl\_21d & High-low bid-ask spread & Firm chars. & JKP panel & Monthly aligned \\
139 & ret\_3\_1 & Price momentum t-3 to t-1 & Firm chars. & JKP panel & Monthly aligned \\
140 & ret\_9\_1 & Price momentum t-9 to t-1 & Firm chars. & JKP panel & Monthly aligned \\
141 & ret\_12\_7 & Price momentum t-12 to t-7 & Firm chars. & JKP panel & Monthly aligned \\
142 & corr\_1260d & Market correlation & Firm chars. & JKP panel & Monthly aligned \\
143 & rmax5\_21d & Highest 5 daily returns & Firm chars. & JKP panel & Monthly aligned \\
144 & rmax5\_rvol\_21d & Top five returns over volatility & Firm chars. & JKP panel & Monthly aligned \\
145 & ni\_be & Return on equity & Firm chars. & JKP panel & Monthly aligned \\
146 & ocf\_at & Operating cash flow to assets & Firm chars. & JKP panel & Monthly aligned \\
147 & ocf\_at\_chg1 & Change in cash flow to assets & Firm chars. & JKP panel & Monthly aligned \\
148 & mispricing\_perf & Mispricing factor: Performance & Firm chars. & JKP panel & Monthly aligned \\
149 & mispricing\_mgmt & Mispricing factor: Management & Firm chars. & JKP panel & Monthly aligned \\
150 & qmj & Quality minus junk composite & Firm chars. & JKP panel & Monthly aligned \\
151 & qmj\_prof & Quality minus junk profitability & Firm chars. & JKP panel & Monthly aligned \\
152 & qmj\_growth & Quality minus junk growth & Firm chars. & JKP panel & Monthly aligned \\
153 & qmj\_safety & Quality minus junk safety & Firm chars. & JKP panel & Monthly aligned \\
154 & d/p & Dividend-price ratio & Macro & Goyal-Welch & Monthly \\
155 & e/p & Earnings-price ratio & Macro & Goyal-Welch & Monthly \\
156 & b/m & Book-to-market ratio & Macro & Goyal-Welch & Monthly \\
157 & ntis & Net equity expansion & Macro & Goyal-Welch & Monthly \\
158 & tbl & Treasury-bill rate & Macro & Goyal-Welch & Monthly \\
159 & tms & Term spread & Macro & Goyal-Welch & Monthly \\
160 & dfy & Default spread & Macro & Goyal-Welch & Monthly \\
161 & svar & Stock variance & Macro & Goyal-Welch & Monthly \\
\end{longtable}
\par\vspace{4pt}
\noindent
\begin{minipage}{\linewidth}
\RaggedRight\sloppy\setlength{\emergencystretch}{4em}\setstretch{1.0}\footnotesize Forecasts use 153 firm characteristics from \citet{jensen2023ThereReplicationCrisisa} and eight macroeconomic predictors used by \citet{gu2020EmpiricalAssetPricing} and drawn from \citet{welch2008ComprehensiveLookEmpirical}.
Firm chars. denotes the JKP variables, which are median imputed by month and rank normalized.
Macroeconomic predictors enter in levels.
\end{minipage}
\endgroup

%% file: AppendixForecastRankingValues.tex
\begin{table}[!htbp]
\centering\scriptsize\setstretch{1.0}
\renewcommand{\arraystretch}{1.12}
\setlength{\tabcolsep}{2pt}
\captionsetup{font=footnotesize,skip=4pt,justification=raggedright,singlelinecheck=false,labelsep=newline,labelfont=bf,textfont=bf}
\caption{\textbf{Forecast Preservation and Ranking Economic Values by Depth Group}}
\label{tab:appendix_forecast_ranking_values}
\begin{tabular*}{\linewidth}{@{\extracolsep{\fill}}llcccc@{\hspace{10pt}}lcccc@{}}
\toprule
 & Model & $P_t$ & $C_{t+1}/\%$ & $V_{t+1}^{S}/\mathrm{bps}$ & $V_{t+1}^{D}/\mathrm{bps}$ & Model & $P_t$ & $C_{t+1}/\%$ & $V_{t+1}^{S}/\mathrm{bps}$ & $V_{t+1}^{D}/\mathrm{bps}$ \\
\midrule
\multirow{3}{*}{Medium} & ResNet & \makebox[3.8em][c]{0.93} & \makebox[3.8em][c]{0.32} & \makebox[3.8em][c]{\ensuremath{56.72^{***}}} & \makebox[3.8em][c]{\ensuremath{55.78^{***}}} & ResNet+ & \makebox[3.8em][c]{0.81} & \makebox[3.8em][c]{\ensuremath{0.72^{***}}} & \makebox[3.8em][c]{\ensuremath{62.62^{***}}} & \makebox[3.8em][c]{\ensuremath{63.31^{***}}} \\
 & NN & \makebox[3.8em][c]{0.87} & \makebox[3.8em][c]{\ensuremath{-0.74^{*}}} & \makebox[3.8em][c]{\ensuremath{56.90^{***}}} & \makebox[3.8em][c]{\ensuremath{54.83^{***}}} & NN+ & \makebox[3.8em][c]{0.75} & \makebox[3.8em][c]{-0.04} & \makebox[3.8em][c]{\ensuremath{62.62^{***}}} & \makebox[3.8em][c]{\ensuremath{46.73^{***}}} \\
 & Difference & \makebox[3.8em][c]{\ensuremath{0.06^{***}}} & \makebox[3.8em][c]{\ensuremath{1.06^{**}}} & \makebox[3.8em][c]{-0.18} & \makebox[3.8em][c]{0.95} & Difference & \makebox[3.8em][c]{\ensuremath{0.06^{***}}} & \makebox[3.8em][c]{0.75} & \makebox[3.8em][c]{0.00} & \makebox[3.8em][c]{\ensuremath{16.58^{***}}} \\
\addlinespace[1pt]
\arrayrulecolor{gray!45}
\midrule[0.25pt]
\arrayrulecolor{black}
\multirow{3}{*}{Deep} & ResNet & \makebox[3.8em][c]{0.92} & \makebox[3.8em][c]{\ensuremath{0.58^{**}}} & \makebox[3.8em][c]{\ensuremath{56.72^{***}}} & \makebox[3.8em][c]{\ensuremath{55.58^{***}}} & ResNet+ & \makebox[3.8em][c]{0.79} & \makebox[3.8em][c]{\ensuremath{1.12^{***}}} & \makebox[3.8em][c]{\ensuremath{62.62^{***}}} & \makebox[3.8em][c]{\ensuremath{67.54^{***}}} \\
 & NN & \makebox[3.8em][c]{0.84} & \makebox[3.8em][c]{-0.99} & \makebox[3.8em][c]{\ensuremath{57.17^{***}}} & \makebox[3.8em][c]{\ensuremath{54.53^{***}}} & NN+ & \makebox[3.8em][c]{0.64} & \makebox[3.8em][c]{\ensuremath{-1.88^{***}}} & \makebox[3.8em][c]{\ensuremath{62.62^{***}}} & \makebox[3.8em][c]{\ensuremath{28.17^{***}}} \\
 & Difference & \makebox[3.8em][c]{\ensuremath{0.08^{***}}} & \makebox[3.8em][c]{\ensuremath{1.58^{**}}} & \makebox[3.8em][c]{-0.46} & \makebox[3.8em][c]{1.04} & Difference & \makebox[3.8em][c]{\ensuremath{0.15^{***}}} & \makebox[3.8em][c]{\ensuremath{2.99^{***}}} & \makebox[3.8em][c]{0.00} & \makebox[3.8em][c]{\ensuremath{39.36^{***}}} \\
\bottomrule
\end{tabular*}
\par\vspace{4pt}
\begin{minipage}{\linewidth}\footnotesize\setstretch{1.0}
The sample, preservation score $P$, and revision accuracy $C$ are the same as in Table~\ref{tab:ForecastRefinement}.
$V^S$ and $V^D$ are the separate shallow and deep ranking economic values; $V^D-V^S$ equals the revision economic value reported there.
$D$ denotes the target forecast in each depth group: 6--10 for Medium and 11--20 for Deep.
$S$ is its minimum-depth anchor, not an average over all shallow depths.
Eligible specification means over 443 forecast origins receive equal weight, and each anchor inherits its target's weight.
Different eligible width compositions can therefore produce different shallow means.
$P$ is unscaled, $C$ is in percent, and ranking economic values are in monthly basis points.
Difference is the residual model mean minus the feedforward model mean.
Stars test zero using Newey\mbox{-}West standard errors with 12 lags; *, **, and *** denote the 10\%, 5\%, and 1\% levels.
Stars on the two value levels do not test their difference.
Significance levels are not reported for the model-family $P_t$ estimates. Ranking economic values are not portfolio returns.
\end{minipage}
\end{table}

%% file: AppendixThemeValueLosses.tex
\begin{table}[!htbp]
\centering\scriptsize\setstretch{1.0}
\renewcommand{\arraystretch}{1.12}
\setlength{\tabcolsep}{2pt}
\captionsetup{font=footnotesize,skip=4pt,justification=raggedright,singlelinecheck=false,labelsep=newline,labelfont=bf,textfont=bf}
\caption{\textbf{Shallow and Deep Ranking Economic Value Losses by Theme}}
\label{tab:appendix_theme_value_losses}
\begin{tabular*}{\linewidth}{@{\extracolsep{\fill}}lcccc@{\hspace{10pt}}cccc@{}}
\toprule
 & \multicolumn{4}{c}{ResNet} & \multicolumn{4}{c}{NN} \\
\cmidrule(lr){2-5}\cmidrule(lr){6-9}
Theme & $\Delta P_t$ & $\Delta C_{t+1}/\%$ & $\Delta V_{t+1}^{S}/\mathrm{bps}$ & $\Delta V_{t+1}^{D}/\mathrm{bps}$ & $\Delta P_t$ & $\Delta C_{t+1}/\%$ & $\Delta V_{t+1}^{S}/\mathrm{bps}$ & $\Delta V_{t+1}^{D}/\mathrm{bps}$ \\
\midrule
Accruals & $-0.001^{***}$ & $0.02$ & $-0.13$ & $-0.08$ & $-0.001$ & $0.11$ & $-0.14$ & $-0.13$ \\
Debt Issuance & $0.000$ & $0.04$ & $0.77$ & $0.79$ & $0.001$ & $-0.01$ & $0.71$ & $0.83$ \\
Investment & $-0.001^{*}$ & $0.01$ & $1.11$ & $0.88$ & $-0.001$ & $-0.16$ & $1.05$ & $1.12$ \\
Low Leverage & $-0.003^{***}$ & $0.49^{***}$ & $1.54$ & $1.94$ & $-0.011^{***}$ & $0.05$ & $1.27$ & $1.84$ \\
Low Risk & $0.002^{**}$ & $-0.29$ & $4.02$ & $3.37$ & $-0.001$ & $-0.53$ & $4.05$ & $4.58$ \\
Momentum & $0.003^{***}$ & $-0.36^{***}$ & $7.25^{***}$ & $6.94^{***}$ & $0.004$ & $0.22$ & $7.84^{***}$ & $5.02$ \\
Profit Growth & $0.001^{*}$ & $-0.01$ & $0.62$ & $0.61$ & $0.002^{*}$ & $0.14$ & $0.57$ & $1.08$ \\
Profitability & $-0.003^{***}$ & $-0.05$ & $-0.83$ & $-0.81$ & $-0.003^{***}$ & $0.41^{***}$ & $-0.90$ & $-0.84$ \\
Quality & $0.001$ & $-0.31^{**}$ & $4.83^{**}$ & $4.68^{**}$ & $-0.002$ & $0.19$ & $4.76^{**}$ & $5.46^{***}$ \\
Seasonality & $0.000$ & $-0.05$ & $0.76$ & $0.74$ & $-0.001$ & $-0.17^{**}$ & $0.75$ & $0.77$ \\
Short-Term Reversal & $0.001$ & $0.08$ & $7.64^{***}$ & $8.02^{***}$ & $0.001$ & $-0.02$ & $8.02^{***}$ & $6.43^{***}$ \\
Size & $-0.001$ & $0.22$ & $10.23^{***}$ & $10.08^{***}$ & $0.002$ & $-0.10$ & $12.75^{***}$ & $4.88$ \\
Value & $0.001$ & $-0.11$ & $8.47^{***}$ & $8.31^{***}$ & $-0.001$ & $-0.98^{***}$ & $8.63^{***}$ & $8.34^{***}$ \\
\midrule
\addlinespace[5pt]
 & \multicolumn{4}{c}{ResNet+} & \multicolumn{4}{c}{NN+} \\
\cmidrule(lr){2-5}\cmidrule(lr){6-9}
Theme & $\Delta P_t$ & $\Delta C_{t+1}/\%$ & $\Delta V_{t+1}^{S}/\mathrm{bps}$ & $\Delta V_{t+1}^{D}/\mathrm{bps}$ & $\Delta P_t$ & $\Delta C_{t+1}/\%$ & $\Delta V_{t+1}^{S}/\mathrm{bps}$ & $\Delta V_{t+1}^{D}/\mathrm{bps}$ \\
\midrule
Accruals & $0.001^{***}$ & $0.00$ & $0.17$ & $0.05$ & $0.001^{***}$ & $-0.08^{***}$ & $0.17$ & $0.04$ \\
Debt Issuance & $0.005^{***}$ & $-0.79^{***}$ & $0.68$ & $0.95^{*}$ & $0.003^{***}$ & $-0.10^{***}$ & $0.68$ & $0.40$ \\
Investment & $0.001$ & $-1.33^{***}$ & $2.61^{**}$ & $0.93$ & $0.004^{***}$ & $-0.40^{***}$ & $2.61^{**}$ & $1.63^{***}$ \\
Low Leverage & $-0.007^{***}$ & $0.44^{***}$ & $3.09$ & $4.92$ & $-0.003^{***}$ & $0.50^{***}$ & $3.09$ & $1.19$ \\
Low Risk & $0.007^{***}$ & $0.32$ & $11.69^{***}$ & $11.95^{***}$ & $0.009^{***}$ & $-1.45^{***}$ & $11.69^{***}$ & $5.11^{**}$ \\
Momentum & $0.022^{***}$ & $0.10$ & $6.20^{***}$ & $7.88^{***}$ & $-0.001$ & $-0.32^{***}$ & $6.20^{***}$ & $2.85^{***}$ \\
Profit Growth & $0.003^{***}$ & $1.48^{***}$ & $0.70$ & $1.41$ & $0.001^{***}$ & $0.13^{***}$ & $0.70$ & $0.28$ \\
Profitability & $0.003^{***}$ & $-0.42^{***}$ & $0.19$ & $1.36$ & $0.000$ & $0.10^{***}$ & $0.19$ & $-0.02$ \\
Quality & $0.008^{***}$ & $1.67^{***}$ & $4.12^{***}$ & $6.17^{***}$ & $0.002^{***}$ & $-0.58^{***}$ & $4.12^{***}$ & $1.80^{***}$ \\
Seasonality & $0.003^{***}$ & $0.16$ & $1.59^{**}$ & $0.93$ & $0.001^{***}$ & $-0.04$ & $1.59^{**}$ & $0.79^{*}$ \\
Short-Term Reversal & $0.012^{***}$ & $-0.46^{***}$ & $5.75^{***}$ & $6.57^{***}$ & $0.004^{***}$ & $0.30^{*}$ & $5.75^{***}$ & $2.51^{***}$ \\
Size & $-0.005^{***}$ & $-0.05$ & $6.48$ & $12.59^{**}$ & $-0.008^{***}$ & $1.63^{***}$ & $6.48$ & $2.16$ \\
Value & $0.017^{***}$ & $1.96^{***}$ & $10.85^{***}$ & $11.15^{***}$ & $0.003^{***}$ & $-0.30^{*}$ & $10.85^{***}$ & $5.09^{**}$ \\
\bottomrule
\end{tabular*}
\par\vspace{4pt}
\begin{minipage}{\linewidth}\footnotesize\setstretch{1.0}
Entries are full-input minus theme-zeroed measures, with the same theme set to zero in both shallow and deep models and fitted models and other inputs held fixed.
$\Delta P$ and $\Delta C$ use the same definitions and estimates as Table~\ref{tab:theme_pairwise_losses}; $\Delta V^S$ and $\Delta V^D$ are separate shallow and deep ranking economic value losses.
Their difference is that table's $\Delta V$.
Target depths are 11--20 over 443 forecast origins; specification means receive equal weight, and each anchor inherits its target's weight.
$\Delta P$ is unscaled, $\Delta C$ is in percent, and ranking economic value losses are in monthly basis points.
Stars test each paired loss using Newey\mbox{-}West standard errors with 12 lags and Holm adjustment across 13 themes within each model and measure; *, **, and *** denote the 10\%, 5\%, and 1\% levels.
Stars do not test cross-column differences. Theme effects are not additive; ranking economic values are not portfolio returns.
\end{minipage}
\end{table}

%% file: S1D7DiagnosticSummary.tex
\begin{table}[htbp]
\centering
\scriptsize
\setstretch{1.0}
\renewcommand{\arraystretch}{1.12}
\captionsetup{skip=4pt,justification=raggedright,singlelinecheck=false,labelsep=newline,labelfont=bf,textfont=bf}
\caption{\textbf{Forecast Collapse for Width Schedule 1 at Depth 7}}
\label{tab:S1D7DiagnosticSummary}
\resizebox{\linewidth}{!}{%
\begin{tabular}{lcccccc}
\toprule
\textbf{Model} & \textbf{Depth} & \textbf{Fallback} & \textbf{VW SR} & \textbf{Fcst SD} & \textbf{Unique fcsts} & \textbf{Min hidden SD} \\
\midrule
NN s1d3 & 3 & 0 & 0.87 & $2.1\times 10^{-2}$ & 77,387 & $1.4\times 10^{-3}$ \\
NN s1d7 & 7 & 443 & 0.21 & 0 & 1 & 0 \\
ResNet s1d3 & 3 & 0 & 0.88 & $2.8\times 10^{-2}$ & 77,417 & $1.6\times 10^{-1}$ \\
ResNet s1d7 & 7 & 0 & 0.90 & $2.4\times 10^{-2}$ & 77,440 & $2.0\times 10^{-1}$ \\
\bottomrule
\end{tabular}
}
\par\vspace{4pt}
\begin{minipage}{\linewidth}
\setstretch{1.0}\footnotesize Width schedule 1 is shown at depths 3 and 7; depth counts hidden layers, and s1 and d in the model labels denote the width schedule and depth.
Fallback counts full-universe months that use the constant-forecast decile fallback, and VW SR is the annualized Sharpe ratio of the value-weighted long-short portfolio.
Fcst SD and Min hidden SD report cross-sectional standard deviations of forecasts and hidden representations at the 1987 forecast origin for ensemble member 0; Unique fcsts counts forecasts distinct at 12 decimal places.
These diagnostics show whether fitted forecasts retain the dispersion needed for decile sorting; they do not identify the cause of forecast collapse.
The portfolio performance tables, heatmaps, and inference exclude constant-forecast specifications. Figure~\ref{fig:DecileMonotonicitySelected} retains them, assigning tied forecasts to deciles using a fixed random seed.
\end{minipage}
\end{table}

%% file: EqualWeightedRobustness.tex
\begin{table}[htbp]
\centering
\small
\setstretch{1.0}
\renewcommand{\arraystretch}{1.12}
\captionsetup{skip=4pt,justification=raggedright,singlelinecheck=false,labelsep=newline,labelfont=bf,textfont=bf}
\caption{\textbf{Equal-Weighted Portfolio Performance and Sharpe Ratio Differences by Depth}}
\label{tab:EqualWeightedRobustness}
\resizebox{\linewidth}{!}{%
\begin{tabular}{llrrrrrrr}
\toprule
\textbf{Comparison} & \textbf{Depth} & \textbf{Res. SR} & \textbf{FF SR} & \textbf{$\Delta$ SR} & \textbf{$t$} & \textbf{$\Delta R^2_{\mathrm{OOS}}$} & \textbf{$\Delta$ FF5 $\alpha$} & \textbf{$\Delta$ Turn.} \\
\midrule
\multirow{3}{*}{ResNet vs NN} & Shallow & \makebox[3.2em][c]{\ensuremath{2.40}} & \makebox[3.2em][c]{\ensuremath{2.31}} & \makebox[3.2em][c]{\ensuremath{0.09^{***}}} & 6.46 & 0.01 & 0.13 & 0.55 \\
 & Medium & \makebox[3.2em][c]{\ensuremath{2.46}} & \makebox[3.2em][c]{\ensuremath{2.29}} & \makebox[3.2em][c]{\ensuremath{0.17^{***}}} & 4.22 & 0.01 & 0.08 & 0.63 \\
 & Deep & \makebox[3.2em][c]{\ensuremath{2.42}} & \makebox[3.2em][c]{\ensuremath{2.13}} & \makebox[3.2em][c]{\ensuremath{0.29^{***}}} & 3.74 & 0.02 & 0.04 & 1.17 \\
\addlinespace[1pt]
\arrayrulecolor{gray!45}
\midrule[0.25pt]
\arrayrulecolor{black}
\multirow{3}{*}{ResNet+ vs NN+} & Shallow & \makebox[3.2em][c]{\ensuremath{3.53}} & \makebox[3.2em][c]{\ensuremath{3.28}} & \makebox[3.2em][c]{\ensuremath{0.25^{**}}} & 2.49 & 0.06 & 0.50 & 0.09 \\
 & Medium & \makebox[3.2em][c]{\ensuremath{3.52}} & \makebox[3.2em][c]{\ensuremath{2.60}} & \makebox[3.2em][c]{\ensuremath{0.92^{***}}} & 3.94 & 0.13 & 1.48 & 4.35 \\
 & Deep & \makebox[3.2em][c]{\ensuremath{3.83}} & \makebox[3.2em][c]{\ensuremath{1.60}} & \makebox[3.2em][c]{\ensuremath{2.22^{***}}} & 6.40 & 0.24 & 3.04 & -12.15 \\
\bottomrule
\end{tabular}
}
\par\vspace{4pt}
\begin{minipage}{\linewidth}
\setstretch{1.0}\footnotesize The table repeats the full-universe long-short portfolio tests using equal-weighted decile returns.
Res. SR and FF SR are the annualized Sharpe ratios of the residual and feedforward models; each $\Delta$ subtracts the feedforward model from the residual model.
$\Delta$ FF5 $\alpha$ and $\Delta$ Turn. are monthly percentages, while $\Delta R^2_{\mathrm{OOS}}$ is reported in percentage points.
The $R^2_{\mathrm{OOS}}$ measure follows the calibrated out-of-sample $R^2$ definition in Table~\ref{tab:FinalBaselinePerformance} and is a stock-level statistic independent of portfolio weighting.
Depth counts hidden layers: 1--5 for shallow, 6--10 for medium, and 11--20 for deep.
Within each depth group, width and depth specifications completed by both models receive equal weight.
Stars on $\Delta$ SR use the two-sided Ledoit--Wolf studentized circular block bootstrap test for equal Sharpe ratios; $t$ is $\Delta$ SR divided by its Ledoit--Wolf standard error.
*, **, and *** denote the 10\%, 5\%, and 1\% levels.
\end{minipage}
\end{table}

%% file: PairedSpecificationDepthProfile.tex
\begingroup
\setlength{\LTleft}{0pt}
\setlength{\LTright}{\fill}
\setstretch{1.0}
\fontsize{6.1}{6.7}\selectfont
\setlength{\tabcolsep}{0pt}
\renewcommand{\arraystretch}{1.02}
\captionsetup{skip=4pt,justification=raggedright,singlelinecheck=false,labelsep=newline,labelfont=bf,textfont=bf}
\begin{longtable}{@{\hspace{24pt}}>{\raggedright\arraybackslash}p{55pt}@{\hspace{2pt}}>{\raggedleft\arraybackslash}p{25pt}@{\hspace{2pt}}>{\raggedright\arraybackslash}p{34pt}@{\hspace{2pt}}>{\raggedright\arraybackslash}p{57pt}@{\hspace{2pt}}>{\raggedleft\arraybackslash}p{29pt}@{\hspace{2pt}}>{\raggedleft\arraybackslash}p{25pt}@{\hspace{2pt}}>{\raggedleft\arraybackslash}p{30pt}@{\hspace{2pt}}>{\raggedleft\arraybackslash}p{29pt}@{\hspace{2pt}}>{\raggedleft\arraybackslash}p{25pt}@{\hspace{2pt}}>{\raggedleft\arraybackslash}p{39pt}@{\hspace{2pt}}>{\raggedleft\arraybackslash}p{21.3pt}@{\hspace{2pt}}>{\raggedleft\arraybackslash}p{21.3pt}@{\hspace{2pt}}>{\raggedleft\arraybackslash}p{21.3pt}@{\hspace{2pt}}>{\raggedleft\arraybackslash}p{21.3pt}@{\hspace{2pt}}>{\raggedleft\arraybackslash}p{21.3pt}@{\hspace{2pt}}>{\raggedleft\arraybackslash}p{21.3pt}@{\hspace{2pt}}>{\raggedleft\arraybackslash}p{21.3pt}@{\hspace{2pt}}>{\raggedleft\arraybackslash}p{21.3pt}@{\hspace{2pt}}>{\raggedleft\arraybackslash}p{21.3pt}@{\hspace{2pt}}>{\raggedleft\arraybackslash}p{21.3pt}@{\hspace{2pt}}>{\raggedleft\arraybackslash}p{21.3pt}@{}}
\caption{\hspace*{24pt}\textbf{Full ResNet+ and NN+ Value-Weighted Portfolio Sorts by Width and Depth}}\label{tab:PairedSpecificationDepthProfile}\\
\toprule
\textbf{Width} & \textbf{Depth} & \textbf{Model} & \textbf{Spec} & \textbf{$R^2_{\mathrm{OOS}}$} & \textbf{Scale} & \textbf{VW~SR} & \textbf{FF5~$\alpha$} & \textbf{$t(\alpha)$} & \textbf{Turnover} & \textbf{D1} & \textbf{D2} & \textbf{D3} & \textbf{D4} & \textbf{D5} & \textbf{D6} & \textbf{D7} & \textbf{D8} & \textbf{D9} & \textbf{D10} & \textbf{H$-$L} \\
\midrule
\endfirsthead
\caption[]{\hspace*{24pt}\textbf{Full ResNet+ and NN+ Value-Weighted Portfolio Sorts by Width and Depth (continued)}}\\
\toprule
\textbf{Width} & \textbf{Depth} & \textbf{Model} & \textbf{Spec} & \textbf{$R^2_{\mathrm{OOS}}$} & \textbf{Scale} & \textbf{VW~SR} & \textbf{FF5~$\alpha$} & \textbf{$t(\alpha)$} & \textbf{Turnover} & \textbf{D1} & \textbf{D2} & \textbf{D3} & \textbf{D4} & \textbf{D5} & \textbf{D6} & \textbf{D7} & \textbf{D8} & \textbf{D9} & \textbf{D10} & \textbf{H$-$L} \\
\midrule
\endhead
\midrule
\multicolumn{21}{r}{Continued on next page}\\
\endfoot
\bottomrule
\addlinespace[4pt]
\multicolumn{21}{@{\hspace{24pt}}p{\dimexpr\linewidth-24pt\relax}@{}}{\scriptsize The table reports all available full-universe model, width, and depth specifications after excluding constant forecasts.
Width gives the width schedule, Depth counts hidden layers, and Spec is the model, width schedule, and depth label.
$R^2_{\mathrm{OOS}}$ is reported in percent and follows the calibrated out-of-sample $R^2$ definition in Table~\ref{tab:FinalBaselinePerformance}: monthly cross-sectional demeaning, 2.5\% forecast trimming from each tail, and nonnegative scalar rescaling before computing out-of-sample $R^2$.
Scale is the fitted nonnegative rescaling coefficient.
VW SR is the annualized Sharpe ratio of the value-weighted long-short portfolio.
Fama\mbox{-}French five-factor $\alpha$, turnover, D1 through D10, and H$-$L are monthly percentages; $t(\alpha)$ uses Newey\mbox{-}West standard errors with 12 lags.
D1 and D10 are the lowest and highest forecast deciles, and H$-$L is D10 minus D1.}\\
\endlastfoot
S1=[32] & 1 & ResNet+ & resnetps1d1 & 0.33 & 0.75 & 2.11 & 3.79 & 7.77 & 268.70 & -1.32 & -0.22 & 0.21 & 0.50 & 0.56 & 0.77 & 0.90 & 1.15 & 1.51 & 2.45 & 3.77 \\
S1=[32] & 1 & NN+ & nnps1d1 & 0.33 & 0.75 & 2.11 & 3.79 & 7.77 & 268.70 & -1.32 & -0.22 & 0.21 & 0.50 & 0.56 & 0.77 & 0.90 & 1.15 & 1.51 & 2.45 & 3.77 \\
S1=[32] & 2 & ResNet+ & resnetps1d2 & 0.33 & 0.79 & 1.95 & 3.26 & 9.11 & 260.67 & -1.37 & -0.26 & 0.14 & 0.45 & 0.54 & 0.74 & 0.96 & 1.13 & 1.36 & 1.97 & 3.34 \\
S1=[32] & 2 & NN+ & nnps1d2 & 0.32 & 0.83 & 2.09 & 3.49 & 7.62 & 261.60 & -1.38 & -0.38 & 0.13 & 0.40 & 0.53 & 0.77 & 1.03 & 1.15 & 1.38 & 2.30 & 3.68 \\
S1=[32] & 3 & ResNet+ & resnetps1d3 & 0.36 & 0.75 & 1.85 & 3.33 & 7.81 & 246.43 & -1.32 & -0.27 & 0.31 & 0.53 & 0.61 & 0.80 & 0.98 & 1.03 & 1.29 & 1.94 & 3.26 \\
S1=[32] & 3 & NN+ & nnps1d3 & 0.32 & 0.85 & 1.98 & 3.37 & 7.44 & 265.04 & -1.42 & -0.33 & 0.11 & 0.41 & 0.53 & 0.78 & 1.01 & 1.14 & 1.38 & 2.16 & 3.58 \\
S1=[32] & 4 & ResNet+ & resnetps1d4 & 0.34 & 0.74 & 1.92 & 3.31 & 7.88 & 243.62 & -1.27 & -0.22 & 0.28 & 0.47 & 0.69 & 0.77 & 0.97 & 1.03 & 1.28 & 2.04 & 3.31 \\
S1=[32] & 4 & NN+ & nnps1d4 & 0.34 & 0.95 & 2.15 & 3.55 & 7.96 & 265.30 & -1.47 & -0.40 & 0.11 & 0.44 & 0.51 & 0.76 & 0.90 & 1.26 & 1.40 & 2.32 & 3.78 \\
\end{longtable}
\endgroup

%% file: ResNetVsNNDepthProfile.tex
\begingroup
\setlength{\LTleft}{0pt}
\setlength{\LTright}{\fill}
\setstretch{1.0}
\fontsize{6.1}{6.7}\selectfont
\setlength{\tabcolsep}{0pt}
\renewcommand{\arraystretch}{1.02}
\captionsetup{skip=4pt,justification=raggedright,singlelinecheck=false,labelsep=newline,labelfont=bf,textfont=bf}
\begin{longtable}{@{\hspace{24pt}}>{\raggedright\arraybackslash}p{55pt}@{\hspace{2pt}}>{\raggedleft\arraybackslash}p{25pt}@{\hspace{2pt}}>{\raggedright\arraybackslash}p{34pt}@{\hspace{2pt}}>{\raggedright\arraybackslash}p{57pt}@{\hspace{2pt}}>{\raggedleft\arraybackslash}p{29pt}@{\hspace{2pt}}>{\raggedleft\arraybackslash}p{25pt}@{\hspace{2pt}}>{\raggedleft\arraybackslash}p{30pt}@{\hspace{2pt}}>{\raggedleft\arraybackslash}p{29pt}@{\hspace{2pt}}>{\raggedleft\arraybackslash}p{25pt}@{\hspace{2pt}}>{\raggedleft\arraybackslash}p{39pt}@{\hspace{2pt}}>{\raggedleft\arraybackslash}p{21.3pt}@{\hspace{2pt}}>{\raggedleft\arraybackslash}p{21.3pt}@{\hspace{2pt}}>{\raggedleft\arraybackslash}p{21.3pt}@{\hspace{2pt}}>{\raggedleft\arraybackslash}p{21.3pt}@{\hspace{2pt}}>{\raggedleft\arraybackslash}p{21.3pt}@{\hspace{2pt}}>{\raggedleft\arraybackslash}p{21.3pt}@{\hspace{2pt}}>{\raggedleft\arraybackslash}p{21.3pt}@{\hspace{2pt}}>{\raggedleft\arraybackslash}p{21.3pt}@{\hspace{2pt}}>{\raggedleft\arraybackslash}p{21.3pt}@{\hspace{2pt}}>{\raggedleft\arraybackslash}p{21.3pt}@{\hspace{2pt}}>{\raggedleft\arraybackslash}p{21.3pt}@{}}
\caption{\hspace*{24pt}\textbf{Full ResNet and NN Value-Weighted Portfolio Sorts by Width and Depth}}\label{tab:ResNetVsNNDepthProfile}\\
\toprule
\textbf{Width} & \textbf{Depth} & \textbf{Model} & \textbf{Spec} & \textbf{$R^2_{\mathrm{OOS}}$} & \textbf{Scale} & \textbf{VW~SR} & \textbf{FF5~$\alpha$} & \textbf{$t(\alpha)$} & \textbf{Turnover} & \textbf{D1} & \textbf{D2} & \textbf{D3} & \textbf{D4} & \textbf{D5} & \textbf{D6} & \textbf{D7} & \textbf{D8} & \textbf{D9} & \textbf{D10} & \textbf{H$-$L} \\
\midrule
\endfirsthead
\caption[]{\hspace*{24pt}\textbf{Full ResNet and NN Value-Weighted Portfolio Sorts by Width and Depth (continued)}}\\
\toprule
\textbf{Width} & \textbf{Depth} & \textbf{Model} & \textbf{Spec} & \textbf{$R^2_{\mathrm{OOS}}$} & \textbf{Scale} & \textbf{VW~SR} & \textbf{FF5~$\alpha$} & \textbf{$t(\alpha)$} & \textbf{Turnover} & \textbf{D1} & \textbf{D2} & \textbf{D3} & \textbf{D4} & \textbf{D5} & \textbf{D6} & \textbf{D7} & \textbf{D8} & \textbf{D9} & \textbf{D10} & \textbf{H$-$L} \\
\midrule
\endhead
\midrule
\multicolumn{21}{r}{Continued on next page}\\
\endfoot
\bottomrule
\addlinespace[4pt]
\multicolumn{21}{@{\hspace{24pt}}p{\dimexpr\linewidth-24pt\relax}@{}}{\scriptsize The table reports all available full-universe model, width, and depth specifications after excluding constant forecasts.
Width gives the width schedule, Depth counts hidden layers, and Spec is the model, width schedule, and depth label.
$R^2_{\mathrm{OOS}}$ is reported in percent and follows the calibrated out-of-sample $R^2$ definition in Table~\ref{tab:FinalBaselinePerformance}: monthly cross-sectional demeaning, 2.5\% forecast trimming from each tail, and nonnegative scalar rescaling before computing out-of-sample $R^2$.
Scale is the fitted nonnegative rescaling coefficient.
VW SR is the annualized Sharpe ratio of the value-weighted long-short portfolio.
Fama\mbox{-}French five-factor $\alpha$, turnover, D1 through D10, and H$-$L are monthly percentages; $t(\alpha)$ uses Newey\mbox{-}West standard errors with 12 lags.
D1 and D10 are the lowest and highest forecast deciles, and H$-$L is D10 minus D1.}\\
\endlastfoot
S1=[32] & 1 & ResNet & resnets1d1 & 0.21 & 0.47 & 0.88 & 1.96 & 6.94 & 211.01 & -0.49 & 0.10 & 0.49 & 0.47 & 0.68 & 0.69 & 0.79 & 0.87 & 0.94 & 1.26 & 1.75 \\
S1=[32] & 1 & NN & nns1d1 & 0.21 & 0.47 & 0.88 & 1.96 & 6.94 & 211.01 & -0.49 & 0.10 & 0.49 & 0.47 & 0.68 & 0.69 & 0.79 & 0.87 & 0.94 & 1.26 & 1.75 \\
S1=[32] & 2 & ResNet & resnets1d2 & 0.21 & 0.47 & 0.87 & 1.93 & 6.83 & 210.68 & -0.45 & 0.10 & 0.51 & 0.51 & 0.69 & 0.70 & 0.77 & 0.89 & 0.93 & 1.27 & 1.71 \\
S1=[32] & 2 & NN & nns1d2 & 0.21 & 0.49 & 0.85 & 1.88 & 6.64 & 211.00 & -0.46 & 0.11 & 0.45 & 0.55 & 0.66 & 0.70 & 0.80 & 0.87 & 0.93 & 1.23 & 1.69 \\
S1=[32] & 3 & ResNet & resnets1d3 & 0.21 & 0.48 & 0.88 & 1.97 & 6.55 & 211.61 & -0.47 & 0.11 & 0.50 & 0.45 & 0.72 & 0.67 & 0.78 & 0.88 & 0.95 & 1.28 & 1.75 \\
S1=[32] & 3 & NN & nns1d3 & 0.20 & 0.49 & 0.87 & 1.86 & 6.58 & 215.51 & -0.45 & 0.11 & 0.52 & 0.51 & 0.71 & 0.72 & 0.79 & 0.90 & 0.94 & 1.23 & 1.68 \\
S1=[32] & 4 & ResNet & resnets1d4 & 0.21 & 0.48 & 0.88 & 1.98 & 6.85 & 210.60 & -0.51 & 0.13 & 0.47 & 0.51 & 0.70 & 0.68 & 0.79 & 0.87 & 0.92 & 1.25 & 1.75 \\
S1=[32] & 4 & NN & nns1d4 & 0.18 & 0.54 & 0.89 & 1.88 & 7.07 & 220.04 & -0.35 & 0.15 & 0.54 & 0.51 & 0.71 & 0.74 & 0.83 & 0.93 & 0.91 & 1.30 & 1.64 \\
\end{longtable}
\endgroup